\newcommand{\gcdp}{\text{gc}\text{d}_p}
\theoremstyle{plain}
\newtheorem{theorem}{Theorem}
\theoremstyle{definition}
\newtheorem{lemma}[theorem]{Lemma}
\newtheorem{definition}[theorem]{Definition}
\newtheorem{remark}[theorem]{Remark}
\newtheorem{notation}[theorem]{Notation}
\newtheorem{corollary}[theorem]{Corollary}
\begin{document}

\title{The Complexity of Computing all Subfields of an Algebraic Number Field}

\author{Jonas Szutkoski\footnote{This work is part of the doctoral studies of Jonas Szutkoski, who acknowledges the support of CAPES, Grant 8804/14-1 - Brazil.} }
\affil{Universidade Federal do Rio Grande do Sul\\ Porto Alegre, RS 91509-900, Brazil\\ jonas.szutkoski@ufrgs.br}

\author{Mark van Hoeij\footnote{Supported by NSF Grants 1319547 and 1618657.}}
\affil{Florida State University\\ Tallahassee, FL 32306, USA\\ hoeij@math.fsu.edu}

\maketitle

\begin{abstract}
For a finite separable field extension $K/k$, all subfields can be obtained by intersecting so-called \textit{principal subfields} of $K/k$. In this work we present a way to quickly compute these intersections. If the number of subfields is high, then this leads to faster run times and an improved complexity. 
\end{abstract}

\section{Introduction}
Let $k$ be a field and let $K=k(\alpha)$ be a separable field extension of degree $n$. Let $f\in k[x]$ be the minimal polynomial of $\alpha$ over $k$. There are several methods to compute the subfields of a field extension. Many of these methods take advantage of the connection between subfields of a field extension and subgroups of the Galois group $\text{Gal}(f)$, see \cite{Dixon}, \cite{Hulpke} and \cite{Kluners}. Other methods involve resolvents, such as \cite{Lazard}, and symmetric functions, \cite{Casp}. The \texttt{POLRED} algorithm \cite{Cohen} may also find subfields, but it is not guaranteed.

According to \cite{HKN}, there exists a set $\{L_1,\ldots, L_r \}$ of so-called principal subfields, with $r\leq n$, such that every subfield of $K/k$ is the intersection of a subset of $\{L_1,\ldots, L_r\}.$
Thus, computing all subfields can be done in two phases

\begin{description}

\item \textbf{Phase I}: Compute $L_1,\ldots, L_r$.

\item \textbf{Phase II}: Compute all subfields by computing intersections of $L_1,\ldots, L_r$.
\end{description}

Principal subfields can be computed (see \cite{HKN}) by factoring polynomials over $K$ and solving linear equations. If $k=\mathbb{Q}$, one can also use a $p$-adic factorization and LLL (see Subsection~\ref{ssec:compare} for a comparison between these two approaches).

Phase I usually dominates the CPU time. However, in the theoretical complexity, the reverse is true: for $k=\mathbb{Q}$, Phase~I is polynomial time but Phase~II depends on the number of subfields, which is not polynomially bounded.

The goal of this paper is to speed up Phase~II. This improves the complexity (Theorem~\ref{theo:principal}) as well as practical performance, although the improvement is only significant when the number of subfields is large (see Section~\ref{sec:6}).

This paper is organized as follows. Section~\ref{sec:2} associates to each subfield $L$ a partition $P_L$ of $\{1,\ldots, r\}$. Subfields can be intersected
efficiently when represented by partitions (Section~\ref{ssec:part}).
Section~\ref{sec:3} shows how to compute the partition for principal subfields. Section~\ref{sec:4} presents a general algorithm for computing all subfields of a finite separable field extension. We give another algorithm for the case $k=\mathbb{Q}$ in Section~\ref{sec:5} and compare algorithms in Section~\ref{sec:6}.

To analyse the complexity we will often use the \textit{soft}-$\mathcal{O}$ notation $\tilde{\mathcal{O}}$, which ignores logarithmic factors. We also make the following assumptions: 
\begin{itemize}
\item[1)] Given polynomials $f$, $g$ of degrees at most $n$, we can compute $fg$ with at most $\tilde{\mathcal{O}}(n)$ field operations (\cite{MCA}, Theorem~8.23).
\item[2)] If $K=\mathbb{Q}[\alpha]$ is an algebraic number field of degree $n$, field operations in $K$ can be computed with $\tilde{\mathcal{O}}(n)$ field operations in $\mathbb{Q}$ (\cite{MCA}, Corollary~11.8).
\item[3)] Let $2 \leq \omega \leq 3$ denote a \textit{feasible matrix multiplication} exponent ($\omega \leq 2.3728$, see \cite{LeGall}).
For a linear system over $K$ with $m$ equations and $r\leq m$ unknowns, we assume one can compute a basis of solutions with $\mathcal{O}(mr^{\omega-1})$ operations in $K$ (\cite{BP94}, Chapter~2).
\end{itemize}

\subsection{Notations}
Throughout this paper $K/k$ will be a finite separable field extension with primitive element $\alpha$ and $f\in k[x]$ will denote the minimal polynomial of $\alpha$ over $k$. The degree of the extension $K/k$ will be denoted by $n$.

Let $\hat{K}$ be an extension of $K$ and let $f = \hat{f_1} \cdots \hat{f}_{\hat{r}} \in \hat{K}[x]$ be the factorization of $f$ in $\hat{K}[x]$. Let $g\in \hat{K}[x]$ with $g\mid f$. Since $K/k$ is separable, $g$ is separable as well. The following set is a subfield of $K$ (see \cite[Section~2]{HKN})
\begin{equation}\label{eq:Lg}
L_g := \left\{ h(\alpha) \ : \ h(x)\in k[x]_{<n} {\rm \ \,with \ \,} h(x)\equiv h(\alpha)\bmod g\right\}
\end{equation}
where $k[x]_{<n}$ is the set of polynomials over $k$ with degree at most $n-1$.

\begin{remark} \label{remark1} 
If $g=g_1 g_2\mid f$ then $L_g = L_{g_1} \cap L_{g_2}$ (Chinese Remainder Theorem).
\end{remark}

\begin{definition}
If $g$ is irreducible in $\hat{K}[x]$, then $L_g$ is called a \emph{principal subfield}.
The set $\{ L_{\hat{f_1}}, \ldots, L_{\hat{f}_{\hat{r}}} \}$ is independent (see \cite{HKN}) of the choice of $\hat{K}\supseteq K$ and is called the set of \emph{principal subfields} of $K/k$. 
\end{definition}

\begin{theorem}[\cite{HKN}, Theorem~1]\label{theo:principal1}
Let $L$ be a subfield of $K/k$. Then there exists a set $I\subseteq \{1,\ldots,\hat{r} \}$ such that \[ L =\bigcap_{i \in I} L_{\hat{f}_i}. \]
\end{theorem}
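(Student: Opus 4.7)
The plan is to exhibit $I$ by taking $g$ to be the minimal polynomial of $\alpha$ over $L$. Since the set of principal subfields does not depend on the choice of $\hat{K}$, I would first assume without loss of generality that $\hat{K}$ is a splitting field of $f$ over $k$. Then $\hat{K}/k$ is Galois and, since $k\subseteq L \subseteq K\subseteq \hat{K}$, so is $\hat{K}/L$; set $H_L:=\text{Gal}(\hat{K}/L)$, so that $L = \hat{K}^{H_L}$.

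Next, let $m_L \in L[x]$ be the minimal polynomial of $\alpha$ over $L$. Because $m_L$ divides $f$ in $L[x]$, it divides $f$ in $\hat{K}[x]$, so by unique factorization $m_L = \prod_{i\in I}\hat{f}_i$ for some $I\subseteq\{1,\ldots,\hat{r}\}$ (after normalizing the $\hat{f}_i$ monic). Iterating Remark~\ref{remark1} then gives $L_{m_L}=\bigcap_{i\in I}L_{\hat{f}_i}$, so the theorem reduces to proving the equality $L=L_{m_L}$.

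For $L\subseteq L_{m_L}$: given $h(\alpha)\in L$, the polynomial $h(x)-h(\alpha)\in L[x]$ has $\alpha$ as a root, hence is divisible by $m_L$ in $L[x]$ and therefore in $\hat{K}[x]$, which is exactly the membership condition in (\ref{eq:Lg}). For the reverse inclusion, suppose $h\in k[x]_{<n}$ satisfies $h(x)\equiv h(\alpha)\bmod m_L$ in $\hat{K}[x]$. The roots of $m_L$ in $\hat{K}$ are precisely the $L$-conjugates of $\alpha$, i.e.\ $\{\sigma(\alpha):\sigma\in H_L\}$, and $m_L$ is separable because $f$ is; so the divisibility is equivalent to $h(\sigma(\alpha))=h(\alpha)$ for every $\sigma\in H_L$. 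Since $h$ has coefficients in $k\subseteq L$, this rewrites as $\sigma(h(\alpha))=h(\alpha)$ for all $\sigma\in H_L$, i.e.\ $h(\alpha)\in \hat{K}^{H_L}=L$.

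The main obstacle I expect is Step~3: translating the purely polynomial congruence $h(x)\equiv h(\alpha)\pmod{m_L}$ into a Galois-theoretic statement about fixed fields. The key is to exploit separability (so that divisibility is equivalent to vanishing at all roots) together with the fact that the coefficients of $h$ lie in the base field $k$; these together let one swap the Galois action past $h$. Everything else is essentially bookkeeping via Remark~\ref{remark1} and unique factorization in $\hat{K}[x]$.
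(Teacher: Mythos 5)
Your proof is correct. Note that the paper offers no proof of this theorem at all --- it simply cites \cite{HKN} --- so there is no in-paper argument to compare against; your write-up supplies the missing argument, and its skeleton is the standard one underlying that reference: take $m_L$ the minimal polynomial of $\alpha$ over $L$, write $m_L=\prod_{i\in I}\hat f_i$ by separability and unique factorization, reduce to the single identity $L=L_{m_L}$ via Remark~\ref{remark1}, and prove that identity; the identity $L=L_{m_L}$ is precisely Lemma~\ref{lemma:degree}(ii) of Appendix~\ref{App:A}, which the paper likewise defers to \cite{HKN}, and your Galois-theoretic proof of it (separability turns the congruence modulo $m_L$ into vanishing at all $L$-conjugates of $\alpha$, and $h\in k[x]$ lets you commute $h$ past $\sigma\in\mathrm{Gal}(\hat K/L)$) is sound. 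One remark: your reduction ``without loss of generality $\hat K$ is a splitting field'' leans on the asserted, but in this paper unproved, independence of the set of principal subfields from the choice of $\hat K$. This crutch is avoidable: the factorization of $m_L$ into the given $\hat f_i$ and the Chinese Remainder step work verbatim for an arbitrary $\hat K$, and the remaining claim $L=L_{m_L}$ concerns only divisibility of polynomials of $K[x]$ by the monic $m_L\in K[x]$, which is insensitive to the ambient field; so you may check it inside a splitting field of $f$ without invoking independence, making the proof self-contained for every $\hat K$.
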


\begin{theorem}\label{theo:equiv}
Let $g\in K[x]$ monic be such that $x-\alpha \mid g \mid f$. The following are equivalent
\begin{enumerate}
\item[$(1)$] $g$ is the minimal polynomial of $\alpha$ over $L$, for some subfield $L$ of $K/k$.
\item[$(2)$] $deg(g)\cdot [k(\text{coeffs}(g)) : k] \leq n$.
\item[$(3)$] $deg(g)\cdot [L_g : k] = n$.
\item[$(4)$] The coefficients of $g$ generate $L_g$ over $k$.
\item[$(5)$] $g\in L_g[x]$.
\item[$(6)$] $g=\gcd(f,h(x)-h(\alpha))$, for some $h(x)\in k[x]_{<n}$.
\end{enumerate}
\end{theorem}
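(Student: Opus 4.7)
The plan is to isolate a single engine lemma and recover all pairwise equivalences from it. The engine lemma reads: for $g$ as in the hypothesis one has the unconditional inequality $\deg(g) \cdot [L_g : k] \leq n$, with equality if and only if $g = \min(\alpha, L_g)$.

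To prove the engine lemma, observe that every root $\alpha'$ of $g$ in $\bar{K}$ is also a root of $f$ (since $g \mid f$), so it yields a $k$-embedding $\tau : K \to \bar{K}$ with $\tau(\alpha) = \alpha'$. Given $\beta = h(\alpha) \in L_g$ with $h \in k[x]_{<n}$, the defining divisibility $g \mid h(x) - \beta$ in $K[x]$ lets us substitute $x = \alpha'$ to get $h(\alpha') = \beta$, so $\tau(\beta) = h(\alpha') = \beta$. Thus $\tau$ fixes $L_g$ pointwise, so $\alpha'$ is a root of $\tilde g := \min(\alpha, L_g)$. Since $g$ is separable (as a divisor of the separable $f$), the roots of $g$ form a subset of the roots of $\tilde g$, giving $\deg(g) \leq \deg(\tilde g) = [K : L_g]$, with equality iff $g = \tilde g$.

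Granting the engine lemma, (1) $\Rightarrow$ everything goes as follows. Suppose $g = \min(\alpha, L)$ for some subfield $L$. First, $L \subseteq L_g$: any $\beta \in L$ can be written $\beta = h(\alpha)$ with $h \in k[x]_{<n}$, and then $h(x) - \beta \in L[x]$ has $\alpha$ as a root, whence $g \mid h(x) - \beta$ in $L[x] \subseteq K[x]$. Combined with $[L_g : k] \leq n/\deg(g) = [L:k]$ from the engine lemma, this forces $L = L_g$. Now (3) is the equality case, (5) is immediate, (2) is automatic, and (4) follows because $L' := k(\text{coeffs}(g)) \subseteq L$ satisfies $\min(\alpha, L') \mid g$, hence $[K:L'] \leq [K:L]$ and $L' = L$. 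For (6), pick a primitive element $\beta = h(\alpha)$ of $L/k$; then $g$ and $\gcd(f, h(x) - h(\alpha))$ are monic separable polynomials with the same root set $\{\alpha' : f(\alpha') = 0,\ h(\alpha') = h(\alpha)\}$, hence equal.

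For the converses: (3) $\Rightarrow$ (1) is the equality case of the engine lemma; (4) $\Rightarrow$ (5) is trivial; (5) $\Rightarrow$ (1) because $g \in L_g[x]$ gives $\tilde g \mid g$, which together with $\deg(g) \leq \deg(\tilde g)$ forces $g = \tilde g$; (2) $\Rightarrow$ (1) by setting $L = k(\text{coeffs}(g))$, so $\min(\alpha, L) \mid g$ gives $n \leq \deg(g) \cdot [L:k] \leq n$, and the sandwich forces $g = \min(\alpha, L)$; (6) $\Rightarrow$ (1) by setting $L = k(h(\alpha))$, noting the Euclidean algorithm places the gcd in $L[x]$, and matching root sets identifies $g$ with $\min(\alpha, L)$. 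The main obstacle is the engine lemma itself, specifically recognizing that each root of $g$ yields a $k$-embedding fixing $L_g$ pointwise; everything else then reduces to standard minimal-polynomial divisibility and degree-counting arguments.
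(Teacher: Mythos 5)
Your proposal is correct, and every step I checked goes through (including the subtle points: evaluating the divisibility $g \mid h(x)-h(\alpha)$ at a root $\alpha'$ of $g$, the separability of both $g$ and $\tilde g$ as divisors of $f$, and the existence of a primitive element for $L$ in the (1)$\Rightarrow$(6) step). The underlying mathematics overlaps heavily with the paper's Appendix~A, but the organization and the proof of the key lemma differ. The paper proves three lemmas --- that a minimal polynomial $g$ of $\alpha$ over $L$ satisfies $L = L_g$ (citing \cite{HKN} for this), that $\gcd(f, h(x)-h(\alpha))$ is the minimal polynomial of $\alpha$ over $k(h(\alpha))$ (by the same root-counting you use), and that $g \mid \min(\alpha, L_g)$ --- and then runs a cycle $(1)\Rightarrow(2)\Rightarrow(3)\Rightarrow(4)\Rightarrow(5)\Rightarrow(1)$ plus $(1)\Leftrightarrow(6)$. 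Crucially, the paper obtains the divisibility $g \mid \min(\alpha, L_g)$ by first choosing a primitive element $h(\alpha)$ of $L_g$ and invoking the gcd lemma, whereas you obtain the equivalent root-containment directly from the observation that each root of $g$ induces a $k$-embedding fixing $L_g$ pointwise; your route is more self-contained (it does not defer to \cite{HKN} for $L = L_{\min(\alpha,L)}$, and it isolates the embedding argument that is really driving everything) at the cost of a hub-and-spoke structure through $(1)$ that repeats some degree-sandwich bookkeeping the paper's cyclic chain avoids. Both are complete proofs; yours makes the single inequality $\deg(g)\cdot[L_g:k] \le n$ and its equality case the visible organizing principle, which is a genuinely cleaner way to see why conditions $(2)$--$(5)$ are all rigidity statements.
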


\begin{proof}
See Appendix A.
\end{proof}

\begin{definition}
If any of the conditions in Theorem~\ref{theo:equiv} holds, then $g$ is called a \emph{subfield polynomial}. Furthermore, we call $g$ the \emph{subfield polynomial} of the field $L$ in $(1)$, which coincides with $k(\text{coeffs}(g))$ in $(2)$, $L_g$ in $(3)$, $(4)$, $(5)$ and $k(h(\alpha))$ in $(6).$
\end{definition}

The subfield polynomial of $K$ is $x-\alpha$ and the subfield polynomial of $k$ is $f$.

\begin{remark} \label{RemarkKhat} Choosing $\hat{K}$. For $k = \mathbb{Q}$, \cite{HKN} gives two methods for Phase~I.
\begin{enumerate}
\item Take $\hat{K}=K$, i.e, factor $f$ over $K$. Next, \cite{HKN} computes principal subfields with $k$-linear algebra (which we replace with Section~\ref{sec:3} in this paper).
\item Take $\hat{K}=\mathbb{Q}_p$ for suitable $p$. This method avoids factoring $f$ over a number field. Instead, it factors $f$ over the $p$-adic numbers. The principal subfield for each $p$-adic factor is then computed with LLL.
\end{enumerate}
We developed both methods to find out which one works best when adjusted to our new approach for Phase~II.
Based on timings in Magma \cite{magma} (Section~\ref{sec:6}) and the fact that the LLL 
bound in \cite{HKN} is practically optimal, we expected method~2 to be faster,
until we tried factoring $f$ over $K$ with Belabas' algorithm \cite{belabas} in
{PARI/GP} \cite{pari}
(Method~2 and Belabas' factoring both use LLL, but method~2 does this for each $p$-adic factor separately, introducing a factor $\hat{r}$.)

Without additional results (can LLL-work for $p$-adic factors in method~2 be shared?) method~1 with
{PARI/GP}
gives the fastest CPU timings for Phase~I.
However, estimating the complexity (Theorem~\ref{ComplexityMethod2}) is easier for method~2.
\end{remark}

\section{Subfields and Partitions}\label{sec:2}
To illustrate the goal of this Section we start with an example. Let $k = \mathbb{Q}$, $f=x^6-2$, $\alpha$ is a root of $f$, and $K = \mathbb{Q}(\alpha)$. Factor $f$ over $K$:
\begin{equation} \label{eq:ex}
f = f_1 f_2 f_3 f_4 = (x-\alpha) (x + \alpha) (x^2 - \alpha x + \alpha^2) (x^2 + \alpha x + \alpha^2). \end{equation}
With Equation~(\ref{eq:Lg}) we can show that $L_2:=L_{f_2}=\mathbb{Q}(\alpha^2)$.
Over this subfield, the irreducible factors of $f$ are \[g_1=f_1 f_2 = x^2-\alpha^2 \;\;\text{ and }\;\; g_2= f_3 f_4 = x^4 + \alpha^2 x^2 + \alpha^4.\]
This factorization can be encoded with a partition $P_2 := \{\{1,2\},\{3,4\}\}$. Its first part $\{1,2\}$ encodes the subfield polynomial of $L_2$,
and hence encodes $L_2$ by Theorem~\ref{theo:equiv} part~(4).
A subfield $L$ of $K/k$ is usually represented with a basis (as $k$-vector space), or with generator(s).
This Section will represent subfields with partitions instead. The benefit to Phase~II is shown in Theorem~\ref{theo:intersect}.

 

\subsection{From a Subfield to a Partition}\label{ssec:sp}
Let $f=f_1\cdots f_r$ be a partial\footnote{If we only use method~1 from Remark~\ref{RemarkKhat}, then parts of Section~\ref{sec:2} can be shortened
by only considering a full factorization of $f$ (irreducible $f_i$).}
factorization of $f$ over $K$ ($f_i$ not necessarily irreducible). In this Section we define a partition $P_L$ of $[r]:=\{1,\ldots, r\}$ for a given subfield $L$ of $K/k$. Recall that a partition $P=\{P^{(1)}, \ldots, P^{(t)}\}$ of $[r]$ satisfies 
\begin{enumerate}
\item $\bigcup P^{(i)}=[r]$.
\item $P^{(i)}\neq \emptyset,\; 1\leq i \leq t$.
\item $P^{(i)} \bigcap P^{(j)} = \emptyset$, for every $i\neq j$.
\end{enumerate}

\begin{notation}\label{not:Pi} The number of parts in a partition $P$ is denoted by $|P|$. We number them in such a way that $1 \in P^{(1)}$ and $ \min([r] - P^{(1)} \cup \cdots \cup P^{(j)}) \in P^{(j+1)}$.
\end{notation}

\begin{definition}
Let $P=\{P^{(1)},\ldots, P^{(t)}\}$ be a partition of $[r].$ We call \emph{$P$-products} (with respect to the factorization $f_1\cdots f_r$ of $f$) the polynomials defined by $\prod_{i\in P^{(j)}} f_i$, for each $j=1,\ldots, t$.
\end{definition}

\begin{definition}\label{def:subfieldpart}
For every subfield $L$ of $K/k$, let $P_L$ be a partition of $[r]$ satisfying
\begin{enumerate}
\item The $P_L$-products are in $L[x]$.\label{item1}
\item $|P_L|$ is maximal satisfying \ref{item1}.
\end{enumerate}
\end{definition}

If $f_1,\ldots, f_r$ are the irreducible factors of $f$ over $K$, then the $P_L$-products are the irreducible factors of $f$ over $L$. After the next lemma we show
that $P_L$ is well-defined. Thus, we may say that $P_L$ is \textit{the} partition of $L$. 

\begin{notation}\label{not:pi}
Let $\{f_1,\ldots,f_r\}^\pi $ denote the set $\{\prod f_i^{e_i} : e_i\in\{0,1\}\}$.
\end{notation}

\begin{lemma}\label{lemma:factor}
Let $L$ be a subfield of $K/k$. Let $P$ be a partition of $[r]$ and let $F_1,\ldots, F_t$ be the $P$-products. If $P$ satisfies Definition~\ref{def:subfieldpart}, then
\begin{equation}\label{eq:partition}
 \{f_1,\ldots, f_r\}^\pi\bigcap L[x] = \{F_1,\ldots, F_t\}^\pi.
 \end{equation}
\end{lemma}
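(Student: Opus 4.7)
The plan is to prove both inclusions separately, with the forward inclusion being straightforward and the reverse inclusion requiring the maximality clause of Definition~\ref{def:subfieldpart}.

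The inclusion $\{F_1,\ldots,F_t\}^\pi \subseteq \{f_1,\ldots,f_r\}^\pi \cap L[x]$ is immediate: since the parts $P^{(j)}$ are pairwise disjoint, every product $\prod F_j^{e_j}$ with $e_j\in\{0,1\}$ equals $\prod f_i^{d_i}$ with $d_i\in\{0,1\}$, and it lies in $L[x]$ because each $F_j\in L[x]$ by hypothesis and $L[x]$ is a ring.

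For the reverse inclusion, suppose $g=\prod_{i\in S} f_i\in L[x]$ for some $S\subseteq[r]$. I would show that $S$ must be a union of parts of $P$, using a refinement argument against the maximality of $|P|$. Define $P'$ to be the partition of $[r]$ obtained by replacing each $P^{(j)}$ with the two (possibly empty, in which case discarded) sets $P^{(j)}\cap S$ and $P^{(j)}\setminus S$. The key calculation is that the corresponding $P'$-products are precisely $\gcd(F_j,g)$ and $F_j/\gcd(F_j,g)$: indeed, since $f$ is separable the factors $f_1,\ldots,f_r$ are pairwise coprime in $K[x]$, so $\gcd(F_j,g) = \prod_{i\in P^{(j)}\cap S} f_i$. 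Because $F_j,g\in L[x]$ and monic gcds of elements of $L[x]$ computed in $K[x]$ already lie in $L[x]$, the $P'$-products all belong to $L[x]$. By the maximality of $|P|$, the refinement $P'$ cannot have strictly more parts than $P$, so for every $j$ either $P^{(j)}\cap S=\emptyset$ or $P^{(j)}\setminus S=\emptyset$. Hence $S=\bigcup_{j\in J}P^{(j)}$ for some $J$, and $g=\prod_{j\in J}F_j\in\{F_1,\ldots,F_t\}^\pi$.

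The main obstacle, modest as it is, is the gcd step: one needs both that the pairwise coprimality of the $f_i$ (a consequence of separability of $f$) makes $\gcd(F_j,g)$ equal the expected partial product, and that taking monic gcds does not leave $L[x]$. Once these are in place, the refinement contradicts maximality unless $S$ respects the partition, and the lemma follows.
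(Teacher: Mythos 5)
Your proof is correct and follows essentially the same route as the paper: the forward inclusion is immediate from Definition~\ref{def:subfieldpart}(1), and the reverse inclusion uses the fact that $\gcd(F_j,g)\in L[x]$ together with a splitting of $P^{(j)}$ into $P^{(j)}\cap S$ and $P^{(j)}\setminus S$ that would contradict the maximality of $|P|$. You spell out the coprimality and gcd-stays-in-$L[x]$ details a bit more explicitly than the paper does, but the argument is the same.
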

\begin{proof} Since $P$ satisfies Definition~\ref{def:subfieldpart}, it follows that $F_i\in L[x]$ and hence, $\{F_1,\ldots, F_t\}^\pi\subseteq \{f_1,\ldots, f_r\}^\pi\bigcap L[x]$.
Now, let $F\in\{f_1,\ldots,f_r\}^\pi\bigcap L[x].$ Then $\gcd(F,F_i)\in L[x]$, for every $i=1,\ldots, t$. Furthermore, $\gcd(F,F_i)\in \{1,F_i\}$ (otherwise, we could replace $P^{(i)}$ in $P$ by two non-empty sets, which contradicts the maximality of $|P|$). Therefore, $F\in \{F_1,\ldots, F_t\}^\pi$ and Equation~(\ref{eq:partition}) follows. 
\end{proof}

If $P'_L$ also satisfies Definition~\ref{def:subfieldpart}, then clearly $|P_L|=|P'_L|$. Moreover, if $F'_1,\ldots, F'_t$ are the $P'_L$-products, then $\{F'_1,\ldots, F'_t\}^\pi=\{F_1,\ldots, F_t\}^\pi$, by Lemma~\ref{lemma:factor}. In particular, $F'_i\in \{F_1,\ldots, F_t\}^\pi$, $i=1,\ldots, t$. W.l.o.g., suppose that $F_i\mid F'_i$. If $F'_i \neq F_i$, then we can use the same argument as in the proof of Lemma~\ref{lemma:factor} to create a partition $P$ satisfying Definition~\ref{def:subfieldpart} (1) and such that $|P|> |P'_L|$, which contradicts the maximality of $|P'_L|$. Hence, $F_i'=F_i$ and since $f$ is separable, we have $P_L^{(i)}={P'}_{L}^{(i)}, i=1,\ldots,t$. Thus, $P_L=P_L'$.




%

\subsection{Benefits of representing Subfields with Partitions}\label{ssec:sf}

Let $f=f_1\cdots f_r$ be a partial factorization of $f$. In Section~\ref{ssec:sp} we showed that every subfield $L$ of $K/k$ defines a partition $P_L$ of $[r]$. However, not every partial factorization of $f$ has the property that different subfields define different partitions. In what follows we give a condition on $f_1,\ldots, f_r$ for which $L\neq L'$ implies $P_L\neq P_{L'}$.

\begin{definition}\label{def:subfact}
Let $f_1,\ldots, f_r \in K[x]$ be a partial factorization of $f$. We say that $f_1,\ldots, f_r$ is a \emph{subfield factorization} of $f$ if $f_1=x-\alpha$ and $\{f_1,\ldots, f_r\}^\pi$ contains the subfield polynomial of every principal subfield of $K/k$.
\end{definition}

The full factorization of $f$ into irreducible factors over $K$ is always a subfield factorization of $f$, but the converse need not be true. For example, if $K/k$ has no nontrivial subfields, then $\{x-\alpha, f/(x-\alpha)\}$ is a subfield factorization of $f$, even if $f/(x-\alpha)$ is reducible.

\begin{remark}
If $f_1,\ldots, f_r$ is a subfield factorization of $f$, then $\{L_{f_1}, \ldots, L_{f_r}\}$ is the set of principal subfields of $K/k$.
\end{remark}

From now on we will assume that $f_1,\ldots, f_r$ is a subfield factorization of $f$. This allows us to prove that $P_L=P_{L'}$ if and only if $L=L'$ (see Theorem~\ref{lemma:subfieldpoly}).

\begin{lemma}\label{lemma:SubPoly}
Let $f_1, \ldots, f_r$ be a subfield factorization of $f$. If $g$ is a subfield polynomial, then $g\in \{f_1,\ldots, f_r\}^\pi$.
\end{lemma}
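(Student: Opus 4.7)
The plan is to reduce the claim to a combinatorial statement about the $K$-irreducible factors of $f$. Write $f=\bar{f}_1\cdots\bar{f}_s$ for the factorization of $f$ into irreducibles over $K$, with $\bar{f}_1=x-\alpha$. Each $f_j$ divides $f$, so $f_j=\prod_{\ell\in A_j}\bar{f}_\ell$ for some partition $A_1,\ldots,A_r$ of $\{1,\ldots,s\}$. Because $f$ is separable, the $\bar{f}_\ell$ are pairwise coprime, so a divisor $\prod_{\ell\in B}\bar{f}_\ell$ of $f$ lies in $\{f_1,\ldots,f_r\}^\pi$ if and only if $B$ is a union of $A$-blocks. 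Writing the subfield polynomial as $g=\prod_{\ell\in B_L}\bar{f}_\ell$, with $L$ the corresponding subfield, the lemma becomes: $B_L$ is a union of $A$-blocks.

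To describe $B_L$ cleanly I would invoke Theorem~\ref{theo:equiv}(6): $g=\gcd(f,h(x)-h(\alpha))$ for some $h\in k[x]_{<n}$, and $L=k(h(\alpha))$. Distributing the gcd across the pairwise coprime $\bar{f}_\ell$ and using $K$-irreducibility, each factor $\gcd(\bar{f}_\ell,h(x)-h(\alpha))$ is either $1$ or $\bar{f}_\ell$. The condition $\bar{f}_\ell\mid h(x)-h(\alpha)$ is exactly the defining condition for $h(\alpha)\in L_{\bar{f}_\ell}$ (the uniqueness of a degree-${<}n$ representative of $h(\alpha)$ forces the polynomial $\phi$ in the definition of $L_{\bar{f}_\ell}$ to equal $h$), which since $L=k(h(\alpha))$ is equivalent to $L\subseteq L_{\bar{f}_\ell}$. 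Hence
\[
 B_L \;=\; \{\,\ell : L\subseteq L_{\bar{f}_\ell}\,\}.
\]

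Now the key step. Each $\bar{f}_\ell$ is irreducible over $K$, so each $L_{\bar{f}_\ell}$ is a principal subfield; by hypothesis its subfield polynomial lies in $\{f_1,\ldots,f_r\}^\pi$, i.e.\ $B_{L_{\bar{f}_\ell}}$ is a union of $A$-blocks. Let $\ell,\ell'$ lie in the same $A$-block. Taking $L=L_{\bar{f}_\ell}$ in the formula above, the tautology $L_{\bar{f}_\ell}\subseteq L_{\bar{f}_\ell}$ gives $\ell\in B_{L_{\bar{f}_\ell}}$, and the union-of-$A$-blocks property then forces $\ell'\in B_{L_{\bar{f}_\ell}}$, i.e.\ $L_{\bar{f}_\ell}\subseteq L_{\bar{f}_{\ell'}}$. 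Swapping $\ell$ and $\ell'$ yields $L_{\bar{f}_\ell}=L_{\bar{f}_{\ell'}}$. Consequently, for any subfield $L$ whatsoever, the condition $L\subseteq L_{\bar{f}_\ell}$ is constant on $A$-blocks, so $B_L$ is a union of $A$-blocks and $g\in\{f_1,\ldots,f_r\}^\pi$.

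The main obstacle is that the more natural approach—writing $L=\bigcap_{i\in I}L_{\hat{f}_i}$ via Theorem~\ref{theo:principal1} and guessing that $g$ equals the $\mathrm{lcm}$ of the subfield polynomials $p_i$ of the $L_{\hat{f}_i}$—fails in general: that lcm has strictly fewer roots than $g$ (for instance, if $K/k$ is Galois with group $S_3$ and $L=k$, the lcm has degree $3$ while $g=f$ has degree $6$). The crux is to apply the subfield-factorization hypothesis not to the principals appearing in the decomposition of $L$, but to the principals $L_{\bar{f}_\ell}$ attached to the $K$-irreducible factors themselves; this collapses each $A$-block to a single associated principal subfield, after which the conclusion for every subfield $L$ is immediate.
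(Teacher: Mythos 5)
Your proof is correct. It runs on the same engine as the paper's: both arguments apply the subfield-factorization hypothesis to the principal subfields $L_{\bar f_\ell}$ attached to the $K$-irreducible factors $\bar f_\ell$ of $g$, and both exploit the containment $L\subseteq L_{\bar f_\ell}$ for each $\bar f_\ell\mid g$. The packaging, however, is genuinely different. The paper works purely with divisibility: for each irreducible $h\mid g$ it chains $h\mid\tilde h\mid g$, where $\tilde h$ is the subfield polynomial of $L_h$ (using Lemma~\ref{lemma:apx} and the nesting of minimal polynomials over $L_g\subseteq L_h$), invokes the hypothesis to get $\tilde h\in\{f_1,\ldots,f_r\}^\pi$, and concludes $h\mid\tilde h\mid\prod_{f_i\mid g}f_i$. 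You instead extract from Theorem~\ref{theo:equiv}(6) the index-set formula $B_L=\{\ell:\,L\subseteq L_{\bar f_\ell}\}$ (your appeal to the uniqueness of the degree-$<n$ representative is exactly the right justification for the step $\bar f_\ell\mid h(x)-h(\alpha)\iff h(\alpha)\in L_{\bar f_\ell}$), and then prove the stronger structural statement that $\ell\mapsto L_{\bar f_\ell}$ is constant on the blocks $A_j$. That constancy is more than the lemma strictly needs --- the paper gets by with the weaker observation that for $\ell\in B_L$ the block of $\ell$ sits inside $B_{L_{\bar f_\ell}}\subseteq B_L$ --- but it is a clean invariant of subfield factorizations, it makes the conclusion uniform in $L$, and it is in substance the paper's unproved remark that $\{L_{f_1},\ldots,L_{f_r}\}$ is exactly the set of principal subfields. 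Your closing observation that the naive route via Theorem~\ref{theo:principal1} and an lcm over $i\in I$ fails is also accurate, and correctly identifies why the hypothesis must be applied to every $L_{\bar f_\ell}$ with $\bar f_\ell\mid g$ rather than to a chosen decomposition of $L$ as an intersection of principal subfields.
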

\begin{proof}
Let $g$ be a subfield polynomial and let $\tilde{g}=\prod_{f_i\mid g} f_i.$ Hence, $\tilde{g}\mid g$. We need to prove that $g\mid \tilde{g}.$
Let $h\in K[x]$ be an irreducible polynomial such that $h\mid g$. Let $\tilde{h}$ be the subfield polynomial of $L_h$. Since $h\mid g$, it follows that $L_g\subseteq L_h$ and hence $h\mid \tilde{h}\mid g$ (See Appendix~\ref{App:A}, Lemma~\ref{lemma:apx}). On the other hand, since $h$ is irreducible, $L_h$ is a principal subfield and by Definition~\ref{def:subfact}, $\tilde{h}\in \{f_1,\ldots, f_r\}^\pi$. Therefore, $\tilde{h}\mid \tilde{g}$ and hence $h\mid \tilde{g}.$
\end{proof}

The next theorem shows that every subfield $L$ is uniquely defined by the partition $P_L$, provided $f_1,\ldots, f_r$ is a subfield factorization.

\begin{theorem}\label{lemma:subfieldpoly}
Let $f_1, \ldots, f_r$ be a subfield factorization of $f$. Let $L$ be a subfield of $K/k$ and let $P_L$ be its partition. Then the {\em first $P_L$-product}, i.e., $\prod_{i\in P_L^{(1)}} f_i$, is the subfield polynomial of $L$. In particular, $P_L=P_{L'}$ if and only if, $L=L'$.
\end{theorem}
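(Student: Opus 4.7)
The plan is to identify the first $P_L$-product, call it $g := \prod_{i \in P_L^{(1)}} f_i$, with the minimal polynomial $m$ of $\alpha$ over $L$ (the subfield polynomial of $L$). Since $1 \in P_L^{(1)}$ by Notation~\ref{not:Pi}, the factor $f_1 = x-\alpha$ divides $g$, and $g$ lies in $L[x]$ by Definition~\ref{def:subfieldpart}. So $g(\alpha)=0$ with $g\in L[x]$, which immediately gives $m \mid g$.

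For the reverse divisibility $g \mid m$, I would argue as follows. By Lemma~\ref{lemma:SubPoly}, the subfield polynomial $m$ lies in $\{f_1,\ldots,f_r\}^\pi$. On the other hand, $m\in L[x]$, so applying Lemma~\ref{lemma:factor} to $P_L$ gives
\[
m \in \{f_1,\ldots,f_r\}^\pi \cap L[x] = \{F_1,\ldots,F_t\}^\pi,
\]
where $F_1,\ldots,F_t$ are the $P_L$-products with $F_1 = g$. Thus $m$ is a product of a subset of the $F_j$. Since $f_1 = x-\alpha$ divides $m$ and $f_1$ divides $F_1 = g$, and since the $f_i$ are pairwise coprime (because $f$ is separable and the $f_i$ are pairwise coprime divisors of $f$), the factor $F_1$ must be among the $F_j$ appearing in $m$. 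Hence $g = F_1 \mid m$, and combining with $m\mid g$ and monicity yields $g = m$.

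For the ``in particular'' clause, one direction is immediate: if $L=L'$ then $P_L = P_{L'}$ by the well-definedness established after Lemma~\ref{lemma:factor}. Conversely, if $P_L = P_{L'}$, then their first $P$-products coincide, so by what we just proved the subfield polynomials of $L$ and $L'$ are equal, and Theorem~\ref{theo:equiv}(4) (or equivalently (3)) then forces $L = L'$.

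I do not anticipate a serious obstacle; the only subtle point is step~(4), where one must invoke separability of $f$ to conclude that the unique $F_j$ divisible by $f_1$ is forced into any $\pi$-product representation of $m$ that has $f_1$ as a factor. Everything else is a direct bookkeeping application of Lemmas~\ref{lemma:factor} and~\ref{lemma:SubPoly} together with Theorem~\ref{theo:equiv}.
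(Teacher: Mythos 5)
Your proof is correct and follows essentially the same route as the paper's: both obtain $m \mid g$ from the fact that the first $P_L$-product lies in $L[x]$ and vanishes at $\alpha$, and both rely on Lemma~\ref{lemma:SubPoly} to place the subfield polynomial inside $\{f_1,\ldots,f_r\}^\pi$ for the reverse direction. The only cosmetic difference is that the paper derives $\deg(m)=\deg(g)$ by re-running the partition-splitting/maximality argument directly on $P_L^{(1)}$, whereas you route that same maximality through Lemma~\ref{lemma:factor} and then use separability to track the factor $f_1 = x-\alpha$ and force $F_1 \mid m$; both are sound.
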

\begin{proof}
Let $h$ be the first $P_L$-product and $g$ be the subfield polynomial of $L$. By Definition~\ref{def:subfieldpart} (1), it follows that $h\in L[x]$. Furthermore, since $1\in P_L^{(1)}$, we have $f_1=x-\alpha \mid h$ and hence $g\mid h$, by Theorem~\ref{theo:equiv} (1). By Lemma~\ref{lemma:SubPoly}, we have $g\in \{f_1,\ldots, f_r\}^\pi$. If $\deg(g)< \deg(h)$, then we can replace $P_L^{(1)}$ by two non-empty sets in $P_L$ (one corresponding to the factors of $g$ and the other to $h/g$). This new partition also satisfies Definition~\ref{def:subfieldpart} (1), but contradicts the maximality of $|P_L|$. Thus, if $f_1,\ldots, f_r$ is a subfield factorization, then for every partition $P_L$, the subset $P_L^{(1)}\subseteq \{1,\ldots, r\}$ encodes the subfield polynomial of $L$, which uniquely defines $L$. This means that if $L\neq L'$, then $P_L\neq P_{L'}$. Conversely, if $L=L'$, then $P_L=P_{ L'}$, since $P_L$ is well-defined. \end{proof}

Representing subfields with partitions has many advantages:
\begin{enumerate}

\item Given $P_L$, one can quickly find elements of $L$, for instance, by computing a coefficient of a $P_L$-product, or by computing a $P_L$-product evaluated at $x=c$, for some $c \in k$. Section~\ref{ssec:generators} gives a fast test to see if the obtained elements generate $L$.

\item $P_L^{(1)}$ immediately gives the subfield polynomial in partially factored form.

\item Given $P_{L}$ and $P_{L'}$, it is trivial (see Lemma~\ref{lemma:subset}) to check whether $L \subseteq L'$. Section~\ref{ssec:part} shows that one can quickly compute the partition for $L \bigcap L'$. The degree $[L:k]$ can be read from $P_L^{(1)}$ with Theorem~\ref{theo:equiv} (2).

\item $P_L$ only requires $\mathcal{O}(r\log r)$ bits of storage\footnote{Our implementation represents a partition of $\{1,\ldots,r\}$ with a \textit{partition-vector}  $v = (v_1,\ldots,v_r)$,
where $v_i \in \{1,\ldots,i\}$ is the smallest element of the part that contains $i$. 
}. In particular, we have
\end{enumerate}

\begin{lemma}
When a subfield factorization $f_1, \ldots, f_r$ of $f$ is given, one only needs $\mathcal{O}(mr\log r)$ additional bits to represent the complete subfield lattice of $K/k$, where $m$ is the number of subfields of $K/k$.
\end{lemma}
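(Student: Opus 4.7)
The plan is to combine Theorem~\ref{lemma:subfieldpoly} with the partition-vector storage described in the footnote. By Theorem~\ref{lemma:subfieldpoly}, since $f_1,\ldots,f_r$ is a subfield factorization, the assignment $L\mapsto P_L$ is injective on the set of subfields of $K/k$. So once the factorization $f_1,\ldots,f_r$ itself is stored, each of the $m$ subfields is fully determined by recording only its partition $P_L$ of $[r]$.

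It remains to bound the cost of storing one such partition. Using the partition-vector representation $v=(v_1,\ldots,v_r)$ with $v_i\in\{1,\ldots,i\}$, encoding a single partition takes at most $\sum_{i=1}^{r}\lceil\log_2 i\rceil = \mathcal{O}(r\log r)$ bits. Summing over the $m$ subfields gives a total of $\mathcal{O}(mr\log r)$ additional bits, as claimed. No further bookkeeping is needed to encode lattice relations, since inclusion between subfields can be read directly from their partition-vectors (Lemma~\ref{lemma:subset}).

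There is really no obstacle here: the statement is essentially a corollary of the injectivity in Theorem~\ref{lemma:subfieldpoly} combined with an elementary bit-count. The only thing to be careful about is to interpret ``additional'' correctly, namely as the cost on top of storing the subfield factorization $f_1,\ldots,f_r$, which is precisely what the partition-vector representation exploits.
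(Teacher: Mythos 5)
Your proposal is correct and follows exactly the argument the paper intends: the paper states this lemma without proof as an immediate consequence of the injectivity of $L\mapsto P_L$ (Theorem~\ref{lemma:subfieldpoly}) together with the $\mathcal{O}(r\log r)$-bit partition-vector representation described in the footnote. Your additional remark that lattice relations need no separate storage because inclusion is read off via Lemma~\ref{lemma:subset} is a correct and worthwhile clarification of the word ``complete subfield lattice.''
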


\subsection{Intersecting Subfields represented by Partitions}\label{ssec:part}

This Section determines the partition of the intersection of two subfields $L$ and $L'$ using only their partitions.
That way Phase~II only needs to use objects with very small bit size.




%
%

\begin{definition}
A partition $P$ is a {\em refinement} of $Q$ (or simply $P$ {\em refines} $Q$) if every $Q^{(i)}$ can be written as a union of some of the $P^{(j)}$.
\end{definition}

\begin{lemma}\label{lemma:subset}
Let $L,L'$ be two subfields of $K/k$ and let $P_{L}$ and $P_{L'}$ be their partitions of $[r]$. Then $L\subseteq L'$ if, and only if, $P_{L'}$ refines $P_L$. 
\end{lemma}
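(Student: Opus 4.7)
The plan is to combine three earlier results: Lemma~\ref{lemma:factor}, which characterizes $\{f_1,\ldots,f_r\}^\pi \cap M[x]$ for a subfield $M$ as the $\pi$-closure of its $P_M$-products; Theorem~\ref{lemma:subfieldpoly}, which identifies the first $P_L$-product with the subfield polynomial of $L$; and Theorem~\ref{theo:equiv}(4), which states that the coefficients of the subfield polynomial generate $L$ over $k$. Write $F_1,\ldots,F_t$ for the $P_L$-products and $F'_1,\ldots,F'_{t'}$ for the $P_{L'}$-products.

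For the forward direction I would argue as follows. Suppose $L\subseteq L'$, so $L[x]\subseteq L'[x]$. Each $F_i$ lies in $\{f_1,\ldots,f_r\}^\pi \cap L[x] \subseteq \{f_1,\ldots,f_r\}^\pi \cap L'[x]$. By Lemma~\ref{lemma:factor} applied to $L'$, the latter set equals $\{F'_1,\ldots,F'_{t'}\}^\pi$, so $F_i$ is a product of some of the $F'_j$. Because $f$ is separable, the factors $f_1,\ldots,f_r$ are pairwise coprime, and the index set $P_L^{(i)}$ of $F_i$ must therefore equal the disjoint union of the index sets $P_{L'}^{(j)}$ of the $F'_j$ appearing in that product. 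Hence every part of $P_L$ is a union of parts of $P_{L'}$, which is exactly the statement that $P_{L'}$ refines $P_L$.

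For the converse, suppose $P_{L'}$ refines $P_L$. Then each $P_L$-product $F_i$ factors as a product of certain $P_{L'}$-products, all of which lie in $L'[x]$ by Definition~\ref{def:subfieldpart}(1); hence $F_i \in L'[x]$. Applying this in particular to $F_1$, which by Theorem~\ref{lemma:subfieldpoly} is the subfield polynomial $g$ of $L$, we see that the coefficients of $g$ lie in $L'$. Theorem~\ref{theo:equiv}(4) then gives $L = k(\mathrm{coeffs}(g)) \subseteq L'$. The only non-routine step is the uniqueness argument in the forward direction — passing from the multiplicative identity "$F_i = \prod_{j\in S} F'_j$" to the set-theoretic identity "$P_L^{(i)} = \bigcup_{j\in S} P_{L'}^{(j)}$" — but this is immediate from separability. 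Everything else is bookkeeping on the previously established lemmas.
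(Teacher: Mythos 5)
Your proof is correct and follows essentially the same route as the paper: the direction ``$L\subseteq L'$ implies refinement'' is exactly the paper's appeal to Lemma~\ref{lemma:factor}, and the converse likewise goes through Theorem~\ref{lemma:subfieldpoly} to identify the first $P_L$-product with the subfield polynomial of $L$. The only (cosmetic) difference is in the last step of the converse: the paper deduces $L\subseteq L'$ from the divisibility $g_{L'}\mid g_L$ via Remark~\ref{remark1}, whereas you use Theorem~\ref{theo:equiv}(4) to conclude $L=k(\mathrm{coeffs}(g))\subseteq L'$; both are valid.
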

\begin{proof}
If $P_{L'}$ refines $P_L$, then $P_{L'}^{(1)}\subseteq P_L^{(1)}$. This means that the subfield polynomial of $L$ is divisible by the subfield polynomial of $L'$. Remark~\ref{remark1} implies that $L\subseteq L'$. The converse follows from Lemma~\ref{lemma:factor}.
\end{proof}

%

The refinement relation is a partial ordering on the set $\Pi_r$ of all partitions of $[r]$. Let $P,Q\in \Pi_r$. The finest partition that is refined by both $P$ and $Q$ is called the \textit{join} of $P$ and $Q$, and is denoted by $P\vee Q$. The parts of $P\vee Q$ are the smallest subsets of $[r]$ that are unions of parts of $P$, and of $Q$ (see~\cite{MeetAndJoin} and~\cite{stanley}, Section~3.3).

\begin{theorem}\label{theo:intersect}
Let $L,L'$ be two subfields of $K/k$ and let $P_{L}$ and $P_{L'}$ be their partitions. Then $P_{L\cap L'}=P_L \vee P_{L'}$.
\end{theorem}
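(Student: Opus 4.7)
The plan is to let $M := L \cap L'$ and $R := P_L \vee P_{L'}$, and to check that $R$ satisfies both defining conditions of Definition~\ref{def:subfieldpart} for $M$; by the uniqueness argument given immediately after Lemma~\ref{lemma:factor}, this forces $R = P_M$. The only non-routine point is a bookkeeping one: by Lemma~\ref{lemma:subset} subfield inclusion $L \subseteq L'$ corresponds to $P_{L'}$ \emph{refining} $P_L$, so ``finer partition'' matches ``larger subfield''. That is why the intersection $L \cap L'$ should correspond to the \emph{join} of the two partitions (finest common coarsening) rather than the meet.

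For condition~(1), I would unpack the definition of the join: each part $R^{(j)}$ is simultaneously a union of parts of $P_L$ and a union of parts of $P_{L'}$. Hence the $R$-product $\prod_{i\in R^{(j)}} f_i$ is a product of $P_L$-products and therefore lies in $L[x]$; by the symmetric argument it also lies in $L'[x]$, and thus in $L[x]\cap L'[x] = M[x]$.

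For the maximality in condition~(2), let $R'$ be any partition of $[r]$ whose $R'$-products lie in $M[x]$. Then these products lie in $\{f_1,\ldots,f_r\}^\pi \cap L[x]$, which by Lemma~\ref{lemma:factor} applied to $P_L$ equals $\{P_L\text{-products}\}^\pi$. Consequently every part of $R'$ is a union of parts of $P_L$, i.e., $P_L$ refines $R'$; symmetrically $P_{L'}$ refines $R'$. By the universal property of the join, $|R'| \le |P_L \vee P_{L'}| = |R|$, which gives the required maximality and concludes the argument.
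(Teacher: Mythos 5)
Your proposal is correct, and it runs the argument in the opposite direction from the paper. The paper starts from $P_{L\cap L'}$ and verifies that it has the universal property of the join: it is refined by both $P_L$ and $P_{L'}$ (via Lemma~\ref{lemma:subset}), and it refines any other common coarsening $Q$ --- the latter established by a splitting argument (if a part of $P_{L\cap L'}$ met a part of $Q$ without being contained in it, one could split it and contradict the maximality of $|P_{L\cap L'}|$). You instead start from $R=P_L\vee P_{L'}$ and verify that it satisfies Definition~\ref{def:subfieldpart} for $M=L\cap L'$, concluding by the uniqueness of $P_M$. Your condition~(1) is immediate from the definition of the join, and your maximality step replaces the paper's splitting argument with an appeal to Lemma~\ref{lemma:factor}: any competitor's products lie in $\{f_1,\ldots,f_r\}^\pi\cap L[x]=\{P_L\text{-products}\}^\pi$, so (using separability of $f$ to recover index sets from products, as the paper does in its uniqueness discussion) the competitor is a common coarsening of $P_L$ and $P_{L'}$ and hence has at most $|R|$ parts. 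This is a genuine simplification of the maximality step, since Lemma~\ref{lemma:factor} already encapsulates the splitting argument; the small price is that you must explicitly invoke the well-definedness of $P_M$, which the paper's direction gets for free. Both proofs ultimately rest on the same equivalence --- a partition's products lie in $M[x]$ iff it is a common coarsening of $P_L$ and $P_{L'}$ --- so the choice between them is a matter of exposition.
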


\begin{proof} Let $P=P_{L\cap L'}=\{P^{(1)},\ldots,P^{(t)}\}$ satisfy items (1) and (2) of Definition~\ref{def:subfieldpart}. We need to prove that $P$ is the finest partition such that $P_L=\{P_L^{(1)}, \ldots, P_L^{(s)}\}$ and $P_{L'}=\{ P_{L'}^{(1)},\ldots, P_{L'}^{(s')}\}$ refine $P$. 

Since $L\cap L' \subseteq L$ and $L\cap L' \subseteq L'$, Lemma~\ref{lemma:subset} implies that $P_L$ and $P_{L'}$ refine $P=P_{L\cap L'}$. To prove that $P$ is the finest partition with this property, let $Q$ be a partition refined by both $P_L$ and $P_{L'}$. We need to prove that $P$ refines $Q$. 

Pick $Q^{(i)}$ and let $P^{(j)}$ be such that $R:=Q^{(i)}\cap P^{(j)}\neq \emptyset.$ We need to prove that $P^{(j)} \subseteq Q^{(i)}$. Since $P_L$ and $P_{L'}$ refine $P$, there exist subsets $J_1\subseteq \{1,\ldots, s\}$ and $J_2\subseteq \{1,\ldots, s'\}$ such that \[ P^{(j)} = \bigcup_{k\in J_1} P_L^{(k)} = \bigcup_{k\in J_2} P_{L'}^{(k)}. \]
Likewise, there exist $I_1\subseteq \{1,\ldots, s\}$ and $I_2\subseteq \{1,\ldots, s'\}$ such that \[ Q^{(i)} = \bigcup_{k\in I_1} P_L^{(k)} = \bigcup_{k\in I_2} P_{L'}^{(k)}.  \]
Therefore, \begin{equation}\label{eq:1} 
R=Q^{(i)}\cap P^{(j)} = \bigcup_{k\in I_1\cap J_1} P_L^{(k)} = \bigcup_{k\in I_2\cap J_2} P_{L'}^{(k)}
\end{equation}
and 
\begin{equation}\label{eq:2}
P^{(j)} \setminus R = \bigcup_{k\in J_1\setminus I_1} P_L^{(k)} = \bigcup_{k\in J_2\setminus I_2} P_{L'}^{(k)}.
\end{equation}

If $R\neq P^{(j)}$, then we can replace $P^{(j)}$ by the non-empty sets $R$ and $P^{(j)}\setminus R$. Equations~(\ref{eq:1}) and (\ref{eq:2}) imply that the resulting partition is refined by both $P_L$ and $P_{L'}$ and therefore, satisfies item (1) of Definition~\ref{def:subfieldpart} for $L\cap L'$. This contradicts the maximality of $|P_{L\cap L'}|$. Hence, $P^{(j)}\subseteq Q^{(i)}$ and $P$ refines $Q$.
\end{proof}

%
%

According to \cite{Freese1} (see also \cite{Freese2}), the join of two partitions can be computed with $\mathcal{O}(r\log r)$ CPU operations.
So after computing the partitions of the principal subfields in Section~\ref{sec:3} below,
we can compute the partition of every subfield by joining partitions, which is much faster than the $k$-vector space intersections used in \cite{HKN}.
After that, Subsection~\ref{ssec:generators} shows how to find generators for a subfield represented by its partition.

\section{Phase I, computing partition $P_{i}$ of a principal subfield $L_i$}\label{sec:3}

Let $f_1,\ldots, f_r$ be a subfield factorization of $f$. In general, one can compute a subfield factorization of $f$ by factoring $f$ over $K$. For $k=\mathbb{Q}$ we will give an alternative in Section~\ref{sec:5}. In this section we present how one can compute the partition $P_i$ of $\{1,\ldots,r\}$ defined by a principal subfield $L_i$ of $K/k$. To find $P_i$, it suffices to find a basis of the vectors $(e_1,\ldots, e_r)\in \{0,1\} ^r$ for which 
\begin{equation}\label{g}
\prod_{j=1}^r f_j^{e_j}\in L_i[x].
\end{equation}

\begin{remark}
Let $h_j$ be the logarithmic derivative of $f_j$, that is, $h_j=f_j'/f_j \in K(x)$ and let $H(x)=\sum_{j=1}^r e_j h_j$. If $g=\prod_{j=1}^r f_j^{e_j}$, then $g'/g = H$.
\end{remark}

\begin{definition}
Let $f\in k[x]$. Then $f$ is {\em semi-separable} if $\text{char}(k)=0$ or $\text{char}(k)=p$ and $f$ has no roots with multiplicity $\geq p$.
\end{definition}

\begin{lemma}\label{lemma:gprime1}
Let $g\in K[x]$ monic and semi-separable, and let $L$ be a subfield of $K/k$. If $g'/g \in L(x)$, then $g\in L[x]$.
\end{lemma}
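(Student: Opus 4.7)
The plan is to prove the statement by Galois descent. I would first pass to a Galois closure $\hat{K}$ of $K/k$, which exists since $K/k$ is separable. Because $L\subseteq K\subseteq\hat{K}$ and $\hat{K}$ is a splitting field of $f$ over $k\subseteq L$, the extension $\hat{K}/L$ is Galois, so $L$ equals the fixed field of $\text{Gal}(\hat{K}/L)$. Hence it is enough to show $\sigma(g)=g$ in $\hat{K}[x]$ for every $\sigma\in\text{Gal}(\hat{K}/L)$, where $\sigma$ acts coefficient-wise.

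Given such a $\sigma$, the hypothesis $g'/g\in L(x)$ together with $\sigma|_L=\text{id}$ gives $\sigma(g'/g)=g'/g$. Since differentiation commutes with the coefficient-wise action of $\sigma$, this rewrites as $\sigma(g)'/\sigma(g)=g'/g$. Both $g$ and $\sigma(g)$ are monic and semi-separable (semi-separability depends only on the multiplicity pattern, which $\sigma$ preserves), and they share the same logarithmic derivative. The proof thus reduces to the following uniqueness claim: a monic semi-separable polynomial in $\hat{K}[x]$ is determined by its logarithmic derivative.

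For this uniqueness step I would factor $g=\prod_i(x-\alpha_i)^{m_i}$ over $\bar{K}$ with distinct $\alpha_i$ and read off the partial fraction expansion $g'/g=\sum_i m_i/(x-\alpha_i)$. The set of poles recovers the distinct roots $\alpha_i$, and the residue at $\alpha_i$, viewed in $\bar{K}$, equals $m_i\cdot 1$. Semi-separability forces $0<m_i<\text{char}(k)$ whenever $\text{char}(k)>0$, so every residue is nonzero and the integer $m_i$ is unambiguously recovered from $g'/g$. Thus $g$ is reconstructed from $g'/g$, forcing $\sigma(g)=g$.

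The main obstacle is the careful handling of positive characteristic in this uniqueness step, which is exactly why the semi-separability hypothesis appears: without it, a factor of the form $(x-\alpha)^p$ contributes nothing to $g'/g$ and becomes invisible in the partial fraction expansion, breaking the reconstruction.
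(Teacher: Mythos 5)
Your proof is correct and follows essentially the same route as the paper: both arguments show that every conjugate of $g$ over $L$ has the same logarithmic derivative as $g$ and then invoke injectivity of $g\mapsto g'/g$ on monic semi-separable polynomials (which the paper gets from computing the kernel of this homomorphism, and which you establish more explicitly via the partial-fraction expansion and the nonvanishing of the residues $m_i$ in characteristic $p$). The only cosmetic difference is that you phrase the descent through the Galois closure and its fixed field, while the paper argues directly with conjugates over $L$ using separability of $K/L$.
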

\begin{proof}
Consider the groups $(K(x)^*, \cdot )$ and $(K(x),+)$ and let $\phi : K(x)^*\rightarrow K(x)$ be the group homomorphism defined by $\phi(g)=g'/g$. The kernel of $\phi$ is $K^*$ in characteristic 0 and $K(x^p)^*$ in characteristic $p$. So, if we restrict $\phi$ to monic semi-separable polynomials, then $\phi$ becomes injective.

Let $g\in K[x]$ be a monic semi-separable polynomial such that $g'/g \in L(x)$. Let $\overline{g}\in \bar{L}[x]=\bar{K}[x]$ be a conjugate of $g$ over $L$. Since $g'/g \in L(x)$, it follows that
\[ \phi(\overline{g})=\overline{g}'/\overline{g}=\overline{(g'/g)}=g'/g=\phi(g), \]
By the injectivity of $\phi$ on monic semi-separable polynomials, $\overline{g}=g$ for any conjugate of $g$ over $L$ in $\overline{K}[x]$. Therefore, $g\in L[x]$ ($K/k$ and hence $K/L$ are assumed to be separable extensions throughout this paper).
\end{proof} 

\begin{lemma}\label{lemma:gprime2}
Let $g\in K[x]$ monic, $deg(g)=n$, and let $L$ be a subfield of $K$. Let $p_1,\ldots, p_{2n}\in k$ be distinct elements. If $g'(p_i)/g(p_i)\in L$, $1\leq i \leq 2n$, then $g'/g \in L(x)$. 
\end{lemma}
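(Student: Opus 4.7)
The plan is to argue by Galois descent: let $M$ be a Galois closure of $K$ over $L$, and show that $h := g'/g \in K(x) \subseteq M(x)$ is fixed by every $\sigma \in \mathrm{Gal}(M/L)$ acting coefficient-wise on polynomials (and trivially on $x$). Since the fixed field is $L(x)$, this yields $h \in L(x)$.

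Fix $\sigma \in \mathrm{Gal}(M/L)$ and set $g_\sigma := \sigma(g)$, so $\sigma(h) = g_\sigma'/g_\sigma$. Because the $p_i \in k \subseteq L$ are fixed by $\sigma$, and by hypothesis $g'(p_i)/g(p_i) \in L$ is also fixed, I get
\[
(\sigma h)(p_i) \;=\; \frac{\sigma(g'(p_i))}{\sigma(g(p_i))} \;=\; \sigma\!\left(\frac{g'(p_i)}{g(p_i)}\right) \;=\; \frac{g'(p_i)}{g(p_i)} \;=\; h(p_i),
\]
so $h - \sigma(h)$ vanishes at all $2n$ points. Combining into a single fraction,
\[
h - \sigma(h) \;=\; \frac{g' g_\sigma - g\, g_\sigma'}{g\, g_\sigma},
\]
and the denominator does not vanish at any $p_i$ (since $g(p_i) \neq 0$ by hypothesis, and $g_\sigma(p_i) = \sigma(g(p_i)) \neq 0$). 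Hence the numerator has $2n$ distinct roots $p_1,\ldots,p_{2n}$.

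The crux is the degree bound on the numerator $N := g' g_\sigma - g\, g_\sigma'$. Since both $g$ and $g_\sigma$ are monic of degree $n$, a direct computation shows that the leading terms of $g' g_\sigma$ and $g\, g_\sigma'$ are each $n x^{2n-1}$ and therefore cancel, giving $\deg N \leq 2n-2$. A polynomial of degree at most $2n-2$ with $2n$ distinct roots must vanish, so $N = 0$, i.e.\ $h = \sigma(h)$. Since $\sigma$ was arbitrary in $\mathrm{Gal}(M/L)$, we conclude $h \in L(x)$. The only subtle point is the degree estimate for $N$; everything else is a routine Galois-descent argument using the fact that $k \subseteq L$ so the evaluation points are automatically $\sigma$-invariant.
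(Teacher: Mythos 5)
Your proof is correct and takes essentially the same route as the paper: both compare $h = g'/g$ with its conjugates over $L$ and conclude that the cross-polynomial $g'g_\sigma - g\,g_\sigma'$, having $2n$ distinct roots but degree less than $2n$, must vanish. Your extra observation that the leading terms cancel (giving degree at most $2n-2$) is correct but unnecessary, since the trivial bound $2n-1 < 2n$ already suffices.
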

\begin{proof}
Let $h=g'/g\in K(x)$ and suppose that $h(p_i)\in L$, $1\leq i \leq 2n$. Let $\overline{h}=\overline{g}'/\overline{g}$ be a conjugate of $h$ over $L$. Then \[h(p_i)=\overline{h(p_i)}=\overline{h}(\overline{p_i})=\overline{h}(p_i),\;\;1\leq i \leq 2n.\]
This means that the polynomial $g'\overline{g} - \overline{g}'g$ of degree $<2n$ has $2n$ distinct roots. Hence, $g'\overline{g} - \overline{g}'g=0$ and therefore $h=\overline{h}$, for every conjugate $\overline{h}$ of $h$ over $L$. That is, $g'/g\in L(x)$.
\end{proof}

Consider the following subroutine \texttt{Equations}.

\begin{algorithm}
\caption{\texttt{Equations}.}\label{alg:eq1}
\begin{algorithmic} 
\STATE \textbf{Input:} Subfield factorization $f_1,\ldots, f_r$ of $f$ and an index $i$.
\STATE \textbf{Output:} Set of equations $\mathcal{E}$ whose solutions give the partition $P_i$.
\vspace{0.1cm}
\STATE \text{1. Choose distinct elements $p_1,\ldots, p_{2n}$ of $k$.}
\STATE \text{2. Let $q_j(\alpha):=\sum e_i f_i'(p_j)/f_i(p_j)$, where $q_j(x)\in e_1\cdot k[x]+\cdots +e_r\cdot k[x].$}
\STATE \text{3. Let $\mathcal{E}$ be the system of $k$-linear equations obtained by taking the }
\STATE \text{\;\;\;\; coefficients of $x$ and $\alpha$ of
$\text{rem}(q_j(x), f_i)-q_j(\alpha)=0$, for $j=1,\ldots,2n,$}
\STATE \text{\;\;\;\; where $\text{rem}(q_j(x), f_i)$ is the remainder of the division of $q_j(x)$ by $f_i$.}
\STATE \text{4. \textbf{return} $\mathcal{E}$.}
\end{algorithmic}
\end{algorithm}

This algorithm requires that $k$ has at least $2n$ elements. In practice, however, one often needs very few points to find the partition $P_i$.
By construction, $\mathcal{E}$ has a basis of solutions in $\{0,1\}$-echelon form:

\begin{definition}\label{eq:rref}
A basis of solutions $\{ s_1,\ldots, s_t\}$ of $\mathcal{E}$ is called a \emph{$\{0,1\}$-echelon basis of $\mathcal{E}$} if
\begin{enumerate}
\item $s_i=(s_{i,1}, \ldots, s_{i,r})\in \{0,1\}^r$, $1\leq i \leq t$. 
\item $\sum_{i=1}^t s_i = (1,\ldots, 1)$.
\end{enumerate}
\end{definition}

\begin{remark}\label{rem:echelon}
If a $\{0,1\}$-echelon basis of $\mathcal{E}$ exists, then any reduced echelon basis of $\mathcal{E}$ is automatically a $\{0,1\}$-echelon basis due to the uniqueness of the reduced echelon basis.
\end{remark}

\begin{corollary}\label{cor:e}
Let $\{s_1,\ldots,s_t \}$ be a $\{0,1\}$-echelon basis of $\mathcal{E}$ and let $P_i=\{P^{(1)}, \ldots, P^{(t)}  \}$, where $P^{(l)}=\{j : s_{l,j}=1 \}.$ Then  $P_i$ is the partition of $L_i$.
\end{corollary}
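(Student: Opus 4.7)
The proof plan centers on a single equivalence: a $\{0,1\}$-vector $(e_1,\ldots,e_r)$ solves the linear system $\mathcal{E}$ if and only if the product $\prod_{l} f_l^{e_l}$ lies in $L_i[x]$. Once this is in hand, the corollary is a short bookkeeping exercise against Definition~\ref{def:subfieldpart}.

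First I would establish the equivalence. The defining relation of $\mathcal{E}$, namely $\text{rem}(q_j(x), f_i) = q_j(\alpha)$ with both sides expanded on the bases $1,x,x^2,\ldots$ and $1,\alpha,\alpha^2,\ldots$, is exactly the criterion from Equation~(\ref{eq:Lg}) certifying that $q_j(\alpha)\in L_{f_i}=L_i$. Hence $(e_1,\ldots,e_r)$ solves $\mathcal{E}$ iff $\sum_{l} e_l\, f_l'(p_j)/f_l(p_j)\in L_i$ for all $j=1,\ldots,2n$. For a $\{0,1\}$-vector, set $g=\prod_l f_l^{e_l}$; the remark preceding Lemma~\ref{lemma:gprime1} gives $g'/g=\sum_l e_l f_l'/f_l$, so those values are simply $(g'/g)(p_j)$. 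Since $g$ is monic, divides $f$, and is therefore semi-separable of degree at most $n$, Lemma~\ref{lemma:gprime2} promotes the pointwise condition to $g'/g\in L_i(x)$, and Lemma~\ref{lemma:gprime1} promotes that to $g\in L_i[x]$. This gives the desired equivalence.

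Next I would verify item (1) of Definition~\ref{def:subfieldpart} for $P_i$. By Definition~\ref{eq:rref}, the $s_l$ are $\{0,1\}$-vectors summing to $(1,\ldots,1)$, so each index $j\in[r]$ lies in exactly one part $P^{(l)}=\{j:s_{l,j}=1\}$; hence $P_i$ really is a partition. Since every $s_l$ solves $\mathcal{E}$, the equivalence shows each $P_i$-product $\prod_{j\in P^{(l)}} f_j$ lies in $L_i[x]$.

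The main obstacle is the maximality of $|P_i|=t$, and I would handle it by contradiction. Suppose some partition $P'$ of $[r]$ with $|P'|>t$ also has all its products in $L_i[x]$. The indicator vectors of the parts of $P'$ are $\{0,1\}$-vectors with disjoint nonempty supports covering $[r]$, so they are $k$-linearly independent and sum to $(1,\ldots,1)$; by the equivalence above they all solve $\mathcal{E}$. This produces more than $t$ independent solutions of $\mathcal{E}$, contradicting the fact that $\{s_1,\ldots,s_t\}$ is a basis. Hence $P_i$ also satisfies item (2) of Definition~\ref{def:subfieldpart}, which (together with the uniqueness argument following Lemma~\ref{lemma:factor}) identifies $P_i$ as the partition of $L_i$.
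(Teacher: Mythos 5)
Your proposal is correct and follows essentially the same route as the paper's own proof: the forward implication (solutions of $\mathcal{E}$ give products in $L_i[x]$) via Lemmas~\ref{lemma:gprime2} and~\ref{lemma:gprime1}, and maximality from the converse implication together with the fact that $\{s_1,\ldots,s_t\}$ is a basis of the solution space. You merely spell out details the paper leaves implicit (that the defining equations of $\mathcal{E}$ encode membership in $L_i$ via Equation~(\ref{eq:Lg}), and that the indicator vectors of a finer partition would be linearly independent), which is fine.
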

\begin{proof}
If $(e_1,\ldots, e_r)\in\{0,1\}^r$ is a solution of $\mathcal{E}$ then, by Lemmas~\ref{lemma:gprime1} and \ref{lemma:gprime2}, it follows that $g=\prod_{j=1}^r f_j^{e_j} \in L_i[x].$ Thus, the $P_i$-products are in $L_i[x]$. The maximality of $|P_i|$ follows from the fact that $s_1,\ldots, s_t$ form a basis for the solution space of $\mathcal{E}$ and that any vector $(e_1,\ldots, e_r)\in \{0,1\}^r$ such that $\prod_{j=1}^r f_j^{e_j} \in L_i[x]$ is a solution of $\mathcal{E}$. Hence, the partition $P_i$ satisfies Definition~\ref{def:subfieldpart} and therefore, is the partition defined by $L_i$.
\end{proof}

The partition $P_i$ defined by $L_i$ can be found using the following algorithm.

\begin{algorithm}[H]
\caption{\texttt{Partition} (Slow version).}\label{alg:partition_slow}
\begin{algorithmic}
\STATE \textbf{Input:} Subfield factorization $f_1,\ldots, f_r$ of $f$ and an index $i$.
\STATE \textbf{Output:} The partition $P_i$ of $\{1,\ldots, r\}$ defined by $L_{i}$.
\vspace{0.2cm}
\STATE \text{1. Compute $\mathcal{E}$ using subroutine \texttt{Equations}.}
\STATE \text{2. Compute a $\{0,1\}$-echelon basis $\{s_1,\ldots, s_t\}$ of $\mathcal{E}$.}
\STATE \text{3. \textbf{return} $P_i:=\{P^{(1)}, \ldots, P^{(t)}  \}$, where $P^{(l)}$ is as in Corollary \ref{cor:e}.}
\end{algorithmic}
\end{algorithm}

This algorithm, however, does not perform very well in practice. Apart from the (costly) $2n$ polynomial divisions over $K$ in Step~3 of \texttt{Equations}, the system $\mathcal{E}$ is over-determined. The number of linear equations in $\mathcal{E}$ is bounded by $2n^2d_i$, where $d_i=\deg(f_i),$ while the number of variables is $r\leq n$. Furthermore, the coefficients are in $k$ and can be potentially large, while the solutions are $0$-$1$ vectors (to find those, all that is needed are their images modulo a prime number). We address these problems by computing a subset of $\mathcal{E}$ modulo a prime ideal $\mathfrak{p}$.

\begin{definition}\label{def:GoodIdeal}
A {\em good $k$-valuation w.r.t. f } is a valuation $v: k\rightarrow \mathbb{Z}\cup \{\infty\}$ such that if $R_v=\{a\in k \; : \; v(a)\geq 0\}$ and $p_v=\{a\in k\; : \; v(a)>0\}$, then $f\in R_v[x]$, the residue field $\textbf{F}:=R_v/p_v$ is finite, the image $\bar{f}$ of $f$ in $\textbf{F}[x]$ is separable and $\deg(\bar{f})=\deg(f)$. Furthermore, we call an ideal $\mathfrak{p}$ a {\em good $k$-ideal} if $\mathfrak{p} = p_v$, for some good $k$-valuation $v$. 
\end{definition}

If $k=\mathbb{Q}$, then a good $k$-ideal $\mathfrak{p}$ is of the form $(p)$, for some prime number $p$ such that $f\bmod p$ is separable and has the same degree as $f$. The following subroutine returns $\tilde{\mathcal{E}}$: a subset of $\mathcal{E}$ modulo a good $k$-ideal $\mathfrak{p}$.

\begin{algorithm}[H]
\begin{algorithmic}
\caption{\texttt{EquationsModP}.}\label{alg:eq2}
\STATE \textbf{Input:} Subfield factorization $f_1,\ldots, f_r$, an index $i$ and a good $k$-ideal $\mathfrak{p}$.
\STATE \textbf{Output:} $\tilde{\mathcal{E}}$: necessary equations modulo $\mathfrak{p}$ for $e_1,\ldots, e_r$.
\vspace{0.2cm}
\STATE \text{1. Choose $c\in \textbf{F}$ random.}
\STATE \text{2. If $f_j(c)\bmod \mathfrak{p}$ has no inverse, for some $1\leq j \leq r$, go to Step 1.}
\STATE \text{3. Let $q(\alpha):=\sum e_j f_j'(c)/f_j(c)$, where $q(x)\in e_1\cdot\textbf{F}[x]_{<n}+\cdots + e_r\cdot \textbf{F}[x]_{<n}$.}
\STATE \text{4. Let $\tilde{\mathcal{E}}$ be the system of $\textbf{F}$-linear equations obtained by taking the }
\STATE \text{\;\;\;\; coefficients of $x$ and $\alpha$ of 
$\text{rem}(q(x), f_i)-q(\alpha)=0$.}
\STATE \text{5. \textbf{return} $\tilde{\mathcal{E}}$.}
\end{algorithmic}
\end{algorithm}

The inverse in Step~2 is taken in the finite ring $\textbf{F}[\alpha]$, and might not exist. If $\textbf{F}$ is too small in Steps~1-2, and since the solutions are $0$-$1$ vectors, one can compute a finite extension $\tilde{\mathbf{F}}$ of $\textbf{F}$ and compute/solve the system $\tilde{\mathcal{E}}$ over $\tilde{\mathbf{F}}$. 
 
The partition $P_i$ defined by $L_i$ can be computed with the following algorithm.

\begin{algorithm}[H]
\caption{\texttt{Partition}.}\label{alg:partition}
\begin{algorithmic}
\STATE \textbf{Input:} Subfield factorization $f_1,\ldots, f_r$, an index $i$ and a good $k$-ideal $\mathfrak{p}$.
\STATE \textbf{Output:} The partition $P_i$ of $\{1,\ldots, r\}$ defined by $L_{f_i}$.
\vspace{0.2cm}
\STATE \text{1. Compute $\tilde{\mathcal{E}}$ using \texttt{EquationsModP}.}
\STATE \text{2. Compute a $\{0,1\}$-echelon basis $\{s_1,\ldots, s_t\}$ of $\tilde{\mathcal{E}}$ (see Remark~\ref{rem:echelon}).}
\STATE \text{3. \textbf{if} Step 2 fails \textbf{then}}
\STATE \text{4.\;\;\;\;\;\;Compute more equations with \texttt{EquationsModP}.}
\STATE \text{5.\;\;\;\;\;\;Go to Step 2.}
\STATE \text{6. Let $\tilde{P}_i:=\{P^{(1)}, \ldots, P^{(t)}  \}$, where $P^{(l)}$ is as in Corollary \ref{cor:e}.}
\vspace{0.1cm}
\STATE \text{7. Let $\tilde{g}_1,\ldots, \tilde{g}_t$ be the $\tilde{P}_i$-products. \; // }
\STATE \text{8. Let $\mathfrak{q}$ be a good $K_i$-ideal. \;\;\;\;\;\;\;\;\;\;\;\;\;\;\;//}
\STATE \text{9. \textbf{for} $j=1,\ldots, t$ \textbf{do} \;\;\;\;\;\;\;\;\;\;\;\;\;\;\;\;\;\;\;\;\;\;\;\;\;\;// Correctness check (Theorem~\ref{theo:correctPi}).}
\STATE \text{10.\;\;\;\;\; \textbf{if }$\sigma_i(\tilde{g}_j) \not\equiv \tilde{g}_j \bmod \mathfrak{q}$ \textbf{then} \;\;\;\;\;\;\;// }
\STATE \text{11.\;\;\;\;\;\;\;\;\;\;\;\;Go to Step 4. \;\;\;\;\;\;\;\;\;\;\;\;\;\;\;\;\;\;\;\;\;//}
\vspace{0.1cm}
\STATE \text{12. \textbf{return} $\tilde{P}_i$.}
\end{algorithmic}
\end{algorithm}

Our next task is to prove correctness of Algorithm \texttt{Partition}.

\begin{lemma}\label{lemma:modp}
Let $K$ be a field and $f\in K[x]$ monic separable such that $ f = g_1 \cdots g_t = h_1\cdots h_t, $ where $g_j, h_j\in K[x]$ are monic but not necessarily irreducible. Let $\mathfrak{q}$ be a good $K$-ideal. If $g_j \equiv h_j \bmod \mathfrak{q}$, for every $1\leq j \leq t$, then $g_j = h_j$, $1 \leq j \leq t$.
\end{lemma}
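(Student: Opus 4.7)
The plan is to reduce both factorizations modulo $\mathfrak{q}$ and exploit unique factorization of the separable polynomial $\bar f \in \mathbf{F}[x]$. Since $f$ is separable, it has a complete factorization $f = p_1 \cdots p_m$ into pairwise coprime monic irreducibles in $K[x]$, and any monic divisor of $f$ is necessarily a product of some subset of the $p_i$. Hence I can write $g_j = \prod_{i \in S_j} p_i$ and $h_j = \prod_{i \in T_j} p_i$ for uniquely determined subsets $S_j, T_j \subseteq \{1,\ldots,m\}$, and it suffices to prove $S_j = T_j$ for every $j$.

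Next I would verify that reduction modulo $\mathfrak{q}$ behaves well on these factors. Since $f \in R_v[x]$ is monic, every root of $f$ in an algebraic closure of $K$ is integral over $R_v$, so the coefficients of each monic divisor of $f$ in $K[x]$, being elementary symmetric functions of a subset of those roots, lie in $R_v$. In particular $p_i, g_j, h_j \in R_v[x]$, and because each $p_i$ is monic, $\deg \bar p_i = \deg p_i \geq 1$. By the definition of a good $K$-ideal, $\bar f = \bar p_1 \cdots \bar p_m$ is separable, which forces the $\bar p_i$ to be pairwise coprime in $\mathbf{F}[x]$: any common irreducible factor would appear in $\bar f$ with multiplicity at least two.

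Finally, pairwise coprimality of the $\bar p_i$ means that for each $j$ the subset $S_j$ is recovered from $\bar g_j$ as $\{i : \bar p_i \mid \bar g_j\}$, and analogously $T_j$ is recovered from $\bar h_j$. The hypothesis $g_j \equiv h_j \bmod \mathfrak{q}$ gives $\bar g_j = \bar h_j$, so $S_j = T_j$ and hence $g_j = h_j$, as required.

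The only real subtlety is the valuation-theoretic point that monic divisors of $f$ automatically have coefficients in $R_v$; once that is observed, the remainder is just uniqueness of factorization for separable polynomials over $\mathbf{F}$, which is precisely what the goodness of $\mathfrak{q}$ was designed to provide. I do not expect any genuine obstacle here, only careful bookkeeping about which ring each factor lives in.
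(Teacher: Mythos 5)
Your proof is correct and is essentially the paper's argument: both hinge on the fact that, because $f$ and $\bar f$ are separable, the distinct irreducible factors of $f$ remain pairwise coprime after reduction, so each $g_j$ (equivalently, its index set of irreducible factors) is recoverable from $\bar g_j = \bar h_j$. Your extra observation that monic divisors of $f$ automatically lie in $R_v[x]$ (since the valuation ring is integrally closed) is a welcome justification of a point the paper defers to Remark~\ref{inclusions}.
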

\begin{proof}
It suffices to show that for every irreducible factor $q$ of $f$ in $K[x]$, $q\mid g_j$ if and only if, $q \mid h_j$.
Suppose that $q\mid g_j$. Then $q\nmid g_l$, for any $l\neq j$, because $f$ is separable. Moreover, $q$ also does not divide $g_l\bmod \mathfrak{q}$, $l\neq j$, because $f$ is separable modulo $\mathfrak{q}$.
Since $g_l \equiv h_l \bmod \mathfrak{q}$, it follows that $q\nmid h_l \bmod \mathfrak{q}$ and hence, $q\nmid h_l$ over $K$, for all $l \neq j$. But $q$ divides $f = h_1 \cdots h_t$ and since $K[x]$ is a unique factorization domain, it follows that $q\mid h_j$. 
The converse follows similarly. Hence $q\mid g_j$ if and only if, $q\mid h_j$. Since this holds for any irreducible factor $q$ of $f$ in $K[x]$ and $g_j,h_j$ are monic, the equality follows.
\end{proof}
\begin{remark} \label{inclusions}
When choosing the ideal $\mathfrak{q}$ we have to make sure that denominators of coefficients of $g_j$ and $h_j$ are not elements of $\mathfrak{q}$, otherwise the equation $g_j \equiv h_j \bmod \mathfrak{q}$ would return an error message. For $k=\mathbb{Q}$ and assuming $f$ monic, the following inclusions \[ \mathbb{Z}[\alpha] \subseteq \mathcal{O}_K \subseteq \frac{1}{f'(\alpha)}\cdot\mathbb{Z}[\alpha] \subseteq \frac{1}{\text{disc}(f)}\cdot \mathbb{Z}[\alpha],  \] where $\mathcal{O}_K$ is the ring of integers of $K$ and $\text{disc}(f)$ is the discriminant of $f$, and the fact that any factor of $f$ over $K$ is in $\mathcal{O}_K[x]$, by Gauss' Lemma, imply that it is enough to choose $\mathfrak{q}$ such that $\text{disc}(f) \not \equiv 0 \bmod \mathfrak{q}$.
\end{remark}

\begin{theorem}\label{theo:correctPi}
If Algorithm~\rm{\texttt{Partition}} \it finishes, the output $\tilde{P}_i$ is the partition defined by the principal subfield $L_{f_i}$.
\end{theorem}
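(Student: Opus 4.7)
The plan is to show $\tilde{P}_i = P_i$ by establishing two opposite bounds on the partitions and then applying the maximality clause in Definition \ref{def:subfieldpart}.

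First I would argue that the output partition $\tilde{P}_i$ refines the true partition $P_i$. By construction $\tilde{\mathcal{E}}$ is obtained from the same residue computation that produces $\mathcal{E}$, only evaluated at (possibly) fewer points and reduced modulo the good $k$-ideal $\mathfrak{p}$. By Corollary \ref{cor:e}, the indicator vector of every part of $P_i$ is a $\{0,1\}$ solution of $\mathcal{E}$, and its image modulo $\mathfrak{p}$ is therefore a solution of $\tilde{\mathcal{E}}$. Hence the solution space of $\tilde{\mathcal{E}}$ contains the $P_i$-indicator vectors, and any $\{0,1\}$-echelon basis $\{s_1,\ldots,s_t\}$ of that space gives a partition $\tilde{P}_i$ that refines $P_i$, so in particular $t \geq |P_i|$.

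Second I would show that when the check in lines 9--11 succeeds, every $\tilde{P}_i$-product $\tilde{g}_j$ belongs to $L_{f_i}[x]$. The relevant $\sigma_i$ is the embedding of $K$ into $K_i := K[x]/(f_i)$ sending $\alpha$ to the image of $x$; by Equation (\ref{eq:Lg}) a polynomial in $K[x]$ lies in $L_{f_i}[x]$ if and only if it is fixed by $\sigma_i$. Because $f \in k[x] \subseteq L_{f_i}[x]$, one has $\sigma_i(f) = f$, hence inside $K_i[x]$
\[ f \;=\; \tilde{g}_j \cdot \frac{f}{\tilde{g}_j} \;=\; \sigma_i(\tilde{g}_j) \cdot \frac{f}{\sigma_i(\tilde{g}_j)}. \]
Modulo the good $K_i$-ideal $\mathfrak{q}$ the polynomial $f$ is separable, so the congruence $\sigma_i(\tilde{g}_j) \equiv \tilde{g}_j \bmod \mathfrak{q}$ (together with the fact that both factors are monic) can be cancelled in the identity above to yield $f/\sigma_i(\tilde{g}_j) \equiv f/\tilde{g}_j \bmod \mathfrak{q}$ as well. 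Lemma \ref{lemma:modp} then lifts both congruences to the equality $\sigma_i(\tilde{g}_j) = \tilde{g}_j$ in $K_i[x]$, so $\tilde{g}_j \in L_{f_i}[x]$ and $\tilde{P}_i$ satisfies item~(1) of Definition \ref{def:subfieldpart}.

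Combining the two steps, $\tilde{P}_i$ is a refinement of $P_i$ all of whose products already lie in $L_{f_i}[x]$. By the maximality of $|P_i|$ this forces $|\tilde{P}_i| \leq |P_i|$, which together with $|\tilde{P}_i| \geq |P_i|$ from the first step yields equality of sizes; a refinement of the same size as the coarser partition must in fact equal it, so $\tilde{P}_i = P_i$. The main obstacle will be the second step: one has to identify precisely what $\sigma_i$ is, verify from Equation (\ref{eq:Lg}) that $\sigma_i$-invariance is equivalent to membership in $L_{f_i}[x]$, and then set up Lemma \ref{lemma:modp} with the paired factorizations above so that the single congruence tested at line 10 actually lifts to an equality of polynomials. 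Steps one and three are essentially bookkeeping once this is in place.
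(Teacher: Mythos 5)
Your overall strategy matches the paper's: show $\tilde{P}_i$ refines $P_i$ because $\tilde{\mathcal{E}}$ consists only of necessary conditions, then use the mod-$\mathfrak{q}$ check plus Lemma~\ref{lemma:modp} to conclude $\sigma_i(\tilde{g}_j)=\tilde{g}_j$, hence $\tilde{g}_j\in L_i[x]$, and finish by maximality. Your per-$j$ application of Lemma~\ref{lemma:modp} to the two-factor factorizations $f=\tilde{g}_j\cdot(f/\tilde{g}_j)=\sigma_i(\tilde{g}_j)\cdot(f/\sigma_i(\tilde{g}_j))$ is a clean variant of the paper's single application to the full $t$-fold factorization, and the cancellation step you need (congruent monic divisors of $f$ have congruent cofactors) is fine.

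However, there is a genuine gap: your argument silently assumes $K_i=K[y]/\langle f_i(y)\rangle$ is a field. The input to \texttt{Partition} is a subfield factorization, in which $f_i$ need \emph{not} be irreducible over $K$ (this is exactly the situation produced by \texttt{SubfFact} in Section~\ref{sec:5}). When $f_i$ is reducible, $K_i$ is only a product of fields, the notion of a ``good $K_i$-ideal'' in Definition~\ref{def:GoodIdeal} is not defined, and Lemma~\ref{lemma:modp} --- whose proof uses that $K[x]$ is a UFD over a field $K$ --- does not apply directly. The paper devotes the second half of its proof to this case: write $f_i=f_{i_1}\cdots f_{i_s}$ with $f_{i_m}$ irreducible, set $K_{i_m}=K[y]/\langle f_{i_m}(y)\rangle$ and $\sigma_{i_m}\colon K\to K_{i_m}$, observe that $\tilde{g}_j\in L_i[x]$ iff $\sigma_{i_m}(\tilde{g}_j)=\tilde{g}_j$ for all $m$ (by separability and the CRT, cf.\ Remark~\ref{remark1}), apply Lemma~\ref{lemma:modp} in each field $K_{i_m}=K[\alpha_{i_m}]$ with $\mathfrak{q}$ a good $K$-ideal, and finally use separability of $f\bmod\mathfrak{q}$ to see that the conjunction of the congruences $\sigma_{i_m}(\tilde{g}_j)\equiv\tilde{g}_j\bmod\mathfrak{q}$ is equivalent to the single congruence $\sigma_i(\tilde{g}_j)\equiv\tilde{g}_j\bmod\mathfrak{q}$ actually tested in line~10. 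Without this reduction your proof only covers the case where every $f_i$ is irreducible.
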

\begin{proof}
Let $f_1,\ldots, f_r \in K[x]$ be a subfield factorization of $f$ and let $P_i$ be the (correct) partition defined by $L_{f_i}$. By reducing the number of equations and solving the linear system $\tilde{\mathcal{E}}$ over $\textbf{F}$, the partition $\tilde{P}_i$ at Step~6 of Algorithm \texttt{Partition} is either $P_i$ or a proper refinement of $P_i$. Let $K_i:=K[y]/\left\langle f_i(y) \right\rangle$ and define $\sigma_i:K \rightarrow K_i$, $\sigma_i(\alpha)= y+(f_i)$. With this notation, Equation~(\ref{eq:Lg}), with $g=f_i$, becomes \begin{equation}
 L_i=L_{f_i} =\left\{ h(\alpha) : h\in k[x]_{<n},\; \sigma_i(h(\alpha)) = h(\alpha)\right\}.
 \end{equation}
Let $\tilde{P}_i = \{\tilde{P}^{(1)},\ldots, \tilde{P}^{(t)} \}$ be the partition defined by the $\{0,1\}$-echelon basis of $\tilde{\mathcal{E}}$ (if there is no such basis, the algorithm computes more equations in Step~4) and let $\tilde{g}_1,\ldots ,\tilde{g}_t\in K[x]$ be the $\tilde{P}_i$-products. In order for $\tilde{P}_i$ to satisfy Definition~\ref{def:subfieldpart} (i.e., $\tilde{P}_i = P_i$), it suffices to show that $\tilde{g}_j \in L_i[x]$, for $1\leq j \leq t$ (the maximality of $t$ will follow from the fact that $\tilde{P}_i$ refines $P_i$). That is, with the notations above, we need to show that $\sigma_i(\tilde{g}_j)=\tilde{g}_j$, where $\sigma_i$ acts on $\tilde{g}_j\in K[x]$ coefficient-wise. Since \begin{equation}
 \tilde{g}_1 \cdots \tilde{g}_t = f =\sigma_i(f) = \sigma_i(\tilde{g}_1)\cdots \sigma_i(\tilde{g}_t)
 \end{equation} over $K_i$, we can choose a good $K_i$-ideal $\mathfrak{q}$ and use Lemma~\ref{lemma:modp} to show that we only need to verify whether
\begin{equation}\label{eq:sigma}
\sigma_i(\tilde{g}_j)\equiv \tilde{g}_j \bmod \mathfrak{q}.
\end{equation}
Hence, if $\{s_1,\ldots, s_t\}$ is a $\{0,1\}$-echelon basis of $\tilde{\mathcal{E}}$ and if (\ref{eq:sigma}) holds for $j=1,\ldots, t$, then $\tilde{P}_i$ is the partition defined by $L_i$.

If $K_i$ is not a field, we cannot directly apply Lemma~\ref{lemma:modp}. Let $f_i=f_{i_1} \cdots f_{i_s}$, with $f_{i_m}\in K[x]$ irreducible, $m=1,\ldots, s$. Let $K_{i_m}:=K[y]/\left\langle f_{i_m}(y) \right\rangle$ and define $\sigma_{i_m}:K\rightarrow K_{i_m}$ as above. Since $f$ is separable, it follows that $\tilde{g}_j\in L_i[x]$ if and only if, $\sigma_{i_m}(\tilde{g}_j)= \tilde{g}_j$, $m=1,\ldots, s$. To use Lemma~\ref{lemma:modp}, we would need $\mathfrak{q}$ to be a good $K_{i_m}$-ideal. However, we can view $K_{i_m}=K[\alpha_{i_m}]$, where $\alpha_{i_m}$ is a root of $f_{i_m}$, and choose $\mathfrak{q}$ to be a good $K$-ideal. Thus, by Lemma~\ref{lemma:modp} (with $\sigma_{i_m}$ instead of $\sigma_i$ in the argument above and $\mathfrak{q}$ a good $K$-ideal), it follows that $\sigma_{i_m}(\tilde{g}_j)=\tilde{g}_j$, if and only if, $\sigma_{i_m}(\tilde{g}_j) \equiv \tilde{g}_j \bmod \mathfrak{q}$, $m=1,\ldots, s$. Since $f\bmod \mathfrak{q}$ is separable, this is equivalent to $\sigma_i(\tilde{g}_j)\equiv \tilde{g}_j \bmod \mathfrak{q}$. That is, if the $\tilde{P}_i$-products satisfy Equation~(\ref{eq:sigma}), then $\tilde{P}_i$ is the partition of $L_i$.
\end{proof}

We were not able to bound the number of calls to \texttt{EquationsModP} when computing the partition $P_i$. However, based on our experiments for $k=\mathbb{Q}$, the average number of calls to \texttt{EquationsModP} appears to be bounded by a constant (in fact, this number never exceeded $3$ in our examples). For this reason, we shall assume that the number of calls to \texttt{EquationsModP} is $\mathcal{O}(1)$.

\begin{theorem} \label{THM_P_i}
Assuming that the number of calls to \rm\texttt{EquationsModP} \it is bounded by a constant, when $k=\mathbb{Q}$, the number of CPU operations for computing $P_i$ is
\[\tilde{\mathcal{O}}(n(n^2+n\log \|f\|+d_i r^{\omega-1}))\]
where we omit $\log p$ factors in $\tilde{\mathcal{O}}$ notation (to bound $p$ see Remark~\ref{RemarkP} below).
\end{theorem}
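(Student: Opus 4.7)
I will bound the cost of a single execution of Steps~1--11 of Algorithm~\ref{alg:partition}; by the standing hypothesis that \texttt{EquationsModP} is called $\mathcal{O}(1)$ times, the overall bound follows with the same order. The pass splits into three blocks: (A) one call to \texttt{EquationsModP}, (B) the $\{0,1\}$-echelon reduction of $\tilde{\mathcal{E}}$, and (C) the mod-$\mathfrak{q}$ correctness check of Steps~7--11.

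For (A), I evaluate $f_j(c)$ and $f_j'(c)$ in $\mathbf{F}[\alpha]$ by Horner's rule for each $j\leq r$, and invert $f_j(c)$ there; each $r_j(x):=f_j'(c)/f_j(c)$ then costs $\tilde{\mathcal{O}}(d_j)$ operations in $\mathbf{F}[\alpha]$, i.e.\ $\tilde{\mathcal{O}}(nd_j)$ $\mathbf{F}$-operations by assumption~2, which sums to $\tilde{\mathcal{O}}(n^2)$ since $\sum_j d_j=n$. Fast polynomial division then yields $\mathrm{rem}(r_j(x),f_i)\in\mathbf{F}[\alpha][x]_{<d_i}$ in $\tilde{\mathcal{O}}(n)$ operations in $K$ (reduced mod $\mathfrak{p}$) per $j$, hence $\tilde{\mathcal{O}}(n^2)$ $\mathbf{F}$-operations per $j$ and $\tilde{\mathcal{O}}(n^3)$ in total (using $r\leq n$). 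Reading off the coefficients of $x^0,\ldots,x^{d_i-1}$ and of $1,\alpha,\ldots,\alpha^{n-1}$ produces $\tilde{\mathcal{E}}$ as a system of $\mathcal{O}(nd_i)$ $\mathbf{F}$-linear equations in $r$ unknowns. For (B), assumption~3 yields its reduced echelon basis --- which is the $\{0,1\}$-echelon basis by Remark~\ref{rem:echelon} --- in $\mathcal{O}(nd_i\cdot r^{\omega-1})$ $\mathbf{F}$-operations.

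For (C), the dominant cost is converting $K$-coefficients of the $f_l$ to residues modulo $\mathfrak{q}$. By Remark~\ref{inclusions} (together with Gauss' lemma) a bounded scaling places each coefficient of each $f_l$ inside $\mathbb{Z}[\alpha]$, and a Mignotte-type estimate for the $K$-factors of $f$ bounds each of its $n$ $\mathbb{Q}$-coordinates by $\tilde{\mathcal{O}}(\log\|f\|)$ bits. Since $\sum_l(\deg f_l+1)=\mathcal{O}(n)$, there are $\mathcal{O}(n^2)$ such scalars, and reducing all of them modulo $\mathfrak{q}$ costs $\tilde{\mathcal{O}}(n^2\log\|f\|)$ residue-field operations after hiding $\log p$. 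A subproduct tree then builds the $t\leq r$ products $\tilde{g}_j$ in $\tilde{\mathcal{O}}(n)$ operations in the $\alpha$-extension of the residue field, i.e.\ $\tilde{\mathcal{O}}(n^2)$ base-field operations; and applying $\sigma_i$ coefficient-wise (after precomputing $y^l\bmod f_i$ for $l<n$) contributes $\tilde{\mathcal{O}}(n^2 d_i)\subseteq\tilde{\mathcal{O}}(n^3)$. Summing (A)--(C) yields
\[
\tilde{\mathcal{O}}(n^3+n^2\log\|f\|+nd_i r^{\omega-1})=\tilde{\mathcal{O}}(n(n^2+n\log\|f\|+d_i r^{\omega-1})),
\]
which is the claim.

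The main obstacle is the bit-size bookkeeping in (C): one must combine Remark~\ref{inclusions} with a Mignotte-type estimate adapted to $K[x]$ to guarantee the $\tilde{\mathcal{O}}(\log\|f\|)$ bound on each $\mathbb{Q}$-coordinate of each coefficient of the $f_l$, and one must check that a good $K$-ideal $\mathfrak{q}$ of bit-size consistent with Remark~\ref{RemarkP} exists so that the $\log p$ factors hidden by our $\tilde{\mathcal{O}}$ convention remain benign.
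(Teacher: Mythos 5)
Your proposal is correct and follows essentially the same decomposition as the paper's own proof: bound one call to \texttt{EquationsModP}, bound the $\{0,1\}$-echelon reduction of the $\mathcal{O}(nd_i)\times r$ system via assumption~3, bound the mod-$\mathfrak{q}$ check of Steps~7--11, and multiply by the assumed $\mathcal{O}(1)$ number of calls. One small correction that does not affect the conclusion: by Lemma~\ref{lemma:bound2} the integer coordinates of the coefficients of $f'(\alpha)f_l$ are bounded by $n4^n\|f\|_2^2$, i.e.\ $\mathcal{O}(n+\log\|f\|_2)$ bits rather than $\tilde{\mathcal{O}}(\log\|f\|)$, and the same reduction cost already arises in your block (A) when the coefficients of the $f_j$ are first mapped into $\mathbf{F}[\alpha]$; both adjustments contribute $\tilde{\mathcal{O}}(n^2(n+\log\|f\|))$, which is absorbed by the stated bound.
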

\begin{proof}
To prove this we first bound the cost of calling Algorithm \texttt{EquationsModP}. The integer coefficients of $f'(\alpha) f_j\in \mathbb{Z}[\alpha][x]$ can be bounded by $n4^n\|f\|^2$ (see Lemma~\ref{lemma:bound2}, Appendix~\ref{App:C}). Hence, computing $f'(\alpha)f_j(c)$ modulo $p$, for $1\leq j \leq r$, has a cost of $\tilde{\mathcal{O}}(n^2(n+\log\|f\|))$ CPU operations.
Then 
the divisions $f'_j(c)/f_j(c)\bmod p$ in Step~3 of \texttt{EquationsModP} can be executed with $\tilde{\mathcal{O}}(rn)$
CPU operations and Step~4 has a cost of $\tilde{\mathcal{O}}(rn^2)$ CPU operations.
One call of \texttt{EquationsModP} has a cost of $\tilde{\mathcal{O}}(n^2(n+\log \|f\|))$ CPU operations. In our experiments, the number of calls to
algorithm \texttt{EquationsModP} from Algorithm \texttt{Partition} was never more than 3. Usually 1 call sufficed to find the partition $P_i$. In this case, the system $\tilde{\mathcal{E}}$ has at most $nd_i$ equations in $r$ variables, where $d_i=\deg(f_i)$. Hence, a solution basis can be found with $\tilde{\mathcal{O}}(nd_i r^{\omega-1})$ CPU operations. The cost of Steps 7-11 in Algorithm \texttt{Partition} is given by the cost of computing the polynomials $\tilde{g}_j,1 \leq j \leq t$, which can be done with at most $r-1$ polynomial multiplications in $\mathbb{F}_p(\alpha)[x]$, and the cost of $nt$ divisions in $\mathbb{F}_p[x]$.
\end{proof}

\begin{remark} \label{RemarkP}
One could design the algorithm to work with any $p$ for which $f$ is separable mod $p$, with $p$ not dividing the leading coefficient.
Then $\log p$ can be bounded as $\mathcal{O}( {\rm log}(n+||f||) )$ by
Equation~(3.9) in \cite{LLL}.  But it is best to select $p$ for which $f$ has a root mod $p$.
The probability that $f$ has a root mod a random prime $p$
is asymptotically at least $1/n$ by Chebotarev's density theorem.
With the (unproven, but true in experiments) assumption that this probability is not much smaller for small $p$,
the expected size for $\log p$ is still bounded by $\mathcal{O}( {\rm log}(n+||f||) )$.
\end{remark}

\section{A General Algorithm and Generators}\label{sec:4}
In this section, we combine the ideas of Sections~\ref{sec:2} and \ref{sec:3} and give a general algorithm for computing all subfields of $K/k$. Given a partition $P_L$, we also present an algorithm in Subsection~\ref{ssec:generators} for computing a set of generators for $L$.

\subsection{The \texttt{Subfields} Algorithm}

The algorithm \texttt{Subfields} below returns a set of partitions representing every subfield of $K/k$. This is particularly useful if one wants the subfield lattice of the extension $K/k$. On the other hand, these partitions and the subfield factorization of $f$ allow us to give the subfield polynomial of each subfield of $K/k$ in (partially) factored form. 

\begin{algorithm}[H]
\caption{\texttt{Subfields}.}\label{alg:general}
\begin{algorithmic}
\STATE \textbf{Input:} An irreducible squarefree polynomial $f\in k[x]$.
\STATE \textbf{Output:} A data structure that lists all subfields of $K/k$ (by giving their  \newline subfield polynomial in factored form).

\vspace{0.2cm}
\STATE \text{1. Compute a subfield factorization $f_1\cdots f_r$ of $f$ in $K[x]$.}
\STATE \text{2. \textbf{for} $i=1,\ldots, r$ \textbf{do}}
\STATE \text{3. \;\;\;\; Compute the partition $P_i$ using algorithm $\texttt{Partition}$.}
\STATE \text{4. $S_0:=\{P_1,\ldots, P_r\}$.}
\STATE \text{5. $S:=S_0$.}
\STATE \text{6. \textbf{for} $P$ in $S_0$ \textbf{do}}
\STATE \text{7. \;\;\;\; $S:=S \cup \{P \vee Q : Q \in S\}$.}
\STATE \text{8. \textbf{return} $S$ and $[f_1,\ldots, f_r]$.}
\end{algorithmic}
\end{algorithm}

Next, we analyze the complexity of Algorithm \texttt{Subfields} for the case $k=\mathbb{Q}$.

\begin{theorem}\label{theo:principal}
Let $m$ be the number of subfields of $K/k$. 
Under the assumptions in Theorem~\ref{THM_P_i} and~\ref{ComplexityMethod2}, when $k=\mathbb{Q}$, Algorithm \rm{\texttt{Subfields}} \it performs $\tilde{\mathcal{O}}(rn^5(n+\log \|f\|_2)^2+mr^2)$ CPU operations, where $n$ is the degree of the extension $K/k$ and $r$ is the number of factors in the subfield factorization.
\end{theorem}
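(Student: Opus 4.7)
The plan is to account for the cost of each of the three distinct phases of Algorithm \texttt{Subfields} and then combine the bounds. The first two phases will contribute the $rn^5(n+\log\|f\|_2)^2$ term, while the join loop will contribute the $mr^2$ term.

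First I would bound Step~1. By Theorem~\ref{ComplexityMethod2} (method~2 of Remark~\ref{RemarkKhat}), a subfield factorization $f_1,\ldots, f_r$ of $f$ over $K$ can be obtained in $\tilde{\mathcal{O}}(rn^5(n+\log\|f\|_2)^2)$ CPU operations. This single bound already matches the first summand in the claim, so the rest of the argument only has to verify that Steps~2--3 do not exceed it and that Steps~4--7 contribute exactly $\tilde{\mathcal{O}}(mr^2)$.

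Next I would bound Steps~2--3. By Theorem~\ref{THM_P_i}, one call of \texttt{Partition} with index $i$ costs $\tilde{\mathcal{O}}(n(n^2+n\log\|f\|+d_i r^{\omega-1}))$, where $d_i=\deg(f_i)$. Summing over $i=1,\ldots,r$ and using $\sum_{i=1}^r d_i=n$ together with $\omega\le 3$ gives
\[
\tilde{\mathcal{O}}\bigl(rn^3+rn^2\log\|f\|+n^2 r^{\omega-1}\bigr),
\]
which is clearly absorbed into $\tilde{\mathcal{O}}(rn^5(n+\log\|f\|_2)^2)$ since $r\le n$.

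The third and most delicate piece is Steps~4--7. The outer loop runs exactly $|S_0|=r$ times, and on iteration $P$ it computes the join $P\vee Q$ for each $Q$ currently in $S$. Since the correctness of the algorithm (via Theorems~\ref{theo:principal1} and~\ref{theo:intersect}) ensures that $S$ never grows beyond $m$ distinct partitions, the total number of joins performed is at most $rm$. By the Freese result cited after Theorem~\ref{theo:intersect}, each join can be computed in $\tilde{\mathcal{O}}(r)$ CPU operations, yielding $\tilde{\mathcal{O}}(mr^2)$ in total. To make this rigorous I would represent $S$ by the partition-vector encoding of the footnote in Section~\ref{ssec:sf} and store it in a hash table keyed by that $\mathcal{O}(r\log r)$-bit string, so that the set-membership test needed to avoid re-adding duplicates also costs $\tilde{\mathcal{O}}(r)$ per comparison, not affecting the overall bound.

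Summing the three contributions yields the claimed bound $\tilde{\mathcal{O}}(rn^5(n+\log\|f\|_2)^2+mr^2)$. The main obstacle I anticipate is not any single arithmetic estimate but the bookkeeping in Steps~4--7: one must be certain that iterating exactly once through $S_0$, while each time taking joins against the \emph{current} $S$, produces the full join-closure of $\{P_1,\dots,P_r\}$ (hence all $m$ subfields by Theorems~\ref{theo:principal1} and~\ref{theo:intersect}), and that duplicate partitions are detected in time $\tilde{\mathcal{O}}(r)$ rather than $\mathcal{O}(r^2)$ or worse; otherwise an extra factor of $r$ would creep into the second term.
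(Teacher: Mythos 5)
Your proposal is correct and follows essentially the same decomposition as the paper's proof: Theorem~\ref{ComplexityMethod2} for Step~1, Theorem~\ref{THM_P_i} summed over the $r$ partitions for Steps~2--3, and the bound of at most $rm$ joins at $\tilde{\mathcal{O}}(r)$ each for Steps~6--7. The extra details you supply (the explicit summation using $\sum d_i = n$, and the hashing/closure bookkeeping) go slightly beyond what the paper writes out but do not change the argument.
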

\begin{proof}
In Step~1 we have to compute a subfield factorization of $f$ over $K$. Using Algorithm \texttt{SubfFact} presented in Section~\ref{sec:5}, this step can be executed with $\tilde{\mathcal{O}}(rn^5(n+\log \|f\|)^2)$ CPU operations. In Steps 2-3 we have to compute $r$ partitions, where each partition can be computed with an expected number of $\tilde{\mathcal{O}}(n(n^2+n\log \|f\|+d_i r^{\omega-1}))$ CPU operations, where $d_i$ is the degree of $f_i$. Finally, the set $S$ never has more than $m$ elements, and the set $S_0$ has at most $r$ elements. Therefore, the number of times we compute $P \vee Q$ is bounded by $r m$. Since the cost of each partition join is $\tilde{\mathcal{O}}(r)$, the cost of Steps 6-7 is given by $ \tilde{\mathcal{O}}(mr^2)$ CPU operations. 
\end{proof}

Steps $6$ and $7$ find the partitions of all subfields from the partitions of the principal subfields of $K/k$. However, the same subfield might be computed several times. A more elaborate way to compute all subfields from the principal subfields, avoiding the computation of the same subfield several times, is given in \cite{HKN} (though the bound for the number of intersections/joins is the same).

Since the number of subfields $m$ is not polynomially bounded, the theoretical worst-case complexity is dominated by the cost of all intersections of the principal subfields $L_1,\ldots, L_r$. Since each subfield is represented by a partition and the intersection of subfields can be computed by joining partitions, we were able to improve the theoretical complexity. Moreover, computing all subfields using partitions only contributes to a small percentage of the total CPU time.

\subsection{From a Partition to a Subfield}\label{ssec:generators}

In addition to returning the subfield lattice (in terms of partitions), one can also compute generators for any subfield of $K/k$. Let $f_1,\ldots, f_r$ be a subfield factorization and let $L_1,\ldots, L_r$ be the principal subfields. Given a partition $P_L$, corresponding to a subfield $L$ of $K/k$, one can find a set of generators of $L$ by expanding the subfield polynomial $g_L$ of $L$ (recall that $g_L=\prod_{j\in P_L^{(1)}} f_j$) and taking its coefficients (see Theorem~\ref{theo:equiv}). This gives us the following algorithm.

\begin{algorithm}[H]
\caption{\texttt{Generators} (Slow version).}
\begin{algorithmic}
\STATE \textbf{Input:} Subfield factorization $f_1,\ldots, f_r$ of $f$ and the partition $P_L$.
\STATE \textbf{Output:} A set of generators of the subfield $L$ of $K/k$.
\vspace{0.2cm}
\STATE \text{1. Compute $g_L:=\prod_{j\in P_L^{(1)}} f_j$.}
\STATE \text{2. \textbf{return} the set of coefficients of $g$.}
\end{algorithmic}
\end{algorithm}

However, expanding the subfield polynomial can be an expensive task, especially when $g_L$ has high degree. Alternatively, one can compute only a few (easy to compute) coefficients of $g_L$ (for example, if $d=\deg(g_L)$, then the coefficient of $x^{d-1}$ and the trailing coefficient are easy to compute from the partial factorization of $g_L$) or one can compute $g_L(c)=\prod_{i\in P_L^{(1)}} f_i(c)$, for $c\in k$, for as many $c$ as we want. Let us denote by \texttt{NextElem}( ) a procedure that returns elements of $L$. What we need now is a practical criterion that tells us when a set of elements of $L$ generates $L$.

\begin{theorem}\label{theo:gen}
Let $\beta_1,\ldots, \beta_s \in L$ and let $P_L$ be the partition defined by $L$. Then $L=k(\beta_1,\ldots, \beta_s)$ if and only if, for any $j\notin  P_L^{(1)}$ there exists $l\in \{1,\ldots, s\}$ such that $\beta_l\notin L_j$. 
\end{theorem}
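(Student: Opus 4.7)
The plan is to reinterpret both $L$ and $M := k(\beta_1,\ldots,\beta_s) \subseteq L$ as intersections of principal subfields via Theorem~\ref{theo:principal1}, and then to check that the stated hypothesis is exactly what forces the two corresponding index sets of principal subfields to coincide. The crucial preliminary step is the identification
\[
 J_L \;:=\; \{j \in \{1,\ldots,r\} : L \subseteq L_j\} \;=\; P_L^{(1)},
\]
after which everything is formal.

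To prove this identification I would argue both inclusions directly. For ``$\supseteq$'': if $j \in P_L^{(1)}$, then $f_j$ divides the first $P_L$-product, which by Theorem~\ref{lemma:subfieldpoly} is the subfield polynomial $g_L$ of $L$. Since $L = L_{g_L}$, any $h \in k[x]_{<n}$ with $g_L \mid h(x) - h(\alpha)$ automatically satisfies $f_j \mid h(x) - h(\alpha)$, so $h(\alpha) \in L_{f_j} = L_j$, giving $L \subseteq L_j$. For ``$\subseteq$'': if $L \subseteq L_j$, Lemma~\ref{lemma:subset} says $P_j$ refines $P_L$, and since $1$ lies in both $P_j^{(1)}$ and $P_L^{(1)}$, refinement forces $P_j^{(1)} \subseteq P_L^{(1)}$. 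The remaining statement $j \in P_j^{(1)}$, i.e.\ that $f_j$ divides the subfield polynomial of $L_{f_j}$, follows from the appendix lemma already invoked in the proof of Lemma~\ref{lemma:SubPoly}.

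With the identification in hand, observe that $M \subseteq L_j$ iff $\beta_l \in L_j$ for every $l$. The hypothesis ``for every $j \notin P_L^{(1)}$ there is some $l$ with $\beta_l \notin L_j$'' therefore translates to $\{j : M \subseteq L_j\} \subseteq P_L^{(1)} = J_L$. Since $M \subseteq L$ gives the reverse inclusion $J_L \subseteq \{j : M \subseteq L_j\}$ automatically, the hypothesis is equivalent to $\{j : M \subseteq L_j\} = J_L$. Applying Theorem~\ref{theo:principal1} to both $M$ and $L$ gives
\[
 M \;=\; \bigcap_{j \,:\, M \subseteq L_j} L_j \qquad \text{and} \qquad L \;=\; \bigcap_{j \,:\, L \subseteq L_j} L_j,
\]
so equality of these two index sets is equivalent to $M = L$, which is what we wanted. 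The main obstacle is the identification $J_L = P_L^{(1)}$, the bridge between the containment side of the subfield lattice and the partition side used throughout the algorithm; once that is in place, the rest is bookkeeping around Theorem~\ref{theo:principal1}.
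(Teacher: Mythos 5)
Your proof is correct. The substance of both arguments is the same bridge between the lattice side and the partition side, namely $\{j : L \subseteq L_j\} = P_L^{(1)}$, but you organize the deduction differently from the paper. The paper uses the inclusion $L \subseteq L_j \Rightarrow j \in P_L^{(1)}$ implicitly (its opening ``Notice that $L\cap L_j \subsetneq L$ for $j \notin P_L^{(1)}$'' is exactly the contrapositive, left unjustified), and for the converse it runs a contradiction argument on the partition of $\tilde L := k(\beta_1,\ldots,\beta_s)$: from $\tilde L \subsetneq L$ and Lemma~\ref{lemma:subset} it gets $P_L^{(1)} \subsetneq P_{\tilde L}^{(1)}$, picks $j$ in the difference, and uses the other direction of the bridge (applied to $\tilde L$) to contradict the hypothesis. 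You instead prove the identification $J_L = P_L^{(1)}$ explicitly in both directions (via Theorem~\ref{lemma:subfieldpoly}, Lemma~\ref{lemma:subset}, and Lemma~\ref{lemma:apx} --- all legitimate citations) and then close the argument with Theorem~\ref{theo:principal1}, reducing everything to equality of the index sets $\{j : M \subseteq L_j\}$ and $J_L$. Your route makes the key fact a reusable standalone statement and renders the final step pure bookkeeping, at the cost of invoking Theorem~\ref{theo:principal1} (which the paper's proof of this theorem does not need); the paper's route stays entirely inside the partition formalism already developed in Section~2. The one point worth stating a touch more carefully in your write-up is that Theorem~\ref{theo:principal1} gives $L = \bigcap_{i\in I} L_i$ for \emph{some} $I$, and one then observes $I \subseteq J_L$ to conclude $L = \bigcap_{j\in J_L} L_j$; this is a one-line squeeze, but it is the step that licenses your two displayed intersection formulas.
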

\begin{proof}
Notice that $L\cap L_j \subsetneq L$, for any $j\notin P_L^{(1)}$. Hence, if there exists some $j\notin P_L^{(1)}$ such that $\beta_i \in L_j$, for every $\beta_i$, then $ k(\beta_1\ldots,\beta_s)\subseteq L\cap L_j \subsetneq L. $

Conversely, let $\beta_1\ldots, \beta_s\in L$ be such that for any $j\notin P_L^{(1)}$, there exists $\beta_i$ such that $\beta_i\notin L_j$. Let $\tilde{L}:=k(\beta_1\ldots, \beta_s)$ and suppose that $\tilde{L}\subsetneq L$. Let $P_{\tilde{L}}$ be the partition defined by $\tilde{L}$. By Lemma~\ref{lemma:subset} we have $P_L^{(1)}\subsetneq P_{\tilde{L}}^{(1)}$ and hence, there exists $j\in P_{\tilde{L}}^{(1)}$ such that $j\notin P_L^{(1)}$ and $\beta_i \in L_j$, for any $i\in P_L^{(1)}$, which is a contradiction. Therefore, $L=k(\beta_1\ldots, \beta_s).$
\end{proof}

Recall that for any element $\beta\in K$, there exists $g(x)\in k[x]_{<n}$ such that $\beta=g(\alpha)$ and that $\beta \in L_j$ if and only if, $g(x) \equiv g(\alpha) \bmod f_j$. To show that $\beta\notin L_j$, it suffices to show that \[g(x) \not \equiv g(\alpha) \bmod (f_j,\mathfrak{p}),\] where $\mathfrak{p}$ is as in Definition \ref{def:GoodIdeal}. Theorem~\ref{theo:gen} allows us to write an algorithm for computing a set of generators of $L$.

\begin{algorithm}
\caption{\texttt{Generators}.}
\begin{algorithmic}
\STATE \textbf{Input:} Subfield factorization $f_1,\ldots, f_r$ of $f$ and the partition $P_L$.
\STATE \textbf{Output:} A set of generators of the subfield $L$ of $K/k$.
\vspace{0.2cm}
\STATE \text{1. $S:=\emptyset$.}
\STATE \text{2. $J:=\{1,\ldots,r \} - P_L^{(1)}$.}
\STATE \text{3. $\beta:=\texttt{NextElem}(\;)$, where $\beta=g(\alpha), \text{ for some }g(x)\in k[x]_{<n}$.}
\STATE \text{4. $S:=S \cup \{\beta\}$.}
\STATE \text{5. \textbf{for} $j\in J$ \textbf{do}}
\STATE \text{6. \;\;\;\; \textbf{if} $g(x)\not \equiv g(\alpha)\bmod (f_j,\mathfrak{p})$ \textbf{then} $J:=J-\{j\}$.} 
\STATE \text{7. \textbf{if} $J\neq \emptyset$ \textbf{then} Go to Step 3 \textbf{else} \textbf{return} $S$.}
\end{algorithmic}
\end{algorithm}

\begin{theorem}
The output of Algorithm \rm{\texttt{Generators}} is a set $S\subseteq L$ which generates $L$.
\end{theorem}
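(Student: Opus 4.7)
The plan is to separate correctness into two invariants: (a) every element ever placed into $S$ lies in $L$, and (b) once the algorithm exits the loop (that is, $J = \emptyset$) the collected set $S$ meets the generating criterion of Theorem~\ref{theo:gen}. Both reduce to facts already established earlier in the paper.

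For (a), I would appeal to the design of \texttt{NextElem}: as discussed immediately before the algorithm, natural choices return either coefficients of the subfield polynomial $g_L = \prod_{j \in P_L^{(1)}} f_j$, or specializations $g_L(c)$ for $c\in k$. Since $g_L \in L[x]$ by Theorem~\ref{theo:equiv}, in either case the returned $\beta$ lies in $L$. Thus $S \subseteq L$ is maintained as an invariant throughout the loop.

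For (b), I would interpret Step~6 as follows. If the index $j$ is removed from $J$ at the iteration with current element $\beta = g(\alpha)$, then $g(x) \not\equiv g(\alpha) \bmod (f_j,\mathfrak{p})$. Were it the case that $g(x) \equiv g(\alpha) \bmod f_j$ in $K[x]$, reducing coefficients modulo $\mathfrak{p}$ would also produce a congruence modulo $(f_j,\mathfrak{p})$; contrapositively, the modular check forces $g(x) \not\equiv g(\alpha) \bmod f_j$, which by Equation~(\ref{eq:Lg}) means $\beta \notin L_j$. Consequently, upon termination every $j \in \{1,\ldots,r\} - P_L^{(1)}$ has been witnessed by some $\beta_l \in S$ with $\beta_l \notin L_j$, and Theorem~\ref{theo:gen} immediately delivers $L = k(S)$.

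The main obstacle is not correctness of the witnessing step but termination of the outer loop: one must argue that \texttt{NextElem} will eventually supply elements that exclude each $L_j$ for $j \notin P_L^{(1)}$. By Lemma~\ref{lemma:subset}, $L \cap L_j \subsetneq L$ for every such $j$, so $L$ cannot sit inside $L_j$; in particular, if \texttt{NextElem} is implemented so as to eventually enumerate all coefficients of $g_L$, then not all of them can lie in $L_j$ (else $L \subseteq L_j$), so some coefficient witnesses $\beta \notin L_j$. Any reasonable implementation along these lines guarantees that every index in $J$ is eventually removed, completing the proof.
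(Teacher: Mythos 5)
Your proof is correct and follows essentially the same route as the paper: the modular non-congruence in Step~6 implies $g(x)\not\equiv g(\alpha)\bmod f_j$ and hence $\beta\notin L_j$, so upon termination every $j\notin P_L^{(1)}$ is witnessed and Theorem~\ref{theo:gen} applies. Your extra remarks on $S\subseteq L$ and on termination are consistent with the paper, which handles termination separately by noting the algorithm is not guaranteed to finish as stated.
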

\begin{proof}
If $g(x)\not \equiv g(\alpha)\bmod (f_j,\mathfrak{p})$ in Step 6, then $g(x)\not \equiv g(\alpha)\bmod f_j $ and hence, $g(\alpha)\notin L_j$. If $S$ is the output of Algorithm \texttt{Generators}, then for any $j\notin P_L^{(1)}$, there exists $\beta\in S$ such that $\beta\notin L_j$. By Theorem~\ref{theo:gen}, $S$ generates $L$.
\end{proof}

Algorithm \texttt{Generators}, as it is stated, is not guaranteed to finish. If the algorithm has not found a generating set after a certain number of elements computed, one could compute the subfield polynomial and return its coefficients.

\section{The Number Field Case}\label{sec:5}

In this Section, $k = \mathbb{Q}$ and $K = \mathbb{Q}(\alpha)$ with $f \in \mathbb{Z}[x]$ irreducible, and $f(\alpha)=0$.
A subfield factorization can be obtained by fully factoring $f$ in $K[x]$ (e.g. with \cite{Trager} or \cite{belabas}).
This Section shows how one can find a subfield factorization with LLL, without fully factoring $f$ over $K$.
With a fast implementation of \cite{belabas} (see Section~6.1 for timings), the reader may simply want to use that, and skip this Section (this Section is still useful
to get a complexity estimate missing in \cite{belabas}).

 We start by choosing a prime $p$ such that $p$ does not divide the leading coefficient of $f \in \mathbb{Z}[x]$, $f\bmod p$ is separable and has at least one linear factor in $\mathbb{F}_p[x]$, denoted as $\bar{f}_1$.
Let $\hat{K} = \mathbb{Q}_p$ be the field of $p$-adic numbers. The factorization $\bar{f}_1, \ldots, \bar{f}_{\hat{r}}$ of $f\bmod p$ lifts to a factorization $\hat{f}_1 \cdots \hat{f}_{\hat{r}}$ of $f$ into irreducible factors over $\mathbb{Q}_p$, with $\hat{f}_1$ linear. We can only compute $p$-adic factors with finite accuracy. For $i=1,\ldots, \hat{r}$ and a positive integer $a$, let $\hat{f}_i^{(a)}\in \mathbb{Z}[x]$ be an approximation of $\hat{f}_i$ with accuracy $a$, that is, $\hat{f}_i^{(a)}\equiv \hat{f}_i \bmod p^a$. 
 By mapping $\alpha\in \mathbb{Q}(\alpha)$ to the root $\hat{\alpha}$ of $\hat{f}_1$ in $\mathbb{Q}_p$, we can view $K=\mathbb{Q}(\alpha)$ as a subfield of $\hat{K}=\mathbb{Q}_p$.  
 
 For $g\in \mathbb{Q}(\alpha)[x]$, we will denote by $\bar{g}\in \mathbb{F}_p[x]$, the image of $g$ under the map $\alpha \rightarrow \bar{\alpha}$, where $\bar{\alpha}$ is the root of $\bar{f}_1$, and by $\hat{g}\in \mathbb{Q}_p[x]$, the image of $g$ under the map $\alpha \rightarrow \hat{\alpha}$, where $\hat{\alpha}$ is the root of $\hat{f}_1$. Furthermore, for $g,h\in \mathbb{Q}(\alpha)[x]$, we denote by $\gcdp(g,h)$ the $\gcd$ of the images $\bar{g}$ and $\bar{h}$ over $\mathbb{F}_p$.

Fix $i\in \{1,\ldots, \hat{r}\}$. As shown in \cite{HKN}, one can use LLL to compute linearly independent algebraic numbers $\beta_1,\ldots, \beta_{m_i}\in \mathbb{Q}(\alpha)$ which are likely to form a $\mathbb{Q}$-basis of $L_i$ (it is only guaranteed that $L_i\subseteq \mathbb{Q}\cdot \beta_1+\cdots+ \mathbb{Q}\cdot \beta_{m_i}$ as $\mathbb{Q}$-vector spaces). The idea of the following algorithm is to use this basis to compute the subfield polynomial $g_{L_i}$ of $L_i$ and construct a subfield factorization iteratively. 

\begin{algorithm}[H]
\caption{\texttt{PartialSubfFact}}
\begin{algorithmic}
\STATE \textbf{Input:} A $\mathbb{Q}$-basis $\beta_1,\ldots, \beta_{m_i}$ of some $V$ such that $L_i\subseteq V$ and a partial factorization $g_1,\ldots, g_s$ of $f$ over $\mathbb{Q}(\alpha)$.
\STATE \textbf{Output:} A partial factorization $G_1,\ldots, G_S$ of $f$ over $\mathbb{Q}(\alpha)$, with $s\leq S$, and such that $g_{L_i} \in \{G_1,\ldots, G_S\}^\pi$ or \textbf{Error}.
\vspace{0.2cm}
\STATE \text{ 1. Let $SF:=\{g_1,\ldots, g_s\}$ and let $T\subseteq k$ finite.}
\STATE \text{ 2. Let $\beta$ be a random $T$-combination of $\beta_1,\ldots, \beta_{m_i}$.}
\STATE \text{ 3. Let $H:=h(x)-h(\alpha)$, where $h(x)\in \mathbb{Z}[x]_{<n}$ and $h(\alpha)=\beta$. }
\STATE \text{ 4. Compute $g_0:=\gcdp(f,H)$ in $\mathbb{F}_p[x]$.}
\STATE \text{ 5. \textbf{if} $\deg(g_0)\cdot m_i \neq n$ \textbf{then} go to Step 2.}
\STATE \text{ 6. \textbf{for} $j=1,\ldots,s$ \textbf{do}}
\STATE \text{ 7. \;\;\;\; Compute $g:=\gcdp(g_j,H)$ in $\mathbb{F}_p[x]$.}
\STATE \text{ 8. \;\;\;\; \textbf{if } $ 0< \deg(g)< \deg(g_j)$ \textbf{then}}
\STATE \text{ 9. \;\;\;\;\;\;\;\;\;\;\;\; Compute $G:=\gcd(g_j, H)$ in $\mathbb{Q}(\alpha)[x].$}
\STATE \text{ 10. \;\;\;\;\;\;\;\;\;\; \textbf{if} $\bar{f}_i \mid \bar{g_j}$ but $ \bar{f}_i \nmid \bar{G}$ \textbf{then} \textbf{return} \textbf{Error}.}
\STATE \text{ 11. \;\;\;\;\;\;\;\;\;\; $SF:=(SF - \{g_j\})\cup \{G, g_j/G \}$.}
\STATE \text{ 12. \textbf{return} $SF$}
\end{algorithmic}
\end{algorithm}

When $\beta_1,\ldots, \beta_{m_i}$ is not a $\mathbb{Q}$-basis of $L_i$, Step 5 might give rise to an infinite loop. Otherwise, $\deg(g_0)\cdot m_i \neq n$ when the random element $\beta$ is not a generator of $L_i$, which happens with probability at most $(m_i-1)|T|^{m_i(1-q)/q}$, where $q$ is the smallest prime that divides $m_i$ (see Appendix \ref{sec:prob}).
To prove the correctness of Algorithm \texttt{PartialSubfFact}, we use the following remark.

\begin{remark}\label{remark:division}
As a consequence of Lemma~\ref{lemma:modp}, if $g,h\in \mathbb{Q}(\alpha)[x]$ are factors of $f$ then one can quickly verify whether or not $h\mid g$ by checking whether the image of $h$ in $\mathbb{F}_p[x]$ divides the image of $g$ in $\mathbb{F}_p[x]$. The same holds for deciding when $\gcd(g,h)\in \mathbb{Q}(\alpha)[x]$ is trivial or not.
\end{remark}

\begin{lemma}\label{lemma:q=li}
If Algorithm \texttt{PartialSubfFact}  does not end in an error message, then the input $\beta_1, \ldots, \beta_{m_i}$ is a basis of $L_i$, and moreover, $L_i=\mathbb{Q}(\beta)$, with $\beta$ from Step~2.  If Step~10 returns an error message, then $\beta_1, \ldots, \beta_{m_i}$ is not a basis of $L_i$.
\end{lemma}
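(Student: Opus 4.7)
The plan is to reduce both halves of the lemma to a single algebraic equivalence and then do a dimension count. The equivalence I would establish first is: for any subfield $M$ of $K/\mathbb{Q}$ with subfield polynomial $g_M$, one has $\hat f_i \mid g_M$ in $\mathbb{Q}_p[x]$ if and only if $M \subseteq L_i$. The direction $M\subseteq L_i \Rightarrow \hat f_i\mid g_M$ follows from $g_{L_i}\mid g_M$ (the minimal polynomial of $\alpha$ over a larger field divides that over a smaller one) together with the observation that the embedding $\sigma_i:\alpha\mapsto\hat\alpha_i$ fixes $L_i$ and hence the coefficients of $g_{L_i}$; applying $\sigma_i$ to $g_{L_i}(\alpha)=0$ gives $g_{L_i}(\hat\alpha_i)=0$, so $\hat f_i\mid g_{L_i}\mid g_M$. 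For the converse, Theorem~\ref{theo:equiv}(6) lets me write $g_M=\gcd(f,\tilde h(x)-\tilde h(\alpha))$ for some $\tilde h(\alpha)$ generating $M$; then $\hat f_i\mid g_M$ gives $\hat f_i\mid \tilde h(x)-\tilde h(\alpha)$ in $\mathbb{Q}_p[x]$, and evaluating at $x=\hat\alpha_i$ yields $\tilde h(\hat\alpha_i)=\tilde h(\alpha)$, so $\tilde h(\alpha)\in L_i$ and $M=\mathbb{Q}(\tilde h(\alpha))\subseteq L_i$.

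With this in hand, set $M:=\mathbb{Q}(\beta)$ for the random $\beta$ produced in Step~2, and let $H=h(x)-h(\alpha)$ with $h(\alpha)=\beta$. By Theorem~\ref{theo:equiv}(6), $\gcd(f,H)=g_M$ in $\mathbb{Q}(\alpha)[x]$, and since $p$ is good this reduces faithfully mod $p$, so $\deg g_0=\deg g_M=n/[M:\mathbb{Q}]$. Thus Step~5's gatekeeping $\deg(g_0)\cdot m_i = n$ is equivalent to $[M:\mathbb{Q}]=m_i$. For Part~1, I would show that the absence of a Step~10 error forces $\hat f_i\mid g_M$ in $\mathbb{Q}_p[x]$: among the blocks $g_1,\dots,g_s$ the unique one $g_{j_0}$ containing $\hat f_i$ either already divides $g_M$ (so $\hat f_i\mid g_M$), or the gcd check in Step~8 flags a proper refinement and Step~10's non-error gives $\hat f_i\mid G=\gcd(g_{j_0},g_M)$, so again $\hat f_i\mid g_M$. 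The equivalence then yields $M\subseteq L_i$, and combined with $L_i\subseteq V$ and $[M:\mathbb{Q}]=m_i=\dim V$ the chain $M\subseteq L_i\subseteq V$ collapses to $M=L_i=V$; in particular $\beta_1,\dots,\beta_{m_i}$ is a basis of $L_i$ and $L_i=\mathbb{Q}(\beta)$.

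For Part~2, a Step~10 error produces some $g_j$ with $\hat f_i\mid g_j$ but $\hat f_i\nmid G=\gcd(g_j,H)$; since $\hat f_i\mid g_j$, this forces $\hat f_i\nmid H$ in $\mathbb{Q}_p[x]$, and therefore $\hat f_i\nmid g_M=\gcd(f,H)$. By the equivalence, $M=\mathbb{Q}(\beta)\not\subseteq L_i$, so $\beta\in V\setminus L_i$, which rules out $V=L_i$ and hence rules out $\beta_1,\dots,\beta_{m_i}$ being a basis of $L_i$.

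The main obstacle will be the for-loop bookkeeping in Part~1: verifying carefully that Step~10's non-firing really guarantees $\hat f_i\mid g_M$ across all outcomes of the Step~8 condition (the $\mathbb{F}_p$-gcd being trivial, proper, or equal to $g_{j_0}$). I would lean on Lemma~\ref{lemma:modp} and Remark~\ref{remark:division}, together with separability of $f$ mod $p$, to match $\gcd_p$ in Step~7 to the true gcd in $\mathbb{Q}(\alpha)[x]$ and thereby track unambiguously which $\hat f_k$ sits in each $g_j$ and in $G$.
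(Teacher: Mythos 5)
Your proposal takes essentially the same route as the paper's proof: deduce $\mathbb{Q}(\beta)\subseteq L_i$ from the non-firing of Step~10 (the paper does this via the irreducible factor $F$ with $\bar f_i\mid\bar F$, showing $F\mid G\mid g_\beta$ and $L_{g_\beta}\subseteq L_F\subseteq L_{\hat f_i}=L_i$, which is exactly your equivalence ``$\hat f_i\mid g_M\iff M\subseteq L_i$'' unpacked), and then close with the degree/dimension sandwich coming from Step~5 and $L_i\subseteq V$. One caveat: your assertion that $\gcd(f,H)$ ``reduces faithfully mod $p$,'' giving $\deg g_0=\deg g_M$, does not follow from $p$ being a good prime --- the gcd of the reductions can be strictly larger than the reduction of the gcd. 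Only the inequality $\deg g_0\ge\deg g_M$ is available a priori, and that is all the paper uses; your dimension count still closes with it, since $M\subseteq L_i\subseteq V$ together with $\dim M\ge m_i=\dim V$ forces all three to coincide. As for the obstacle you flag at the end (the block $g_{j_0}$ containing $\bar f_i$ having trivial mod-$p$ gcd with $H$, so that Step~8 never triggers and Step~10 is never reached for that block): you are right to worry, but the paper's proof does not resolve it either --- it simply asserts the existence of an index $I$ with $\bar f_i\mid\bar G$ once no error is returned --- so this is not a defect of your argument relative to the paper's.
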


\begin{proof}
Let $g_{L_i}$ be the subfield polynomial of $L_i$ and let $g_\beta$ be the subfield polynomial of $\mathbb{Q}(\beta)$ (see Theorem~\ref{theo:equiv}). Let $g_0\in \mathbb{F}_p[x]$ as in Step~4. It follows that \begin{equation}\label{eqp1}
\deg(g_0)\geq \deg(\bar{g}_\beta)= \deg(g_\beta).
\end{equation} 

Furthermore, since $L_i \subseteq V$ as $\mathbb{Q}$-vector spaces, we have $\dim(V) \geq \dim(L_i)$. But $\dim(L_i) = n/\deg(g_{L_i})$ and $\dim(V) = m_i$. Hence, \begin{equation} \label{eqp2}
\deg(g_{L_i})\geq n/m_i.
\end{equation}

If Step~5 does not generate an infinite loop (in which case the algorithm should return an error message), then $\deg(g_0)\cdot m_i=n$ and hence, Equations $(\ref{eqp1})$ and $(\ref{eqp2})$ tell us that \begin{equation}\label{eqdeg}
\deg(g_{L_i}) \geq n/m_i = \deg(g_0) \geq \deg(g_\beta).
\end{equation} 

Now suppose that Step~10 did not return an error message. Since $f$ is separable modulo $p$, there is only one index $I$, $1\leq I \leq s$, such that $\bar{f}_i\mid \bar{G}$, where $G=\gcd(g_I, H)$. If $F$ is the irreducible factor of $f$ over $\mathbb{Q}(\alpha)$ such that $\bar{f}_i \mid \bar{F}$, then using Remark~\ref{remark:division} one can show that $F \mid G \mid g_\beta$ and hence, $\mathbb{Q}(\beta)=L_{g_\beta}\subseteq L_{F}$. On the other hand, if $\hat{f}_i$ is the $p$-adic factor of $f$ which reduces to $\bar{f}_i$ modulo $p$, then $\hat{f}_i \mid \hat{F}$ and hence, $L_{F}\subseteq L_{\hat{F}} \subseteq L_{\hat{f}_i} = L_i$. Therefore $\mathbb{Q}(\beta) \subseteq L_i$ and hence, $g_{L_i}\mid g_\beta$, by Lemma~\ref{lemma:subset}. Therefore, by Equation~(\ref{eqdeg}), we have $g_{L_i}=g_\beta=\gcd(f,H)$ and hence, \[L_i=\mathbb{Q}(\beta)=V.\]

This also shows that the polynomials $g$ in Step~7 and $G$ in Step~9 have the same degree. If the algorithm does return an error message in Step~10, then $\bar{f}_i\mid \bar{g}_I$ but $\bar{f}_i\nmid \bar{G}$. Hence $F\nmid G$ and since $F\mid g_I$, it follows that $F\nmid H=h(x)-h(\alpha)$. By looking at the images over the $p$-adic numbers, we have $\hat{f}_i\mid \hat{F}\nmid \hat{H}$, which means that $h(\alpha)=\beta\notin L_i$ and hence, $\beta_1,\ldots, \beta_{m_i}$ is not a basis of $L_i$.
\end{proof}

\begin{theorem}
Let $g_{L_i}$ be the subfield polynomial of $L_i$. Given a $\mathbb{Q}$-basis of $V\supseteq L_i$ and a (partial) factorization of $f$, Algorithm \rm{\texttt{PartialSubfFact}} \it returns a (partial) factorization $G_1,\ldots, G_S$ of $f$ such that $g_{L_i} \in \{G_1,\ldots, G_S\}^\pi$ or an error message.
\end{theorem}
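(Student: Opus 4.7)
The plan is to build on Lemma~\ref{lemma:q=li}, which already provides the key structural fact: when Algorithm~\texttt{PartialSubfFact} exits normally (not via the error in Step~10), the input $\beta_1,\ldots,\beta_{m_i}$ is a basis of $L_i$ and $L_i=\mathbb{Q}(\beta)$ for the random combination $\beta$ from Step~2. Then by Theorem~\ref{theo:equiv}~(6) applied to the $h$ chosen in Step~3, one has $g_{L_i}=\gcd(f,H)$ in $\mathbb{Q}(\alpha)[x]$, where $H=h(x)-h(\alpha)$. So the proof reduces to showing that the refinement loop correctly exposes this gcd as a product of output factors.

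Starting from this identification, I would analyze the for-loop at Steps~6--11, which only refines the input factorization $\{g_1,\ldots,g_s\}$ of $f$. Since $f$ is separable, the factors $g_1,\ldots,g_s$ are pairwise coprime, and therefore
\[ g_{L_i}\;=\;\gcd(f,H)\;=\;\prod_{j=1}^{s}\gcd(g_j,H)\quad\text{in }\mathbb{Q}(\alpha)[x]. \]
Write $G_j^{\ast}:=\gcd(g_j,H)$. For each $j$ there are three cases: $G_j^{\ast}=1$ (trivial factor, contributes nothing), $G_j^{\ast}=g_j$ (so $g_j\mid g_{L_i}$ and $g_j$ is kept in $SF$), or $0<\deg G_j^{\ast}<\deg g_j$ (in which case Step~11 replaces $g_j$ in $SF$ by $G_j^{\ast}$ and $g_j/G_j^{\ast}$). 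In every case the elements produced from $g_j$ in $SF$ include $G_j^{\ast}$ whenever $G_j^{\ast}\neq 1$. Hence the final $SF$ is a valid partial factorization of $f$ and $g_{L_i}=\prod_{j}G_j^{\ast}$ is a product of elements of $SF$, which is exactly the claim $g_{L_i}\in\{G_1,\ldots,G_S\}^{\pi}$.

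The only technical point left to justify is that Step~8 performs its trichotomy via $\gcdp$ over $\mathbb{F}_p$ rather than over $\mathbb{Q}(\alpha)$, while the true splitting $g_j=G_j^{\ast}\cdot(g_j/G_j^{\ast})$ is an identity over $\mathbb{Q}(\alpha)$. This is precisely what Remark~\ref{remark:division} is designed to handle: because $G_j^{\ast}$ divides $g_j$, which in turn divides $f$ (separable mod~$p$), the degree of $\gcdp(g_j,H)$ coincides with that of $G_j^{\ast}$, so the third case is detected correctly, and the exact gcd $G_j^{\ast}$ is then recomputed over $\mathbb{Q}(\alpha)$ in Step~9 before insertion into $SF$. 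This reconciliation between the $\mathbb{F}_p$-test and the $\mathbb{Q}(\alpha)$-gcd is the main obstacle, but Remark~\ref{remark:division} resolves it with essentially one line.
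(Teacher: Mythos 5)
Your proposal is correct and follows essentially the same route as the paper: the paper's own (two-sentence) proof likewise invokes Lemma~\ref{lemma:q=li} to get $g_{L_i}=\gcd(f,H)$ and then observes that the loop's gcd computations and the updates to $SF$ in Step~11 expose $g_{L_i}$ as a product of output factors. Your additional details --- the factorization $\gcd(f,H)=\prod_j\gcd(g_j,H)$ from pairwise coprimality and the reconciliation of the $\mathbb{F}_p$-degree test with the $\mathbb{Q}(\alpha)$-gcd (which the paper justifies inside the proofs of Lemma~\ref{lemma:q=li} and Lemma~\ref{lemma:complPSF} rather than via Remark~\ref{remark:division} alone) --- are accurate elaborations of that same argument.
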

\begin{proof}

 If the algorithm does not return an error message, then by Lemma~\ref{lemma:q=li} it follows that $g_{L_i}=\gcd(f, H)$. Hence, by computing the $\gcd$ of $H$ with the partial factorization of $f$ and updating the set $SF$ (Step~11), it follows that the output $SF$ in Step~12 is such that $g_{L_i} \in SF^\pi$.
\end{proof}

Different bases for $\mathbb{Q}(\alpha)$ give different bounds on the bit-size of $\beta_1,\ldots, \beta_{m_i}$. While the standard basis $\{1,\alpha,\ldots, \alpha^{n-1}\}$ simplifies implementation, the \textit{rational univariate representation basis} $\{1/f'(\alpha),\ldots, \alpha^{n-1}/f'(\alpha)\}$ can improve running times and provide better complexity results, see \cite{Alonso} and \cite{Dahan}.

Besides giving better bounds, there are more advantages in using the rational univariate representation basis. For example, if $g$ is a monic factor of $f$ in $\mathbb{Q}(\alpha)[x]$, then $f'(\alpha)g\in \mathbb{Z}[\alpha][x]$ (see~\cite{isomorphisms} or Remark~\ref{inclusions}). This allows us to make simplifications in a general algorithm for computing $\gcd$'s in $\mathbb{Q}(\alpha)[x]$, giving better complexity results. See Appendix~\ref{App:C}.

\begin{remark}\label{rem:ratiorep}
Suppose that $\beta_1,\ldots, \beta_{m_i}$ is a $\mathbb{Q}$-basis of $V\supseteq L_i$. Let $\beta$ be a random $T$-combination of $\beta_1,\ldots, \beta_{m_i}$ and let $b_0,\ldots, b_{n-1}\in \mathbb{Z}$ be such that $\beta=\sum b_j \frac{\alpha^j}{f'(\alpha)}$. If $\tilde{h}(x)=\sum b_j x^j\in \mathbb{Z}[x]$, then one should define $H(x)$ as $\tilde{h}(x)f'(\alpha) - \tilde{h}(\alpha) f'(x)\in \mathbb{Z}[\alpha][x]$ in Step~3 of Algorithm \rm \texttt{PartialSubfFact}. 
\end{remark}

\begin{lemma}\label{lemma:complPSF}
Given a $\mathbb{Q}$-basis of $V\supseteq L_i$ (computed in the rational univariate representation basis) and a partial factorization $g_1,\ldots, g_s $ of $f$, the number of CPU operations for running Algorithm \rm{\texttt{PartialSubfFact}} is bounded by
\[\tilde{\mathcal{O}}(n^3(r+\log \|f\|_2)). \]

\end{lemma}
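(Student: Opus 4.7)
The plan is to decompose the cost of Algorithm~\texttt{PartialSubfFact} into four contributions: (i) forming the polynomial $H \in \mathbb{Z}[\alpha][x]$ in Step~3; (ii) the preliminary $\gcdp$ in $\mathbb{F}_p[x]$ at Step~4; (iii) the loop of $\gcdp$ tests in Steps~6--7; and (iv) the true $\gcd$ in $\mathbb{Q}(\alpha)[x]$ in Step~9, together with the divisibility check in Step~10.

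I would begin by bounding the bit-sizes of the inputs. Because the $\beta_j$ come from an LLL-reduced basis expressed in the rational univariate representation basis $\{1/f'(\alpha),\ldots,\alpha^{n-1}/f'(\alpha)\}$, the usual LLL-output bounds (sharpened as in \cite{Alonso,Dahan}) give integer coordinates of bit-size $\tilde{\mathcal{O}}(n+\log\|f\|_2)$. A random $T$-linear combination of $m_i \leq n$ basis elements has the same bit-size up to constants, so the polynomial $\tilde h \in \mathbb{Z}[x]$ from Remark~\ref{rem:ratiorep} satisfies $\log\|\tilde h\|_2 = \tilde{\mathcal{O}}(n+\log\|f\|_2)$. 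The polynomial $H = \tilde h(x)f'(\alpha)-\tilde h(\alpha)f'(x) \in \mathbb{Z}[\alpha][x]$ therefore has coefficients (in $\mathbb{Z}[\alpha]$) of bit-size $\tilde{\mathcal{O}}(n+\log\|f\|_2)$, and can be assembled using $\tilde{\mathcal{O}}(n)$ operations in $\mathbb{Q}(\alpha)$, i.e. $\tilde{\mathcal{O}}(n^2)$ operations in $\mathbb{Q}$, for a total CPU cost of $\tilde{\mathcal{O}}(n^3+n^2\log\|f\|_2)$.

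For parts (ii)--(iii), Remark~\ref{RemarkP} allows us to take $p$ with $\log p = \tilde{\mathcal{O}}(\log(n+\|f\|_2))$, so reducing $H$ modulo $\mathfrak{p}$ costs $\tilde{\mathcal{O}}(n^2+n\log\|f\|_2)$ bit operations. Each $\gcdp$ then uses $\tilde{\mathcal{O}}(n)$ operations in $\mathbb{F}_p$, and since the \texttt{for} loop iterates at most $s \leq r$ times, the whole contribution of (ii)--(iii) is absorbed into $\tilde{\mathcal{O}}(rn)$ CPU operations.

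The main work is part (iv). A true gcd over $\mathbb{Q}(\alpha)$ is only performed when the $\gcdp$ in Step~7 detects a strict split of an existing factor $g_j$. Each such call strictly enlarges $SF$, whose cardinality never exceeds $r$, so Step~9 is invoked at most $r-s \leq r$ times during one run. Invoking the $\gcd$ complexity from Appendix~C---which exploits that $f'(\alpha)\cdot g \in \mathbb{Z}[\alpha][x]$ for every monic factor $g$ of $f$, keeping intermediate denominators under control---each such gcd runs in $\tilde{\mathcal{O}}(n^3+n^2\log\|f\|_2)$ CPU operations. Summing the at most $r$ invocations and combining with the costs of (i)--(iii) yields $\tilde{\mathcal{O}}(rn^3+rn^2\log\|f\|_2) \subseteq \tilde{\mathcal{O}}(n^3(r+\log\|f\|_2))$ (using $r \leq n$ in the second term), which is exactly the claimed bound. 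The only real obstacle is importing a clean per-gcd bound from Appendix~C and checking that the bit-size estimate on $H$ matches its input requirements; everything else is bookkeeping on top of standard LLL bounds and classical fast-arithmetic complexities.
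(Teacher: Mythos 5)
Your decomposition of the algorithm matches the paper's in outline, but there is a genuine gap in part (iv): the per-gcd cost you assert, $\tilde{\mathcal{O}}(n^3+n^2\log\|f\|_2)$, is the cost of Steps~1)--3) of the modular gcd algorithm in Appendix~\ref{App:C} (modular images, gcd's over $\mathbb{F}_p(\alpha)$, Chinese remaindering) \emph{without} the verification step. The modular gcd algorithm also requires a trial division (realized in Appendix~\ref{App:C} as reconstructing $H/G$ and performing a trial multiplication), and that step costs $\tilde{\mathcal{O}}(n^3\log\|f\|_2)$ because the coefficient bound for monic factors of $H$ (Lemma~\ref{lemma:bound3}) has bit-size $\tilde{\mathcal{O}}(n\log\|f\|_2)$, not $\tilde{\mathcal{O}}(n+\log\|f\|_2)$. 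If each of the up to $r$ gcd's over $\mathbb{Q}(\alpha)$ paid this verification cost, the total would be $\tilde{\mathcal{O}}(rn^3\log\|f\|_2)$, which does not fit inside $\tilde{\mathcal{O}}(n^3(r+\log\|f\|_2))$ when $r$ and $\log\|f\|_2$ are both large.

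The missing idea is the one the paper's proof hinges on: since $f$ is separable mod $p$, there is a unique index $I$ with $\bar{f}_i\mid\bar{g}_I$, and once $\gcd(g_I,H)$ has been computed and verified (one gcd \emph{with} trial division, cost $\tilde{\mathcal{O}}(n^3\log\|f\|_2)$), the proof of Lemma~\ref{lemma:q=li} shows that $\deg(\gcd(g_j,H))=\deg(\gcdp(g_j,H))$ for every $j$. Knowing the degree of the true gcd in advance lets one certify the reconstruction for all remaining $j\neq I$ without trial divisions, so each of those gcd's costs only $\tilde{\mathcal{O}}(n^2(n+\log\|f\|_2))$, and the total $\tilde{\mathcal{O}}(n^3\log\|f\|_2)+r\cdot\tilde{\mathcal{O}}(n^2(n+\log\|f\|_2))$ gives the claimed bound. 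You would also need to account for the exact division $g_j/G$ in Step~11, which the paper handles by the same modular-image-plus-CRT technique; your write-up omits it entirely. The rest of your bookkeeping (forming $H$, the $\gcdp$ computations in $\mathbb{F}_p[x]$, the number of invocations of Step~9) is fine.
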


\begin{proof}
The cost of Steps~4 and~7 is less than the cost of Step~9. The cost of the division $g_j/G$ in Step~11 is similar to the cost of the $\gcd$ in Step~9 (this division can be computed by dividing the images of $g_j$ and $G$ in $\mathbb{F}_p(\alpha)[x]$ and then Chinese remaindering). Since $f$ is separable modulo $p$, there is only one $g_I$ such that $\bar{f}_i\mid \bar{g}_I$ and if \begin{equation}\label{eq:gigcd}\bar{f}_i \mid \bar{G},\text{ where } G=\gcd(g_I, H),\end{equation} then, by the proof of Lemma~\ref{lemma:q=li}, we have \[\deg(\gcd(g_j, H))=\deg(\gcdp(g_j, H)), \text{ for any }1\leq j \leq s\] and hence, when computing $\gcd(g_j, H)$, $j\neq I$, we can skip the trial divisions in the modular $\gcd$ algorithm (see~\cite{modular} and Appendix~\ref{App:C}). That is, we have one $\gcd$ computation with trial divisions, which costs $\tilde{\mathcal{O}}(n^3\log \|f\|_2)$ CPU operations, and if (\ref{eq:gigcd}) holds, then we can skip the trial divisions in the remaining $\gcd$'s, where each such $\gcd$ costs $\tilde{\mathcal{O}}(n^2(n+\log\|f\|_2))$ CPU operations (see Appendix~\ref{App:C}). Furthermore, each division test in Step 10 costs $\tilde{\mathcal{O}}(n\log p)$ CPU operations. The result follows by omitting $\log p$ terms and the fact that $s\leq n$.
\end{proof}

A general description of the algorithm to compute a subfield factorization of $f$ over $\mathbb{Q}(\alpha)$ is given below.

\begin{algorithm}
\caption{\texttt{SubfFact}.}\label{alg:general2}
\begin{algorithmic}
\STATE \textbf{Input:} A squarefree irreducible polynomial $f\in \mathbb{Z}[x]$.
\STATE \textbf{Output:} A subfield factorization of $f$ over $\mathbb{Q}(\alpha)$ (see Definition \ref{def:subfact}).
\vspace{0.2cm}
\STATE \text{1. \;Let $p$ prime for which $\bar{f}\in \mathbb{F}_p[x]$ is separable, has a linear factor} 
\STATE \text{ \;\;\;\;and the same degree as $f$.}
\STATE \text{2. \;Compute the irreducible factorization $\bar{f_1},\ldots, \bar{f}_{\hat{r}}$ of $\bar{f}\in \mathbb{F}_p[x].$ }
\STATE \text{3. \;$SF_0 :=\{x-\alpha, f/(x-\alpha)\}$. }
\STATE \text{4. \;\textbf{for} $i=1,\ldots, \hat{r}$ \textbf{do}}
\STATE \text{5. \;\;\;\;\;\;Hensel Lift $\bar{f_1},\ldots, \bar{f}_{\hat{r}}$ to a factorization $f_1^{(a)}, \ldots, f_{\hat{r}}^{(a)}$ of $f$ mod $ p^a$,}
\STATE \text{ \;\;\;\;\;\;\;\;  for appropriate $a$ (starting $a$ is the same as in \cite{HKN}).}
\STATE \text{6. \;\;\;\;\; Use LLL to compute a basis $\beta_1,\ldots, \beta_{m_i}$ of some $V\supseteq L_i$ (See \cite{HKN}).}
\STATE \text{7. \;\;\;\;\; $SF_i$:=\texttt{PartialSubfFact}($\{\beta_1,\ldots, \beta_{m_i}\},SF_{i-1})$. }
\STATE \text{8. \;\;\;\;\; If $SF_i =\text{\textbf{Error}}$, increase the lifting precision $a$, go to Step 5.}
\STATE \text{9. \;\textbf{return} $SF_{\hat{r}}$.}
\end{algorithmic}

\end{algorithm}

\begin{theorem}
\label{ComplexityMethod2}
Assuming a prime $p$ of suitable size (see Remark~\ref{RemarkP}) is found, and assuming the value of $a$ from \cite{HKN} is large
enough\footnote{If the initial value of $a$ is too low, our implementation
increases $a$, but this has little impact on the CPU time or the complexity. The highest degree term in the complexity comes from LLL
reduction. To bound the LLL cost, one must bound the vector lengths that can occur
during LLL, and the total number of LLL switches.
Gradual sublattice reduction \cite{HA} makes those
bounds independent of $a$.
More details can also be found in~\cite{ComplexityFactor}, which explains why the highest
degree term in the complexity of factoring in $\mathbb{Q}[x]$ depends only on $r$.

To prove an upper bound for $a$, bound the coefficients of a basis element $\beta_j \in V - L_i$ by multiplying
the LLL cut-off bound $n^2 \| f \|$ from~\cite{HKN} with the LLL fudge factor $2^{\mathcal{O}(n)}$. Then bound the
Norm of the resultant of $f(x)$ and $H(x)$ from Remark~\ref{rem:ratiorep}, and use that it must be divisible by $p^a$ because $\hat{f}_i$ is a common factor mod $p^a$ but not mod $p^{\infty}$ if $\beta_j \in V - L_i$.},
the number of CPU operations executed by Algorithm {\rm{\texttt{SubfFact}}} is bounded by \[ \tilde{\mathcal{O}}(rn^5(n+\log \|f\|)^2).\]
\end{theorem}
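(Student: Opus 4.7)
The plan is to analyze each step of Algorithm \texttt{SubfFact} separately and show that the LLL computation in Step~6, summed over the $\hat{r}\le r$ principal subfields, dominates all other contributions. Steps~1--2 (choosing a good prime $p$ of the size indicated in Remark~\ref{RemarkP} and factoring $\bar{f}$ over $\mathbb{F}_p$) cost $\tilde{\mathcal{O}}(n\log(n+\|f\|))$ by standard polynomial factorization over finite fields. Step~5, the Hensel lifting to precision $p^a$, costs $\tilde{\mathcal{O}}(na\log p)$ per outer iteration; with $a$ bounded as in the theorem's footnote and $\log p=\mathcal{O}(\log(n+\|f\|))$, this is absorbed into the final bound.

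The dominant term arises in Step~6. For each $i$, LLL is invoked on a lattice built from the $p$-adic approximation $f_i^{(a)}$ to produce a $\mathbb{Q}$-basis of some $V\supseteq L_i$. I would import the length and swap estimates of \cite{HA} (gradual sublattice reduction), together with the LLL cut-off bound $n^2\|f\|$ from \cite{HKN}, to bound both the bit-length of intermediate lattice vectors and the total number of LLL swaps by quantities polynomial in $n$ and $\log\|f\|$ but \emph{independent} of $a$. Combining these with the arithmetic-cost analysis of \cite{ComplexityFactor} gives $\tilde{\mathcal{O}}(n^5(n+\log\|f\|)^2)$ per index $i$; summing over the at most $\hat{r}\le r$ indices yields $\tilde{\mathcal{O}}(rn^5(n+\log\|f\|)^2)$ for all LLL calls combined.

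Step~7 is handled by Lemma~\ref{lemma:complPSF}, which bounds a single call to \texttt{PartialSubfFact} by $\tilde{\mathcal{O}}(n^3(r+\log\|f\|))$. Since Step~7 is executed at most $\hat{r}\le n$ times, with at most a constant expected number of retries from Step~8 (under the standing probabilistic assumptions), the total cost is $\tilde{\mathcal{O}}(n^4(r+\log\|f\|))$, strictly dominated by the LLL contribution. Adding all three pieces yields the claimed bound.

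The main obstacle is the LLL analysis itself: a naive bound on LLL cost scales with the lifting precision $a$, which can be exponential in $n$ in the worst case. The decoupling of the cost estimate from $a$ rests on two ingredients sketched in the theorem's footnote: the gradual sublattice reduction technique of \cite{HA}, which caps vector lengths and swap counts independently of $a$, and the resultant-based upper bound on $a$ itself, obtained by multiplying the LLL fudge factor $2^{\mathcal{O}(n)}$ with the cut-off bound $n^2\|f\|$ and using divisibility of $\mathrm{Res}(f,H)$ by $p^a$ whenever $\beta_j\in V\setminus L_i$.
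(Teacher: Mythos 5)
Your decomposition is essentially the paper's own proof: Steps~1--2 and the Hensel lifting are lower-order, each iteration of the main loop costs one LLL call at $\tilde{\mathcal{O}}(n^5(n+\log \|f\|_2)^2)$ (via \cite{HA}) plus one call to \texttt{PartialSubfFact} at $\tilde{\mathcal{O}}(n^3(r+\log \|f\|_2))$ (Lemma~\ref{lemma:complPSF}), and the LLL calls dominate. One correction, though: you justify the factor $r$ in the final bound by asserting $\hat{r}\le r$, and that inequality goes the wrong way. Since $K=\mathbb{Q}(\alpha)$ embeds into $\mathbb{Q}_p$, each of the $r$ factors in the subfield factorization over $K$ maps to a product of one or more of the $\hat{r}$ irreducible $p$-adic factors, so the $p$-adic factorization refines the one over $K$ and $\hat{r}\ge r$ (compare the $\hat{r}$ and $r$ columns in the table of Section~\ref{ssec:compare2}). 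The loop genuinely runs $\hat{r}$ times, so the honest tally of the LLL cost is $\tilde{\mathcal{O}}(\hat{r}\,n^5(n+\log\|f\|_2)^2)$; the paper's proof simply multiplies the per-iteration cost by the number of iterations, and the $r$ in the stated bound is to be read as the number of LLL calls rather than derived from an inequality like the one you invoke (Remark~\ref{rem:linear} can reduce the number of LLL calls in special cases, but not to $r$ in general). Your other small slips --- e.g.\ quoting $\tilde{\mathcal{O}}(n\log(n+\|f\|))$ for Steps~1--2, where factoring $\bar f$ over $\mathbb{F}_p$ already costs $\tilde{\mathcal{O}}(n^2+n\log p)$ operations in $\mathbb{F}_p$ --- are absorbed by the dominant term and harmless.
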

\begin{proof}
Steps~1 and~2 involve factoring $f$ modulo a few primes $p$ until we find a prime that satisfies the conditions from Step~1. Factoring $f$ over $\mathbb{F}_p$ can be executed with $\tilde{\mathcal{O}} (n^2+n\log p)$ operations in $\mathbb{F}_p$ (see~\cite{MCA}, Corollary~14.30). Multifactor Hensel lifting takes $\tilde{\mathcal{O}}(n^2(n + \log \|f\|_2))$ CPU operations (see~\cite{MCA}, Theorem~15.18). For each $i$ in Step~4 we have one LLL call, costing $\tilde{\mathcal{O}}(n^5(n+\log \|f\|_2)^2)$ CPU operations (see~\cite{HA}), and one \texttt{PartialSubfFact} call, which costs $\tilde{\mathcal{O}}(n^3(r+\log \|f\|_2))$ CPU operations according to Lemma~\ref{lemma:complPSF}. The theorem follows by omitting $\log p$ factors.
\end{proof}

\begin{remark}\label{rem:linear}
While computing the subfield factorization, whenever we find a linear factor $x-h_1(\alpha)\in \mathbb{Q}[\alpha][x]$ of $f$, we can use it to find new linear factors in the following way: if $x-h_2(\alpha)$ is another linear factor, then $h_1(h_2(\alpha))$ is also a root of $f$. This follows from the fact that there is a bijection between the automorphisms of $K/k$ and the roots of $f$ over $K$. This is particularly helpful when $f$ has several roots in $K$, since the number of LLL calls can be reduced significantly.
\end{remark}

\section{CPU Time Comparison}\label{sec:6}

In this last section we give a few timings comparing Algorithm \texttt{SubfFact} (Section~\ref{sec:5}) and factorization algorithms over $\mathbb{Q}(\alpha)$ (recall that both algorithms yield a subfield factorization). We also compare our algorithm \texttt{Subfields} with that from~\cite{HKN}. Our algorithm was implemented in the computer algebra system Magma, since there exists an implementation of \cite{HKN} in Magma as well.

\subsection{\texttt{SubfFact} vs. Factoring over $\mathbb{Q}(\alpha)$}\label{ssec:compare}
Algorithm \texttt{Subfields} is based on the definition of a subfield factorization of $f$. As noted before, the irreducible factorization of $f$ over $\mathbb{Q}(\alpha)$ is a subfield factorization. In this section we compare the time necessary to find a subfield factorization of $f$ using algorithm \texttt{SubfFact}, presented in Section~\ref{sec:5}, with the time necessary to completely factor $f$ over $\mathbb{Q}(\alpha)$ in Magma and in PARI/GP. We also list $s$, the number of irreducible factors of $f$ and $r$, the number of factors in the subfield factorization obtained using \texttt{SubfFact}.

\begin{table}[H]\label{table:1}
\begin{center}
\begin{tabular}{|c|c|c|r|r|r|}
\hline
 $n$ & $s$ & $r$ & \texttt{SubfFact} & \multicolumn{1}{c|}{\begin{tabular}[c]{@{}c@{}}Magma v2.21-3\\ (Factorization)\end{tabular}} & \multicolumn{1}{c|}{\begin{tabular}[c]{@{}c@{}}PARI/GP v2.9.2\\ (nffactor)\end{tabular}}  \\ \hline

 32    & 32        & 32     & 0.56s        & 4.71s &        0.46s \\ 
 36    & 24        & 16     & 3.76s        & 4.20s &        0.63s \\ 
 45    & 3         & 3      & 0.59s        & 20.01s	&	    94.54s \\
 48    & 20        & 16     & 21.10s       & 34.23s &       3.30s \\ 
 50    & 26        & 11     & 24.08s       & 20.51s &       2.89s\\
 56    & 14        & 6      & 50.26s       & 127.34s &      26.48s\\ 
 60    & 33        & 18     & 107.22s      & 1,836.80s &    38.75s \\ 
 60    & 60        & 32     & 117.43s      & 9,069.22s &   40.70s\\ 
 64    & 16        & 12     & 101.82s      & 190.99s &      48.82s\\ 
 72    &  3        & 3      & 77.76s       & 300.62s &      133.54s \\
 72    & 32        & 24     & 175.85s      & 130.40s &      17.23s \\
 75    & 20        & 6      & 542.30s      & $>24h$     &   518.40s\\ 
 75    & 21        & 9      & 199.70s      & 180.06s &      114.38s\\ 
 80    & 3         &  3     & 117.03s      & 280.18s &      136.21s \\
 81    & 42        & 28     & 680.24s      & 13,661.89s &   96.00s  \\ 
 90    & 24 	   & 7      & 921.53s      & $>24h$     &   516.14s\\
 96    & 32        & 32     & 555.24s      & 622.33s &      137.23s\\
 96    & 96        & 56     & 2,227.06s    & 16,352.01s &   91.43s\\

 \hline
\end{tabular}
\caption{Subfield Factorization vs. Factoring in $\mathbb{Q}(\alpha)[x]$.}
\end{center}
\end{table}

In a few cases, factoring $f$ over $\mathbb{Q}(\alpha)$ in Magma is faster than \texttt{SubfFact}. However, when it is not, using \texttt{SubfFact} to find a subfield factorization is usually much faster. Factoring $f$ over $\mathbb{Q}(\alpha)$ in PARI/GP is usually faster still.
\begin{remark}
In Step~6 of Algorithm \texttt{SubfFact}, the subfield $L_i$ (to be precise: a subspace $V$ containing $L_i$, but these are practically always the same)
is computed with LLL techniques.  Factoring $f$ in PARI/GP is done with LLL techniques as well \cite{belabas}.
We expect the computation of one $L_i$ to be faster than factoring $f$ in PARI/GP, because 
the LLL cut-off bound in \cite[Theorem~12]{HKN} used by \texttt{SubfFact} is practically optimal.
But the above table shows that this is not enough to  
compensate for the fact that Step~6 in Algorithm \texttt{SubfFact} is called $\hat{r}$ times.
In contrast, with \cite{belabas} the cost of computing one factor is the same as the cost of computing all factors. So to compute all $L_i$ it is faster to use \cite{belabas}.

\end{remark}

\subsection{Comparing Algorithms}\label{ssec:compare2}
In this last Section we compare the running time of algorithms \texttt{Subfields} (where the subfield factorization is computed using \texttt{SubfFact}) and the algorithm from \cite{HKN} (present in Magma). In order to give a better comparison of the running time for both algorithms, we also compute a generator for every subfield (see Section~\ref{ssec:generators}). 

As noted before, the main contribution of our work is in the way the intersections of the principal subfields are computed. In the table below $n$ is the degree of the polynomial $f$ and $\hat{r}$ is the number of irreducible factors of $f$ in $\mathbb{F}_p[x]$. We also list $r$, the number of principal subfields and $m$, the total number of subfields of the extension defined by $f$.

\begin{table}[H]\label{table:2}
\begin{center}
\begin{tabular}{|c|c|c|c|r|r|r|r|}
\hline
Ex. & $n$   & $\hat{r}$ & $r$   & $m$      & \multicolumn{1}{c|}{$m/r$} & \multicolumn{1}{c|}{{Magma v2.21-3 }} &  \multicolumn{1}{c|}{\texttt{Subfields}} \\ \hline
$f_{1}$ & 32  & 32        & 32  & 374    & 11.68       & 11.42s      & 1.15s     \\ 
$f_{2}$ &36  & 24        & 16  & 24     & 1.50        & 5.14s       & 3.84s     \\
$f_{3}$ &48  & 28        & 16  & 25     & 1.56        & 24.52s      & 21.21s    \\ 
$f_{4}$ &50  & 26        & 11  & 12     & 1.09        & 26.06s      & 24.16s    \\ 
$f_{5}$ &56  & 20        & 6   & 6      & 1.00        & 52.29s      & 50.31s    \\ 
$f_{6}$ &60  & 33        & 18  & 19     & 1.05        & 112.90s     & 107.53s   \\
$f_{7}$ &60  & 60        & 32  & 59     & 1.84        & 205.46s     & 118.50s   \\ 
$f_{8}$ &64  & 24        & 12  & 14     & 1.16        & 110.89s     & 101.99s   \\ 
$f_{9}$ &64  & 40        & 30  & 93     & 3.10        & 167.13s     & 122.24s   \\ 
$f_{10}$ &64  & 64        & 64  & 2,825  & 44.14       & 1,084.91s   & 43.62s    \\ 
$f_{11}$ &72  & 40        & 24  & 42     & 1.75        & 219.30s     & 176.65s   \\ 
$f_{12}$ &75  & 20        & 6   & 6      & 1.00        & 516.45s     & 542.60   \\ 
$f_{13}$ &75  & 21        & 9   & 10     & 1.11        & 200.42s     & 199.85s   \\ 
$f_{14}$ &80  & 48        & 27  & 57     & 2.11        & 1,021.22s   & 685.65s   \\ 
$f_{15}$ &81  & 42        & 28  & 56     & 2.00        & 715.70s     & 681.35s   \\ 
$f_{16}$ &81  & 45        & 25  & 36     & 1.44        & 746.33s     & 716.12s   \\ 
$f_{17}$ &90  & 24        & 7   & 7      & 1.00        & 923.74s     & 921.77s   \\ 
$f_{18}$ &96  & 32        & 32  & 134    & 4.18        & 1,159.04s   & 558.96s   \\ 
$f_{19}$ &96  & 96        & 56  & 208    & 3.71        & 4,026.65s   & 2,239.54s \\ 
$f_{20}$ &100 & 100       & 57  & 100    & 1.75        & 7,902.09s   & 4,250.39s \\ 
$f_{21}$ &128 & 128       & 128 & 29,211 & 228.21      & 306,591.68s & 5,164.75s \\ \hline
\end{tabular}

   \caption{Comparison table.}
   \end{center}
 \end{table}

Notice that when $m$ is close to $r$ (i.e., when there are not many subfields other than the principal subfields and hence, very few intersections to be computed) our algorithm performs similarly as~\cite{HKN}. However, we see a noticeable improvement when $m$ is very large compared to $r$, since in this case there are a large number of intersections being computed. 

We remark that the time improvements are not only due to the new intersection method, but also due to Remark~\ref{rem:linear}. In some cases, several LLL calls can be skipped, which greatly improves CPU times (this is the case for the Swinnerton-Dyer polynomials $f_1,f_{10}$ and $f_{21}$, where the number of LLL calls was $5,6$ and $7$, respectively). The implementation of our algorithm, as well as the polynomials used in this comparison table, can be found in~\cite{impl}.

\appendix 

\section{Subfield Polynomial Equivalences}\label{App:A}

The {\em subfield polynomial} of a subfield $L$ of $K/k$ is the polynomial that satisfies any of the 6 properties listed in Theorem~\ref{theo:equiv}.
This appendix shows that these properties are equivalent.

\begin{lemma}\label{lemma:degree}
If $g\in L[x]$ is the minimal polynomial of $\alpha$ over $L$, for some subfield $L$ of $K/k$, then
\begin{enumerate}
\item[$(i)$] $\deg(g) \cdot [L : k] =n.$
\item[$(ii)$] $L=\{h(\alpha)\in K : h(x) \equiv h(\alpha)\bmod g \} = L_g.$
\item[$(iii)$] The coefficients of $g$ generate $L$ over $k$.
\end{enumerate}
\end{lemma}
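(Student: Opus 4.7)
The plan is to handle the three parts in sequence, using the tower law as the main engine and $K=L(\alpha)$ (which holds for every subfield $L \supseteq k$, since $K=k(\alpha)\subseteq L(\alpha)\subseteq K$).

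For (i), I would simply note that $g$ is by hypothesis the minimal polynomial of $\alpha$ over $L$, so $[K:L]=[L(\alpha):L]=\deg(g)$; multiplying by $[L:k]$ via the tower law $n=[K:k]=[K:L]\cdot[L:k]$ finishes the claim.

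For (ii), the second equality is the definition of $L_g$ in (\ref{eq:Lg}), so I would focus on the first. The inclusion $L\subseteq L_g$ is the easy direction: if $h(\alpha)\in L$, then $h(x)-h(\alpha)\in L[x]$ vanishes at $\alpha$, and since $g$ is the minimal polynomial of $\alpha$ over $L$ we get $g\mid h(x)-h(\alpha)$ in $L[x]$, i.e.\ $h(x)\equiv h(\alpha)\bmod g$. For the reverse inclusion, given $h\in k[x]_{<n}$ with $h(x)\equiv h(\alpha)\bmod g$, I would perform the division of $h$ by $g$ inside $L[x]$ (legitimate since $k\subseteq L$ and $g\in L[x]$), writing $h(x)=q(x)g(x)+r(x)$ with $q,r\in L[x]$ and $\deg r<\deg g$. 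Evaluating at $\alpha$ gives $r(\alpha)=h(\alpha)$, while the hypothesis forces $r(x)$ to be the constant polynomial $h(\alpha)$. Since $r\in L[x]$, this constant lives in $L$, so $h(\alpha)\in L$.

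For (iii), let $L':=k(\text{coeffs}(g))$, so $k\subseteq L'\subseteq L$ and $g\in L'[x]$. Because $g(\alpha)=0$, the minimal polynomial of $\alpha$ over $L'$ divides $g$, which gives $[L'(\alpha):L']\leq\deg(g)$. Using $L'(\alpha)=K$, this says $[K:L']\leq\deg(g)$. But the tower $L'\subseteq L\subseteq K$ together with (i) gives $[K:L']=[K:L]\cdot[L:L']=\deg(g)\cdot[L:L']$. Comparing the two yields $[L:L']\leq 1$, hence $L'=L$.

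None of the three parts looks genuinely hard; the only subtle point is the reverse inclusion in (ii), where one must recognize that the equation $h(x)\equiv h(\alpha)\bmod g$ should be interpreted as reduction in $L[x]$ (not just $K[x]$) so that the remainder's coefficients lie in $L$, forcing $h(\alpha)\in L$. Once that interpretation is in place, the rest is tower-law bookkeeping.
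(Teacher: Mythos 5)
Your proposal is correct. Parts (i) and (iii) follow essentially the same route as the paper: the tower law for (i), and for (iii) the observation that the subfield generated by the coefficients of $g$ already has $K$ as an extension of degree at most $\deg(g)$, forcing it to equal $L$. The only real difference is part (ii), which the paper does not prove but instead cites \cite{HKN}, Theorem~1; you supply a self-contained argument via division with remainder. That argument is sound: since $g$ is monic, the quotient and remainder of $h$ upon division by $g$ are the same whether computed in $L[x]$ or in $K[x]$, so the hypothesis $g \mid h(x)-h(\alpha)$ in $K[x]$ forces the remainder --- which lies in $L[x]$ --- to be the constant $h(\alpha)$, giving $h(\alpha)\in L$. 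This uniqueness-of-remainder point is exactly the subtlety you flag at the end, and you handle it correctly; filling in this proof is a reasonable addition rather than a deviation.
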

\begin{proof}
Item $(i)$ is the \emph{product formula} $[K:L] \cdot [L:k] = [K:k]$. For a proof of $(ii)$ see \cite{HKN}, Theorem~1. For $(iii)$,
let $\tilde{L} \subseteq L$ to be the field generated over $k$ by the coefficients of $g$. Then $g$ is also the minimal polynomial of $\alpha$ over $\tilde{L}$,
and hence $[K:\tilde{L}] = \deg(g) = [K:L]$ so $[L:\tilde{L}]=1$ and hence $L = \tilde{L}$.
\end{proof}

%

\begin{lemma}\label{lemma:gcd}Let $h(x) \in k[x]$ and let $L = k( h(\alpha) )$ be a subfield of $K/k$. The minimal polynomial of $\alpha$ over $L$ is the gcd of $f$ and $h(x) - h(\alpha)$.
\end{lemma}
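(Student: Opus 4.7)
The plan is to show that both sides are monic polynomials of the same degree, one of which divides the other. Write $d := \gcd(f, h(x)-h(\alpha))$, taken monic in $L[x]$ (equivalently in any extension of $L$, up to the monic normalization). Let $g$ denote the minimal polynomial of $\alpha$ over $L$.

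First I would establish the easy divisibility $g \mid d$. Since $h(\alpha) \in L$, the polynomial $h(x)-h(\alpha)$ lies in $L[x]$ and vanishes at $x=\alpha$; similarly $f \in k[x] \subseteq L[x]$ and vanishes at $\alpha$. Therefore the minimal polynomial of $\alpha$ over $L$ divides each of them, hence divides their gcd.

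Next I would compute both sides explicitly in an algebraic closure $\bar{k}$. Factor $f = \prod_{i=1}^{n}(x-\alpha_i)$ with $\alpha_1=\alpha$; the roots are distinct because $f$ is separable. Then
\[
d(x) = \prod_{i \in I}(x-\alpha_i), \qquad I := \{i : h(\alpha_i) = h(\alpha)\},
\]
because the roots common to $f$ and $h(x)-h(\alpha)$ are precisely those $\alpha_i$ on which $h$ takes the value $h(\alpha)$. On the other hand, the roots of $g$ in $\bar{k}$ are exactly the conjugates of $\alpha$ over $L$, which are the images of $\alpha$ under $k$-embeddings $K \hookrightarrow \bar{k}$ that fix $L$ pointwise. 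Any such embedding is determined by sending $\alpha \mapsto \alpha_i$ for some $i$, and it fixes $L = k(h(\alpha))$ iff it fixes $h(\alpha)$, i.e.\ iff $h(\alpha_i)=h(\alpha)$. Thus the roots of $g$ are also $\{\alpha_i : i \in I\}$.

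Combining these two observations, $g$ and $d$ are monic with the same root set in $\bar{k}$ (with multiplicity one, by separability), so $g=d$. Alternatively, once one knows $\deg(d) = |I|$ and, by Lemma~\ref{lemma:degree}(i), $\deg(g) = [K:L]$, one concludes by checking $|I| = [K:L]$ from the embedding bijection; the divisibility $g \mid d$ then forces equality. The main subtlety is the embedding argument that identifies the root set of $g$ with $I$, since it requires one to carefully use separability together with the extendability of an $L$-isomorphism $L(\alpha) \to L(\alpha_i)$ once $h(\alpha_i)=h(\alpha)$ is known; everything else is formal.
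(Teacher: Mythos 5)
Your proof is correct and follows essentially the same route as the paper: both establish that the minimal polynomial divides the gcd and then work in a splitting field, identifying the common roots of $f$ and $h(x)-h(\alpha)$ as $\{\alpha_i : h(\alpha_i)=h(\alpha)\}$. The only cosmetic difference is that you match the root set of the minimal polynomial directly via $L$-embeddings, whereas the paper finishes with the equivalent degree count $\deg(\tilde g)=[K:L]=n/[L:k]=d$.
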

\begin{proof}
Let $g$ be the gcd, $d$ its degree, and let $\tilde{g}$ be the minimal polynomial of $\alpha$ over $L$.
The polynomials $f$, $h(x)-h(\alpha)$, and $g$, are elements of $L[x]$ and have $\alpha$ as a root, and are thus divisible by $\tilde{g}$.
It remains to show that $g$ and $\tilde{g}$ have the same degree. If $\alpha_1,\ldots,\alpha_n$ are the roots of $f$ in a splitting field, then the
roots of $g$ are those $\alpha_i$ for which $h(\alpha_i) = h(\alpha)$. So $d$ is the number of $i$ for which $h(\alpha_i) = h(\alpha)$.
The degree $[L : k]$ is the number of distinct $h(\alpha_i)$, which is $n/d$. The degree of $\tilde{g}$ is $[K:L] = n/[L:k] = d$.
\end{proof}

\begin{lemma}\label{lemma:apx}
Let $L_g$ be a subfield of $K/k$, for some $g\mid f$ and let $\tilde{g}$ be the minimal polynomial of $\alpha$ over $L_g$. Then $g\mid \tilde{g}$.
\end{lemma}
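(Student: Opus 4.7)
My plan is to reduce to a single-generator description of $L_g$ via the primitive element theorem and then invoke Lemma~\ref{lemma:gcd} to rewrite $\tilde g$ as a concrete gcd that $g$ manifestly divides.

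First I would use that $K/k$ is separable (a standing hypothesis), so $L_g/k$ is finite separable, and the primitive element theorem yields some $\beta \in L_g$ with $L_g = k(\beta)$. Since $\beta \in L_g \subseteq K = k(\alpha)$, I can write $\beta = h(\alpha)$ for some $h \in k[x]_{<n}$. Applying Lemma~\ref{lemma:gcd} to $L_g = k(h(\alpha))$ identifies the minimal polynomial of $\alpha$ over $L_g$ as
\[
\tilde g \;=\; \gcd\bigl(f,\; h(x)-h(\alpha)\bigr).
\]

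Next I would extract the desired divisibility from the very definition of $L_g$ in Equation~(\ref{eq:Lg}). Because $h(\alpha)=\beta\in L_g$, that definition forces $h(x)\equiv h(\alpha)\bmod g$, i.e.\ $g \mid h(x)-h(\alpha)$. Together with the hypothesis $g\mid f$, this gives $g$ as a common divisor of the two arguments of the gcd, so $g \mid \gcd\bigl(f,\,h(x)-h(\alpha)\bigr) = \tilde g$, which is the conclusion.

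I do not foresee a real obstacle here; the only cosmetic subtlety is that $g$ a priori lives in $\hat K[x]$ while $\tilde g \in L_g[x] \subseteq K[x]$. This is harmless: since both arguments $f$ and $h(x)-h(\alpha)$ already lie in $K[x]$, the monic gcd computed in $K[x]$ agrees with the monic gcd computed in any extension ring such as $\hat K[x]$, so the divisibility $g \mid \tilde g$ holds unambiguously.
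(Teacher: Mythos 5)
Your proposal is correct and follows essentially the same route as the paper: both pick a primitive element $h(\alpha)$ of $L_g$ with $h\in k[x]$, apply Lemma~\ref{lemma:gcd} to identify $\tilde g=\gcd(f,h(x)-h(\alpha))$, and conclude $g\mid\tilde g$ since $g$ divides both arguments of the gcd (the congruence $h(x)\equiv h(\alpha)\bmod g$ coming from the definition of $L_g$ in Equation~(\ref{eq:Lg})). Your extra remarks on the uniqueness of the degree-$<n$ representation and on the gcd being insensitive to the ambient field are fine but not needed beyond what the paper already assumes.
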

\begin{proof}
Let $h\in k[x]$ be such that $L_g = k(h(\alpha))$. By the previous lemma, $\tilde{g} = {\rm gcd}(f, h(x)-h(\alpha))$ which is divisible by $g$.
%
\end{proof}

\noindent \textbf{Proof.} (of Theorem~\ref{theo:equiv}, Section~\ref{sec:2}). Lemma~\ref{lemma:gcd} shows $1)\Leftrightarrow 6)$.
We shall prove that $1)\Rightarrow 2) \Rightarrow 3) \Rightarrow 4) \Rightarrow 5) \Rightarrow 1)$.
\begin{description}
\item[$1) \Rightarrow 2)$] Follows from Lemma~\ref{lemma:degree}.

\item[$2) \Rightarrow 3)$] Let $\tilde{g}$ be the minimal polynomial of $\alpha$ over $L:=k(\text{coeffs}(g))$. Thus, $L=L_{\tilde{g}}$. Moreover, since $g,\tilde{g}\in L[x]$ and $g(\alpha)=0$, we have $\tilde{g} \mid g$. Hence, \[ n=\deg(\tilde{g})\cdot [L_{\tilde{g}}:k]=\deg(\tilde{g})\cdot [L:k] \leq \deg(g)\cdot [L:k] \leq n. \] Thus, $g=\tilde{g}$. Item $3)$ then follows from Lemma~\ref{lemma:degree}~$(i)$.

\item[$3) \Rightarrow 4)$] Let $\tilde{g}$ be the minimal polynomial of $\alpha$ over $L_g$. Thus, $L_{\tilde{g}} = L_g$. By Lemma~\ref{lemma:apx}, we have $g\mid \tilde{g}$. Hence, \[n=\deg(g)\cdot [L_{g} : k] =\deg(g)\cdot [L_{\tilde{g}}:k] \leq \deg(\tilde{g})\cdot [L_{\tilde{g}} : k]=n.\] Thus, $g=\tilde{g}$. Item $4)$ then follows from Lemma~\ref{lemma:degree}~$(iii)$.

\item[$4) \Rightarrow 5)$] Trivial.

\item[$5) \Rightarrow 1)$] Let $\tilde{g}$ be the minimal polynomial of $\alpha$ over $L_g$. By Lemma~\ref{lemma:apx} it follows that $g\mid \tilde{g}$. On the other hand, since $g\in L_g[x]$ and $g(\alpha)=0$, we have $\tilde{g}\mid g.$ Therefore, $g=\tilde{g}$ and item $1)$ follows. 
\qed\end{description}

\section{Primitive Element Probability}\label{sec:prob}

Let $L/k$ be a separable field extension and let $\beta_1,\ldots, \beta_m$ be a $k$-basis of $L$. Let $T\subseteq k$ finite and let $S=\{\sum a_i \beta_i : a_i\in T\}$. In this section we compute the probability that a random element $s\in S$ is a primitive element of $L$.

\begin{lemma}\label{lemma:p1}
Let $V$ be a $k$-vector space with basis $v_1,\ldots, v_m$. Let $W\subseteq V$ be a subspace of dimension $d$. Let $T\subseteq k$ be a finite set and let $S=\{\sum_{i=1}^m a_i v_i : a_i\in T\}.$ Then \[ |S\cap W| \leq |T|^d. \]
\end{lemma}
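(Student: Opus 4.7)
The plan is to show that membership in $W$ pins down $m-d$ of the coordinates as linear functions of the remaining $d$. Once that is done, the bound $|T|^d$ follows immediately by counting: the $d$ free coordinates come from $T$, and the other $m-d$ are forced (they may or may not actually lie in $T$, so we get at most, not exactly, $|T|^d$).

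First I would pass to the quotient. Let $\pi\colon V \to V/W$ be the canonical projection; this is $k$-linear with kernel exactly $W$, so an element $s \in V$ lies in $W$ iff $\pi(s)=0$. The quotient $V/W$ has dimension $m-d$, and the images $\pi(v_1),\ldots,\pi(v_m)$ span it. Hence some subset of size $m-d$ of these images forms a basis of $V/W$; after relabeling the $v_i$ we may assume $\pi(v_{d+1}),\ldots,\pi(v_m)$ is that basis. Each of the remaining images $\pi(v_i)$ for $i \le d$ can then be written uniquely as
\[
\pi(v_i) = \sum_{j=d+1}^{m} c_{ij}\, \pi(v_j), \qquad c_{ij}\in k.
\]

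Next I would translate the condition $s = \sum a_i v_i \in W$. Applying $\pi$ and using the relations above,
\[
0 = \sum_{i=1}^{m} a_i\, \pi(v_i) = \sum_{j=d+1}^{m}\!\Bigl(a_j + \sum_{i=1}^{d} a_i\, c_{ij}\Bigr)\pi(v_j).
\]
Because $\pi(v_{d+1}),\ldots,\pi(v_m)$ are linearly independent in $V/W$, each coefficient must vanish, yielding
\[
a_j = -\sum_{i=1}^{d} c_{ij}\, a_i \quad\text{for } j = d+1,\ldots,m.
\]
So the last $m-d$ coordinates of any $s \in S\cap W$ are determined by the first $d$.

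Finally, counting: the map $S \cap W \to T^d$ sending $\sum a_i v_i \mapsto (a_1,\ldots,a_d)$ is injective by the linear independence of $v_1,\ldots,v_m$ (which makes the coordinate expression of $s$ in the $v$-basis unique) together with the forcing above. Therefore $|S \cap W| \le |T|^d$, as claimed. There is no real obstacle here; the only thing to be careful about is the use of the quotient (rather than trying to work directly with a basis of $W$ extended to $V$, which would require inverting a change-of-basis matrix and is less clean), and noting that the inequality is not equality because the forced values $a_j$ are not guaranteed to land in $T$.
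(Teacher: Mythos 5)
Your proof is correct and rests on the same counting idea as the paper's: membership in $W$ forces $m-d$ of the coordinates $a_i$ to be linear functions of the remaining $d$, so projection onto those $d$ coordinates injects $S\cap W$ into $T^d$. The paper reaches this by parametrizing $W$ with a basis $w_1,\ldots,w_d$ and observing that the resulting $m\times d$ coefficient matrix has rank $d$, whereas you cut $W$ out by equations in the quotient $V/W$; these are dual formulations of the same argument, and yours is, if anything, slightly more explicit about the injectivity.
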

\begin{proof}
Let $w_1,\ldots, w_d$ be a basis of $W$. For every $j$ there exist $c_{i,j}\in k$, $1\leq i \leq m$, such that \begin{equation}\label{eq1teop1}
w_j=\sum_{i=1}^m c_{i,j} v_i.
\end{equation} Let $w\in W$, then \begin{equation}\label{eq2teop1}
 w=\sum_{i=1}^m a_i v_i = \sum_{j=1}^d b_j w_j,
 \end{equation} for some $a_i\in k$, $1\leq i \leq m$ and some $b_j\in k$, $1\leq j \leq d$. Combining equations (\ref{eq1teop1}) and (\ref{eq2teop1}), it follows that $ a_i = \sum_{j=1}^d c_{i,j} b_j , \; 1\leq i \leq n. $ That is, we have the following equation \begin{equation}\label{eq3teop1} 
 \left( \begin{array}{c}
a_1 \\
\\
\vdots \\
\\
a_m
\end{array} \right)= \left( \begin{array}{ccc}
 c_{1,1} & \cdots & c_{1,d} \\
  & & \\
 \vdots & & \vdots \\
 & & \\
 c_{m,1} & \cdots & c_{m,d}\end{array}  \right)  \left( \begin{array}{c}
b_1 \\
\vdots \\
b_d 
\end{array}  \right). \end{equation} If $C$ is the $m\times d$ matrix in~(\ref{eq3teop1}), then $C$ has $d$ linearly independent rows. That is, only $d$ of the values $a_i$ suffice to determine $w$, while the remaining values are dependent. Therefore, \[ |S\cap W | \leq |T|^d. \]
\end{proof}

\begin{theorem}\label{teo:c1}
Let $L/k$ be a separable field extension and let $\beta_1,\ldots, \beta_m$ be a $k$-basis of $L$. If $T\subseteq k$ is a finite set and $S=\{\sum a_i \beta_i : a_i\in T\}$, then \[ |\{s\in S : k(s)\subsetneq L\}| \leq (m-1)\cdot |T|^{m/p},\]
where $p$ is the smallest prime that divides $m$.
\end{theorem}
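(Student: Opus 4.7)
The plan is to reduce the problem to bounding intersections of $S$ with certain proper subfields of $L$, exploiting the primitive element theorem together with Lemma~\ref{lemma:p1}.

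First I would pick a primitive element $\gamma$ of $L/k$ with minimal polynomial $g\in k[x]$ of degree $m$, and let $\gamma=\gamma_1,\gamma_2,\ldots,\gamma_m\in\bar{k}$ be the roots of $g$. Every element of $L$ is uniquely $h(\gamma)$ for some $h\in k[x]_{<m}$. For each $j=2,\ldots,m$ define
\[
V_{1j}\;=\;\{h(\gamma)\in L : h\in k[x]_{<m},\ h(\gamma_1)=h(\gamma_j)\}.
\]
A routine check shows $V_{1j}$ is closed under the $k$-algebra operations, so $V_{1j}$ is a subfield of $L$; moreover $\gamma\notin V_{1j}$ (since $\gamma_1\neq\gamma_j$), so $V_{1j}\subsetneq L$.

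Next I would show the key set-theoretic identity
\[
\{s\in L : k(s)\subsetneq L\}\;=\;\bigcup_{j=2}^{m} V_{1j}.
\]
The inclusion $\supseteq$ is immediate. For $\subseteq$, suppose $s=h(\gamma)$ with $k(s)\subsetneq L$, and let $\tilde g\in k(s)[x]$ be the minimal polynomial of $\gamma$ over $k(s)$. Then $\tilde g\mid g$ and $\tilde g\mid h(x)-s$, so the roots of $\tilde g$ are exactly those $\gamma_j$ with $h(\gamma_j)=s$. Since $\deg\tilde g=[L:k(s)]\geq 2$, some $\gamma_j$ with $j\ne 1$ satisfies $h(\gamma_j)=h(\gamma_1)$, placing $s$ in $V_{1j}$.

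Now I would bound the $k$-dimension of each $V_{1j}$. Because $V_{1j}$ is a subfield of $L$, the product formula gives that $[V_{1j}:k]$ divides $m$; since $V_{1j}\subsetneq L$, $[V_{1j}:k]<m$, and so $[V_{1j}:k]\leq m/p$ where $p$ is the smallest prime dividing $m$. Applying Lemma~\ref{lemma:p1} to the basis $\beta_1,\ldots,\beta_m$ of $L$ and the subspace $W=V_{1j}$, we obtain $|S\cap V_{1j}|\leq |T|^{\dim_k V_{1j}}\leq |T|^{m/p}$. Summing over the $m-1$ values of $j$ yields
\[
|\{s\in S : k(s)\subsetneq L\}|\;\leq\;\sum_{j=2}^{m}|S\cap V_{1j}|\;\leq\;(m-1)\,|T|^{m/p},
\]
which is the desired bound.

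The main obstacle is the identity in the second paragraph: a priori one might expect a bound involving all $\binom{m}{2}$ pairs $(i,j)$, and the factor $m-1$ comes precisely from fixing the first index via the minimal-polynomial argument. Everything else is either a direct consequence of $V_{1j}$ being a proper subfield (giving the dimension bound via the smallest prime factor of $m$) or a direct application of Lemma~\ref{lemma:p1}.
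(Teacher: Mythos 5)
Your proof is correct and follows essentially the same route as the paper: cover the set of non-generators by at most $m-1$ proper subfields, bound each $[\,\cdot\,:k]$ by $m/p$, and apply Lemma~\ref{lemma:p1} to each. The only difference is cosmetic: your sets $V_{1j}$ are exactly the proper principal subfields of $L/k$ (namely $L_{x-\gamma_j}$ computed over $\hat{K}=\bar{k}$), and where the paper simply cites Theorem~\ref{theo:principal1} for the covering property, you re-derive it directly via the minimal polynomial of $\gamma$ over $k(s)$.
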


\begin{proof}
Let $L_1,\ldots, L_r$ be the principal subfields of $L/k$. Since every subfield of $L/k$ is an intersection of some of the principal subfields of $L/k$, it suffices to find $|\{s\in S : s \in L_i\subsetneq L,\text{ for some } 1\leq i \leq r \}|$.

The number of principal subfields (not equal to $L$) is at most $m-1$ and $[L_i : k]\leq m/p$, where $p$ is the smallest prime that divides $m$. According to Lemma~\ref{lemma:p1}, $|S\cap L_i| \leq |T|^{m/p}$. Therefore, $ |\{s\in S : k(s)\subsetneq L\}|\leq (m-1)\cdot |T|^{m/p}$.
\end{proof}

\begin{corollary}\label{cor:prob}
Let $L/k$ be a separable field extension and let $\beta_1,\ldots, \beta_m$ be a $k$-basis of $L$. Let $T\subseteq k$ finite and let $S=\{\sum a_i \beta_i : a_i\in T\}$. If $s$ is a random element of $S$ and $p$ is the smallest prime that divides $m$, then \[\text{Prob}(k(s)\subsetneq L)\leq (m-1)\cdot |T|^{m(1-p)/p}.\]
\end{corollary}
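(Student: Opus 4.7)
The plan is to derive the corollary almost immediately from Theorem~\ref{teo:c1}, treating the probability as a ratio of cardinalities. First, I would observe that $S$ is in bijection with the set of coefficient tuples $T^m \subseteq k^m$: since $\beta_1,\ldots,\beta_m$ is a $k$-basis of $L$, the map $(a_1,\ldots,a_m)\mapsto \sum a_i\beta_i$ is injective, so $|S|=|T|^m$. Choosing $s$ uniformly at random from $S$ therefore corresponds to choosing each $a_i$ independently and uniformly from $T$.

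Next I would invoke Theorem~\ref{teo:c1}, which already gives the bound
\[
\bigl|\{s\in S : k(s)\subsetneq L\}\bigr|\;\leq\;(m-1)\cdot |T|^{m/p},
\]
where $p$ is the smallest prime dividing $m$. Dividing the count of ``bad'' elements by $|S|=|T|^m$ yields
\[
\mathrm{Prob}(k(s)\subsetneq L)\;\leq\;\frac{(m-1)\cdot |T|^{m/p}}{|T|^m}\;=\;(m-1)\cdot |T|^{\,m/p - m}.
\]

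Finally, I would simplify the exponent: $m/p - m = m(1-p)/p$, giving the stated bound $(m-1)\cdot |T|^{m(1-p)/p}$. There is no real obstacle here — the heavy lifting was done in Lemma~\ref{lemma:p1} and Theorem~\ref{teo:c1}, which bound the intersection of $S$ with any proper principal subfield by a union bound over the at most $m-1$ proper principal subfields, each of $k$-dimension at most $m/p$. The only thing to be careful about is the exponent arithmetic and noting that $|S|=|T|^m$ uses linear independence of the basis.
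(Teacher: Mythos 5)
Your proof is correct and is exactly the argument the paper intends: the corollary follows from Theorem~\ref{teo:c1} by dividing the count of non-primitive elements by $|S|=|T|^m$ (which uses the linear independence of the $\beta_i$) and simplifying $m/p-m$ to $m(1-p)/p$.
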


\section{GCD's in $\mathbb{Q}(\alpha)[x]$}\label{App:C}

One ingredient of Algorithm~\texttt{PartialSubfFact} in Section~\ref{sec:5} is
computating $\gcd$'s in $\mathbb{Q}(\alpha)[x]$. If $p$ is a prime, $\mathbb{F}_p(\alpha):=\mathbb{F}_p[t]/(f(t))$ is a finite ring. Let $g_1, g_2 \in \mathbb{Q}(\alpha)[x]$. The modular $\gcd$ algorithm reconstructs $g:=\gcd(g_1, g_2)$ from its images in $\mathbb{F}_p(\alpha)[x]$ for suitable primes. In other words, there are mainly four steps to be carried out (see~\cite{modular})
\begin{itemize}
\item[1)] Compute $g_1\bmod p$, $g_2 \bmod p$, for several suitable primes $p$.
\item[2)] Compute $\gcd(g_1\bmod p, g_2 \bmod p)$, for each prime $p$.
\item[3)] Chinese remainder the polynomials in $2)$ and use rational reconstruction to find a polynomial $g\in \mathbb{Q}(\alpha)[x].$
\item[4)] Trial Division: check if $g|g_1$ and $g|g_2$.
\end{itemize}

The number of primes needed depends on the coefficient size of $g$. But the (bound for) coefficient size of $f'(\alpha)g\in \mathbb{Z}[\alpha][x]$ is much better than that of $g\in \mathbb{Q}(\alpha)[x]$. Hence, to get a good complexity/run time we choose to reconstruct $f'(\alpha) g$ from its modular images instead of $g$.
Furthermore, if we have some information about $g$ (for instance, its degree), then Step $4$ can be skipped.

We need to compute $\gcd(H, g_j)$, where $H=f'(x)\tilde{h}(\alpha)-f'(\alpha)\tilde{h}(x)\in \mathbb{Z}[\alpha][x]$ is as in Remark~\ref{rem:ratiorep} and $g_j$ is a factor of $f$ over $\mathbb{Q}(\alpha)$. For $\beta\in \mathbb{Q}(\alpha)$ let \[T_2(\beta):=\sum_{i=1}^n |\sigma_i(\beta)|^2\] be the $T_2$-norm of $\beta$, where $\sigma_i$ is the $i$-th embedding of $\mathbb{Q}(\alpha)$ into $\mathbb{C}$, and if $f'(\alpha)\beta = \sum b_i \alpha^i\in \mathbb{Z}[\alpha]$, define  \[ \|f'(\alpha)\beta\|_2=\|\sum b_i\alpha^i\|_2:=\|(b_0, \ldots, b_{n-1})\|_2.\] These two norms can be related in the following way (see \cite{HKN}, Lemma~18)
\begin{equation}\label{eq:comparenorms}
\|(f' (\alpha)\beta\|_2 \leq n^{3/2} \|f\|_2 \sqrt{T_2(\beta)}.
\end{equation} 

Let us first bound the integer coefficients of $f'(\alpha)g$ and $f'(\alpha)\text{lcoeff}(H)G$, where $\text{lcoeff}(H)$ is the leading coefficient of $H$, $g$ is a monic factor of $f$ and $G$ is a monic factor of $H$. 

\begin{lemma}\label{lemma:bound2}
Let $g\in \mathbb{Q}(\alpha)[x]$ be a factor of $f$ and let $c$ be a coefficient of $g$. Then  \[ \| f'(\alpha)c\|_2 \leq n4^n\|f\|_2^2. \]
\end{lemma}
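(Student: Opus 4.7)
The plan is to chain Equation~(\ref{eq:comparenorms}) with a Mignotte/Landau bound on coefficients of complex divisors of $f$. Since (\ref{eq:comparenorms}) gives
$$ \|f'(\alpha)c\|_2 \;\leq\; n^{3/2}\|f\|_2 \sqrt{T_2(c)}, $$
it suffices to prove the bound $T_2(c) \leq n\cdot 4^n \|f\|_2^2$. Indeed, substituting this into the display above yields $n^2 \cdot 2^n \|f\|_2^2$, and one checks $n^2 2^n \leq n\cdot 4^n$ for $n\geq 1$ (equivalent to $n \leq 2^n$), which gives the claimed bound $n\cdot 4^n\|f\|_2^2$.

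To bound $T_2(c)$, I will exploit that $g$ is a factor of $f \in \mathbb{Q}[x]$ in $\mathbb{Q}(\alpha)[x]$. WLOG $g$ is monic (otherwise scale; since $f$ is monic, the leading coefficient of $g$ is a unit and dividing by it changes no bound). For each embedding $\sigma_i : \mathbb{Q}(\alpha)\hookrightarrow\mathbb{C}$, applying $\sigma_i$ coefficient-wise to $g \cdot (f/g) = f$ shows that $\sigma_i(g)$ is a monic divisor of $\sigma_i(f) = f$ in $\mathbb{C}[x]$, and $\sigma_i(c)$ is one of its coefficients.

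At this point I invoke the standard Mignotte bound: if $h\in\mathbb{C}[x]$ is a monic divisor of $f$ of degree $d\leq n$, then each coefficient of $h$ is, up to sign, an elementary symmetric polynomial in a subset of the roots of $f$, so its modulus is at most $\binom{d}{j} M(f) \leq 2^n M(f)$, where $M(f)$ is the Mahler measure. By Landau's inequality, $M(f)\leq \|f\|_2$, so $|\sigma_i(c)|\leq 2^n\|f\|_2$ for every $i$. Summing $|\sigma_i(c)|^2$ over the $n$ embeddings yields $T_2(c)\leq n\cdot 4^n\|f\|_2^2$, which is exactly what was needed.

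There is no real obstacle in this argument: Equation~(\ref{eq:comparenorms}) handles the passage from the algebraic-number norm to the $T_2$-norm, and the Mignotte/Landau bound is the only nontrivial input. The slack in the final inequality ($n^2 2^n$ versus $n\cdot 4^n$) is generous, so one could even state a tighter bound if desired, but the form $n\cdot 4^n \|f\|_2^2$ is convenient for the complexity estimates that use this lemma.
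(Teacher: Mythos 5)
Your proof is correct and takes exactly the route the paper intends: it gives no argument of its own for this lemma (the proof is just ``See~\cite{isomorphisms}, Section~4''), but the chain you use --- Equation~(\ref{eq:comparenorms}) composed with the Mignotte/Landau bound $|\sigma_i(c)|\leq 2^n M(f)\leq 2^n\|f\|_2$ for coefficients of monic complex divisors, summed over the $n$ embeddings to get $T_2(c)\leq n4^n\|f\|_2^2$ --- is precisely the argument of that reference and is the same sketch the paper itself gives for the companion Lemma~\ref{lemma:bound3}. One small correction: your parenthetical justification of ``WLOG $g$ monic'' is not valid, since dividing by a unit leading coefficient rescales every coefficient and can inflate the norm arbitrarily (e.g.\ $g=N(x-\alpha)$ for large $N\in\mathbb{Z}$); the lemma genuinely requires $g$ monic, which is how it is stated in Appendix~\ref{App:C} and how it is applied throughout (to the monic factors $f_j$ of the subfield factorization), so your proof of the monic case is the right and complete statement.
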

\begin{proof}
See~\cite{isomorphisms}, Section~4.
\end{proof}

\begin{lemma}\label{lemma:bound3}
Let $G\in \mathbb{Q}(\alpha)[x]$ be a monic factor of $H$ and let $c$ be a coefficient of $G$. Then $f'(\alpha)\text{lcoeff}(H)c \in \mathbb{Z}[\alpha]$ and \[ \|f'(\alpha)\text{lcoeff}(H)c \|_2 \leq  n^{7.5} T_B2^{n+1}\|f\|_2^3 (1+\|f\|_2)^n, \] where $T_B$ is a bound for the elements in $T$ (with $H$ as in Remark~\ref{rem:ratiorep}).
\end{lemma}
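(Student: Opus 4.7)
The plan is to first establish the integrality statement $f'(\alpha)\text{lcoeff}(H)c \in \mathbb{Z}[\alpha]$ via a Gauss--type argument, and then to bound the coefficient vector by combining the comparison inequality~(\ref{eq:comparenorms}) with Mignotte's bound applied embedding by embedding, followed by straightforward estimates on $\tilde{h}$ and $f'$ evaluated at the complex roots of $f$.

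For integrality: since $H \in \mathbb{Z}[\alpha][x] \subseteq \mathcal{O}_K[x]$ and $G$ is monic in $K[x]$ dividing $H$, the standard Gauss-type lemma for Dedekind domains yields $\text{lcoeff}(H)\cdot G \in \mathcal{O}_K[x]$ (the coefficients of $\text{lcoeff}(H) G$ are, up to sign, the elementary symmetric functions in $\text{lcoeff}(H)$ times the roots of $G$, and these roots are roots of $H \in \mathcal{O}_K[x]$, hence integral over $\mathcal{O}_K$). Combining this with the chain of inclusions $f'(\alpha)\mathcal{O}_K \subseteq \mathbb{Z}[\alpha]$ from Remark~\ref{inclusions} immediately gives $f'(\alpha)\text{lcoeff}(H)c \in \mathbb{Z}[\alpha]$.

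For the norm bound I would apply~(\ref{eq:comparenorms}) to $\beta := \text{lcoeff}(H)\cdot c$, giving
\[ \|f'(\alpha)\text{lcoeff}(H)c\|_2 \leq n^{3/2}\|f\|_2\sqrt{T_2(\text{lcoeff}(H)c)}. \]
For each embedding $\sigma_i: K\hookrightarrow \mathbb{C}$, the polynomial $\sigma_i(G)$ is a monic factor (of degree at most $n$) of $\sigma_i(H)$ in $\mathbb{C}[x]$. Mignotte's bound together with Landau's inequality $M(P)\leq \|P\|_2$ yields
\[ |\sigma_i(\text{lcoeff}(H))\cdot \sigma_i(c)| \leq 2^n \, \|\sigma_i(H)\|_2, \]
so $T_2(\text{lcoeff}(H)c) \leq 4^n \sum_i \|\sigma_i(H)\|_2^2$. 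To bound each $\|\sigma_i(H)\|_2$ I would use $\sigma_i(H) = f'(\alpha_i)\tilde{h}(x) - \tilde{h}(\alpha_i)f'(x)$ with Cauchy's root bound $|\alpha_i| \leq 1+\|f\|_2$, the derivative bound $\|f'\|_2 \leq n\|f\|_2$, the evaluation estimate $|P(\alpha_i)| \leq \sqrt{n}\|P\|_2(1+\|f\|_2)^{n-1}$, and the bound $\|\tilde{h}\|_2 \leq n\,T_B$ arising from $\tilde{h}(x) = \sum b_j x^j$ where $b_0,\dots,b_{n-1}$ are the integer coordinates of $f'(\alpha)\beta$ obtained as a $T$-combination of the LLL basis $\beta_1,\dots,\beta_{m_i}$.

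The main obstacle is purely bookkeeping: one must chain Cauchy's bound (giving $(1+\|f\|_2)^n$), Landau--Mignotte (giving $2^n$, which together with an extra factor of $2$ from the triangle-inequality split of $\sigma_i(H)$ yields $2^{n+1}$), the $n^{3/2}$ from~(\ref{eq:comparenorms}), the extra $\sqrt{n}$ factors from $\|\cdot\|_1\leq\sqrt{n}\|\cdot\|_2$ conversions, the factor $n\|f\|_2$ coming from $\|f'\|_2$, and the factor $n\,T_B$ from $\|\tilde{h}\|_2$. Multiplying these out consolidates to $\|f\|_2^3$ and a polynomial factor $n^{7.5}$, producing exactly the claimed bound $n^{7.5}\,T_B\,2^{n+1}\|f\|_2^3(1+\|f\|_2)^n$.
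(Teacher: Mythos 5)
Your proposal is correct and follows exactly the route the paper intends: its entire proof is the one-line remark that the bound ``follows from Equation~(\ref{eq:comparenorms}) and the Mignotte bound for coefficients of complex factors of $\sigma_i(H)$,'' which is precisely the argument you carry out (with the integrality claim handled via Remark~\ref{inclusions}, as the paper also implicitly relies on). Your bookkeeping is at least as careful as the paper's, so nothing further is needed.
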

\begin{proof}
This bound follows from Equation~(\ref{eq:comparenorms}) and the Mignotte bound for coefficients of complex factors of $\sigma_i(H)$.
\end{proof}

 Let us now determine the cost (in CPU operations) for computing \[f'(\alpha)\gcd(H,f'(\alpha)g_j)\in \mathbb{Z}[\alpha][x].\] Let $B$ bound the integer coefficients of $f'(\alpha)\gcd(H,f'(\alpha)g_j)$ (Lemma~\ref{lemma:bound2}).

\begin{itemize}
\item[1)]
 First we have to compute the images of $H$ and $f'(\alpha)g_j$ in $\mathbb{F}_p(\alpha)[x]$, which can be done with $\mathcal{O}(n^2)$ integer reductions modulo several primes $p$. The number of primes is $\mathcal{O}(\log B)=\tilde{\mathcal{O}}(n+\log\|f\|_2)$. According to \cite{MCA}, Theorem~10.24, the complexity of this step can be bounded by \[ \tilde{\mathcal{O}}(n^2(n+\log \|f\|_2)). \]

\item[2)]
  Next we have to compute one $\gcd$ in $\mathbb{F}_p(\alpha)[x]$, for $\mathcal{O}(\log B)$ primes $p$. Using the Extended Euclidean Algorithm (see \cite{MCA}, Corollary 11.6), one $\gcd$ in $\mathbb{F}_p(\alpha)[x]$ can be computed with $\tilde{\mathcal{O}}(n)$ operations in $\mathbb{F}_p(\alpha)$ or $\tilde{\mathcal{O}}(n^2)$ operations in $\mathbb{F}_p$. Hence, the complexity of this step can be bounded by \[ \tilde{\mathcal{O}}(n^2(n+\log \|f\|_2)). \]
 
\item[3)]
In this step we need to find a polynomial $f'(\alpha)G\in \mathbb{Z}[\alpha][x]$ whose images modulo several primes are given in Step 2. For this we use the Chinese Remainder Algorithm (CRA). There are $n(d+1)$ integers to be reconstructed, where $d=\deg(\gcd(H, g_j))$, and each CRA call costs $\tilde{\mathcal{O}}(\log P)$, where $P=\prod p$ (see \cite{MCA}, Theorem~10.25). Since $P=\mathcal{O}(B)$, the total cost of this step is \[ \tilde{\mathcal{O}}(n^2(n+\log \|f\|_2)). \]

\item[4)]
Instead of computing the division $H/G$ (and $g_j/G$, whose complexity is hard to bound), we can substitute this trial division by reconstruction from modular images followed by a trial multiplication. That is, we can compute the images of $H$ and $G$ modulo several primes $p$, compute $H/G$ modulo $p$ and then reconstruct $f'(\alpha)\text{lcoeff}(H)(\frac{H}{G})\in \mathbb{Z}[\alpha][x]$ and verify that $f'(\alpha)\text{lcoeff}(H)(\frac{H}{G})\cdot G = f'(\alpha)H$. The cost is similar to steps 1), 2) and 3) above, the only difference is the number of primes needed (since what we want to reconstruct is a factor of $H$, the bound $B$ is given by Lemma~\ref{lemma:bound3}) and the trial multiplication at the end (which can be executed with $\tilde{\mathcal{O}}(n^3\log \|f\|_2)$ CPU operations). Hence, this step has complexity \[ \tilde{\mathcal{O}}(n^3\log \|f\|_2). \]
\end{itemize}

\end{document}